 \documentclass[11pt]{article}

\usepackage{amssymb,amsmath,amsfonts}
\usepackage{graphicx,color,enumitem}
\usepackage{mathrsfs}
\usepackage{amsthm} 
\usepackage[dvipsnames]{xcolor}
\usepackage{bm}
\usepackage[round]{natbib}
\usepackage[]{appendix}
\usepackage{comment}

\usepackage{geometry}

\definecolor{mycolor}{rgb}
{0.137,0.466,0.741}

\RequirePackage[colorlinks,citecolor=mycolor,urlcolor=blue, linkcolor=mycolor]{hyperref}

\usepackage{tikz}
\tikzstyle{vertex}=[circle, draw, inner sep=2pt, fill=white]


\newcommand{\rsto}{]\!\kern-1.8pt ]}
\newcommand{\lsto}{[\!\kern-1.7pt [}

\vfuzz2pt 
\hfuzz2pt 

\numberwithin{equation}{section}

\newcommand{\RR}{\mathbb{R}}

\newcommand{\PP}{\mathbb{P}}

\newcommand{\NN}{\mathbb{N}}
\newcommand{\EE}{\mathbb{E}}

\newcommand{\cM}{\mathcal{M}}




\renewcommand{\P}{\mathbb{P}}

\newcommand{\ind}{1\!\kern-1pt \mathrm{I}}

\renewcommand{\rho}{\varrho}

\newcommand\ep{\varepsilon}

\newcommand\E{\mathbb{E}}

\definecolor{mypurple}{rgb}{.50,0,98}

\DeclareMathOperator*{\essinf}{ess\,inf}

\newtheorem{theorem}{Theorem}

\newtheorem{assumption}[theorem]{Assumption}

\newtheorem{corollary}[theorem]{Corollary}

\newtheorem{lemma}[theorem]{Lemma}

\newtheorem{proposition}[theorem]{Proposition}

\theoremstyle{definition}
\newtheorem{definition}[theorem]{Definition}
\newtheorem{remark}[theorem]{Remark}
\newtheorem{example}[theorem]{Example}

\numberwithin{equation}{section}
\numberwithin{theorem}{section}

\begin{document}

\title{Risk measures under model uncertainty: a Bayesian viewpoint}

\author{Christa Cuchiero \thanks{Vienna University, Department of Statistics and Operations Research, Data Science~@~Uni Vienna,
		Kolingasse 14-16, A-1090 Wien, Austria, christa.cuchiero@univie.ac.at}
	\and Guido Gazzani \thanks{Vienna University, Department of Statistics and Operations Research, Kolingasse 14-16, A-1090 Wien, Austria, guido.gazzani@univie.ac.at}
	\and Irene Klein \thanks{Vienna University, Department of Statistics and Operations Research,
		 Kolingasse 14-16, A-1090 Wien, Austria, irene.klein@univie.ac.at.\newline
		The authors gratefully acknowledge financial support from the FWF project I 3852.
}}

\date{}

\maketitle

\begin{abstract} 
We introduce two kinds of risk measures with respect to some reference probability measure, which both allow for a certain order structure and  domination property. Analyzing their relation to each other  leads to the question when a certain minimax inequality is actually an equality. We then provide  conditions under which the corresponding robust risk measures, being defined as the supremum over all risk measures induced by a set of probability measures, can be represented classically in terms of one single probability measure. We focus in 
 particular on the mixture probability measure obtained via mixing 
 over a set of probability measures using some prior, which represents for instance the regulator's beliefs. The classical representation in terms of the mixture probability measure can then be interpreted as a Bayesian approach to robust risk measures.  
\end{abstract}

\noindent\textbf{Keywords:}  Risk measures; model risk; robust finance; mixture probability measure, Bayesian methods in finance\\
\noindent \textbf{MSC (2020) Classification:} 91G70, 91B05, 62P20 

\tableofcontents

\section{Introduction}\label{sec:intro}
In quantitative risk management, risk measures are used to determine the minimal capital requirement so that a financial position, described by a random variable, becomes acceptable.  
Proceeding from the pioneering works of \cite{ADEH:1999}, \cite{FS:2002} and \cite{FG:2002}, several ideas have been proposed to extend the framework based on a single probability measure to 
a robust setup where the risk evaluation is not only based on one probability measure, but rather on a set of possibly non-dominated ones. A common approach among these is to \emph{robustify} a particular risk measure by considering its worst-case counterpart, meaning to take the supremum over all risk measures induced by the specified set of probability measures. We shall call this \emph{robust (version of a) risk measure.}

 This approach has been pursued  e.g.~for Value at Risk (VaR) by \cite{PYY:2020} and \cite{PWXYY:21}, where the risk measure is robustified via the concept of sublinear expectations, and also by \cite{EPR:13}, where
 the supremum of  VaR over a set of possible joint distributions with prespecified marginals is computed. In the articles by  \cite{PWXYY:21}, \cite{GOO:2003} and \cite{ZKR:2013}  the supremum over all distributions within a certain class of models is considered to compute a `robust' quantile. Similar approaches have been adopted for two particular coherent risk measures, namely Expected Shortfall (ES) and the superhedging price (see e.g., \cite{ZF:2009}, \cite{FBL:2012},  \cite{ALC:2020},  \cite{OW:2021}).
 
Another possible perspective to account for model uncertainty is to modify the functional form of a risk measure by distorting the probability measure or a quantile, see for instance \cite{BGS:2014, WZ:2018} for distorted versions of  VaR or more recently \cite{HPWT:2021} for a distortion of a utility based risk measure.  In \cite{WZ:2018} also different notions of law-invariance for robust versions of VaR and ES are discussed.

These viewpoints have been extensively employed not only in risk evaluation but also in stochastic optimization, see e.g. \cite{WX:2020}, and in insurance, see for instance \cite{EP:2018} and \cite{BP:2019}.

We here introduce a different approach to robustness which is inspired by the following reasoning. Suppose that there is some probability measure $\mathbb{P}$ and an associated  risk measure  which yields for all  financial positions a maximal value, defined e.g.~by the regulator of the financial market as worst case. We shall thus call such a `maximal' measure
 `\emph{worst-case probability measure}' and introduce a certain order structure to incorporate it into a family of risk measures. Indeed, we require 
 (i) that the risk measured under any absolutely continuous measure is smaller or equal than the risk computed under $\mathbb{P}$ and (ii) that for any equivalent measure we get the same risk. Indeed, this worst-case point of view should capture the risk coming from all possible scenarios rather than necessarily taking into account how probable they are, which is of course reminiscent of option pricing under no-arbitrage assumptions. With this approach we aim  to find a classical representation of the robust risk measure in terms of a \emph{single} probability measure, which can then be interpreted as such a `worst-case probability measure'.

To make this program work we need to introduce appropriate (families of) risk measures so that
the requirements (i) and (ii) from above can be fulfilled. In this respect we consider two different ways of defining risk measures with respect to some reference probability measure $P$ (not necessarily equal to $\mathbb{P}$ here).
In both approaches we fix a monetary risk  measure $\rho$ (in the sense of Section \ref{sec:setting} or \cite{FS:2002}) on the space $\mathcal{X}$ of bounded measurable random variables.

 In the \emph{first} approach described in Section~\ref{fixrisk} we then define for a given probability measure $P$  and for all  $X\in L^{\infty}(P)$,
\begin{equation*}
    \rho^{P}(X) :=\inf_{\{\widetilde{X} \in \mathcal{X} : P(\widetilde{X}=X)=1\}}\rho(\widetilde{X}).
\end{equation*}
With this definition it can be easily verified that whenever $P' \ll P $, we have  $\rho^{P'}(X) \leq \rho^{P}(X)$ for all $X \in L^{\infty}(P) $ and thus clearly also $\rho^{P'}(X) = \rho^{P}(X)$ if $P' \sim P$ (see Remark \ref{abs:fixset}). Hence, whenever $\mathbb{P}$ takes the role of $P$, the desired properties (i) and (ii) are satisfied.

In the case where the fixed  risk measure $\rho$ from which we start is convex and continuous from below (see Section \ref{sec:pointwise} for details) we can pursue a \emph{second} approach, outlined in Section \ref{fixpenalty}. 
Indeed, fix some penalty function $\alpha$ on the space of probability measures $\cM_{1}$ and define  $\rho:\mathcal{X}\to\mathbb{R}$ via
\begin{align} \label{eq:rhosecond}
	&\rho(X):=\sup_{Q\in \mathcal{M}_{1}}\{\EE_{Q}[-X]-\alpha(Q)\}, \qquad
	\forall X\in\mathcal{X},
\end{align}
such that it coincides with the given $\rho$, or just use \eqref{eq:rhosecond} as a way to fix a monetary risk measure via a fixed penalty function $\alpha$.
Given a probability measure $P$, the second approach then consists in defining another risk measure $\widehat{\rho}^{P}$ via
\begin{equation*}
    \widehat{\rho}^{P}(X):=\sup_{Q\in\mathcal{P}^{P}}\{\mathbb{E}_{Q}[-X]-\alpha(Q)\}, \quad X\in L^{\infty}(P),
\end{equation*}
where $\mathcal{P}^{P}$ denotes the set of probability measures dominated by $P$. Also this definition yields easily the properties   $\widehat{\rho}^{P'}(X) \leq \widehat{\rho}^{P}(X)$ or any $P'\ll P$ and thus also $\widehat{\rho}^{P'}(X) = \widehat{\rho}^{P}(X)$ if $P' \sim P$  (see Remark \ref{absconsP}). 

Note that the families of risk measures considered in the literature cited above usually do not satisfy these requirements. This holds especially for the approach relying on distortions, which includes as particular cases Expected Shortfall and Value at Risk. Indeed, even for the most basic premium calculation principle (i.e.~risk measure) in life insurance mathematics,  which is just the expected value of the present value of a payment, the order condition is clearly not satisfied.
Similarly, for an arbitrary distortion risk measure, as e.g. Value at Risk, simple examples with $P\sim P'$ and random variables $X_1$, $X_2$ where $P$ gives a smaller value to $X_1$ while $P'$ gives a smaller value to $X_2$
can be constructed to see that the order condition does not hold either.


Coming back to the above defined risk measures $\rho^P$ and $\widehat{\rho}^{P}$, note that for $\rho$ given by \eqref{eq:rhosecond} both $\rho^P$ and $\widehat{\rho}^{P}$ can be defined, whence we aim for a comparison of them 
(see Section \ref{sec:comparision}).
While $\rho^{P}(X)$ is by definition given as the infimum of \eqref{eq:rhosecond}, taken over $P$-almost surely equal claims $\widetilde{X}$, it turns out that $\widehat{\rho}^P$ is equal to the corresponding expression where the infimum and the supremum are interchanged (see Lemma \ref{lem:relation}). Thus for $X\in L^{\infty}(P)$,  $\rho^{P}(X) \geq \widehat{\rho}^P(X)$ in general. 
By means of a classical minimax theorem, we then derive sufficient conditions under which they coincide, see  Proposition \ref{th:minimax}, but also provide an example where the strict inequality holds (see Example \ref{ex:nonequal}).

This comparison is continued in Section \ref{sec:comparisionlawinv}, where we specialize the previous setting further, assuming that the initial risk measure  $\rho$ is law-invariant with respect to some fixed probability measure, denoted here by $\mathbb{P}$, 
whose significance will become clear below.
Starting from this law-invariant risk measure we proceed as above and define $\rho^{P}$ and $\widehat{\rho}^{P}$, for some  probability measure $P$ and show that under certain assumptions, depending on the relation between $\mathbb{P}$ and $P$,  $\rho^{P}$ and $\widehat{\rho}^{P}$ 
coincide (see Proposition \ref{prop:PvsR}, Proposition
\ref{lemma_dual_regulator} and Corollary \ref{th:minimax1}).
The economically most interesting aspect of this setup is that  the
starting probability measure $\mathbb{P}$ with  respect to which law-invariance is considered turns out to qualify as `worst case probability measure'. Indeed, the risk computed under $\mathbb{P}$ dominates the risk computed under any other probability measure $P$, no matter in which relation $\mathbb{P}$ and $P$ are (see Proposition \ref{prop:PvsR} for details).

Having introduced the above setup that allows to consider families of risk measures, i.e., either $(\rho^P)_{P \in \mathcal{M}}$ or $(\widehat{\rho}^P)_{P \in \mathcal{M}}$, induced by some family $\mathcal{M}$ of probability measures, we consider
in Section \ref{sec:robust} and Section \ref{sec:mixture} the corresponding robust versions, i.e.
\begin{align}
\sup_{P \in \mathcal{M}} \rho^P \quad  \text{or} \quad \sup_{P \in \mathcal{M}}\widehat{ \rho}^P. \label{eq:worstcaserisk}
\end{align}
 Observe that these can be seen as particular \emph{generalized risk measures}, introduced recently in \cite{FLW:2021}. 
 As mentioned above our main goal is to link the robust risk measures given by \eqref{eq:worstcaserisk} with a \emph{single} `worst case probability measure' $\mathbb{P}$ and to express  \eqref{eq:worstcaserisk} via $\rho^{\mathbb{P}}$ or $\widehat{\rho}^{\mathbb{P}}$ respectively. Indeed, as we prove in Theorem \ref{th:nonhat} and Theorem \ref{prop:criteria} such a classical representation is possible if $\mathcal{M}$ is closed under countable convex combinations and if there is a measure $\mathbb{P}$ that satisfies the condition of \emph{generalized equivalence} (see Definition \ref{abscon}). The latter means in particular that every measure in $\mathcal{M}$ can be dominated by $\mathbb{P}$. Even though any countable set of probability measures can be dominated, this can be  considered a strong condition, as it excludes  uncountable families of mutually singular measures. Note, however, that this rather strong domination property \emph{would still not} be enough to get a similar result for other robust versions of risk measures, induced e.g.~via distortions, since they do not satisfy the required order structure.
Indeed, this generalized equivalence property only works in combination with the ordering requirements  from above, which led to the specification of the considered family of risk measures being crucial for this result.

In the setup of Section \ref{sec:comparisionlawinv}, where the starting measure $\mathbb{P}$ with respect to which law-invariance is considered qualifies as `worst case probability measure' we get a classical representation without any assumption except that $\mathbb{P}$ has to lie in $\mathcal{M}$. In particular, the \emph{fully non-dominated case} can be considered in this setup. The reason why this works here is because the ordering is achieved on the level of the risk measures rather than the probability measures.

 In Section \ref{sec:mixture} we then analyze a particular candidate for a `worst case probability measure', giving rise to a \emph{Bayesian point of view} to robust risk measures. Indeed, let $\mathcal{M}=\{P^{\theta}: \theta \in \Theta\}$ be a family of probability measures defined via a parameter space $\Theta$, on which some (prior) distribution $\nu$ is specified to model for instance the regulator's beliefs. Consider then as candidate for the  `worst case probability measure', the so-called \emph{mixture probability measure,} given by 
 \[
 \mathbb{P}(\cdot)= \int_{\Theta} P^{\theta}(\cdot) \nu(\mathrm{d}\theta).
 \]
 We then show that under the (rather strong) assumption of  continuity in total variation of   $\theta \mapsto P^{\theta}$, this mixture probability measure actually qualifies as `worst case probability measure' in the sense that it is a generalized equivalent measure  for a subset of $\mathcal{M}$,  induced from a subset of  $\Theta$ that has $\nu$-measure $1$. Even though this continuity assumption implies the existence of a dominating measure (see Lemma \ref{firstcountableprior}), this measure is not  necessarily equivalent to the mixture probability measure $\mathbb{P}$ (see Remark \ref{rem:dommeasures}) and can thus potentially yield  higher values for the associated risk. 
 Taking instead the mixture measure as the `worst case probability measure' means to take the (regulator's) beliefs expressed via $\nu$ seriously. As above we  obtain a certain classical representation for the robust risk measures, now via the mixture probability measure (see Theorem \ref{th:mainmixture}).
 In the setup of Section \ref{sec:comparisionlawinv}, we can again consider a  non-dominated situation to get, under mild conditions, a classical representation with respect to the mixture probability measure $\mathbb{P}$, where in this case $\mathbb{P}$ does not even have to lie in $\mathcal{M}$ (see Corollary \ref{cor:final}).
 Summarizing our main contributions are twofold: 
\begin{enumerate}
    \item[(1)] 
 we introduce two families of risk measures allowing for a certain domination property and analyze their relation to each other leading to the question when a certain mini-max inequality is actually an equality. In the case of law-invariance our approaches naturally lead to a `worst case probability measure' under which the risk is always maximal.
\item[(2)] 
 we show under which conditions a classical representation of the robust risk measures in terms of single probability holds, where a particular focus is on the mixture probability measure giving rise to  a Bayesian point of view on robust risk measures.  
\end{enumerate}

The remainder of the paper is organized as follows.  In Section \ref{sec:setting} we recall frequently used definitions and properties of monetary and convex risk measures following  \cite{FS:2002, FS:2011}. In Section \ref{sec:rm_wrt_referencemeasures} we 
introduce the two different ways of how to define families of risk measures and  discuss their relation to each other.
In Section \ref{sec:examples}, 
all these comparison results are illustrated via some explanatory examples of classical risk measures, e.g.~the superhedging price or Expected Shortfall. Section \ref{sec:robust} establishes conditions for classical representation of robust risk measures in terms of one single `worst case probability measure'. In Section \ref{sec:mixture} this is then applied to the mixture probability measure where we analyze when it can serve as such a  `worst case probability measure'.

\section{Preliminaries on risk measures}\label{sec:setting}

This section is dedicated to recall some basic concepts and definitions of risk measures. After introducing some notation, we  start here with risk measures being defined in a pointwise way (see Section \ref{sec:pointwise}).

\subsection{Notation}

Let $(\Omega, \mathcal{F})$ be a measurable space and let
$\cM_{1}=\cM_{1}(\Omega,\mathcal{F})$ be the set of probability measures
defined on it. We equip it with the topology of weak convergence (in the probabilistic sense), which corresponds  to the functional analytic notion of weak-$*$-convergence when $\Omega$ is compact.
Our aim
is to quantify the risk of a random variable $X$ on $(\Omega,\mathcal{F})$ that describes a certain financial risky position, e.g.~the loss of an
insurance contract or a portfolio.
As a first step we will look at risky claims taking values in the following set:
\begin{equation*}
	\mathcal{X}=\{X:(\Omega,\mathcal{F})\to(\mathbb{R},\mathcal{B}(\RR)) \text{, measurable and bounded}\},
\end{equation*}
where $\mathcal{B}(\mathbb{R})$ denotes the Borel-$\sigma$-algebra on $\mathbb{R}$.
By ``bounded" we mean that for each $X\in\mathcal{X}$ there exists $K>0$ such that $|X(\omega)|\leq K$,  for \emph{all} $\omega\in\Omega$, i.e.
$\mathcal{X}=L^{\infty}(\Omega,\mathcal{F})$.
Later we will relax the condition to an almost sure condition. If a probability measure $P\in\cM_{1}$ is given, we will write $L^{\infty}(P)$ for the space $L^{\infty}(\Omega,\mathcal{F},P)$.

\subsection{Pointwise definition of risk measures}\label{sec:pointwise}
Let us recall in the following the definition of monetary risk measures defined
on the space $\mathcal{X}$.

\begin{definition}[Monetary risk measure]\label{monetary_rm}
	A map $\rho:\mathcal{X}\to\RR$ is said to be a monetary risk measure if:
	\begin{enumerate}
		\item $X\ge0$ then $\rho(X)\le0$, for all $X\in\mathcal{X}$;
		\item $X\ge Y$ then $\rho(X)\le\rho(Y)$, for all $X, Y \in
		\mathcal{X}$;
		\item $\rho(a+X)=\rho(X)-a$, for every $a\in\mathbb{R}$ and  for all
		$X\in\mathcal{X}$.
	\end{enumerate}
\end{definition}
 
Notice that we mean that each property holds for all $\omega\in\Omega$.\\
Given a monetary risk measure $\rho$ the corresponding acceptance set is
defined as follows:
\begin{equation}\label{acceptmeasfree}
	C=\{X\in\mathcal{X}: \rho(X)\leq 0\}.
\end{equation}
 
Note that we get $\rho$ back from the acceptance set via the equation
\begin{equation*}
	\rho(X)=\inf\{m\in\mathbb{R}: m+X\in C\}.
\end{equation*}
Let us also state the definition of a convex risk measure, where the convexity assumption is economically meaningful since
diversification should not increase the risk of a financial position. We also recall the main results
concerning their dual representation.
\begin{definition}[Convex risk measure]\label{convex_rm}
	A map $\rho:\mathcal{X}\to\RR$ is said to be a convex risk measure if it is a monetary risk measure and if for every $\lambda\in(0,1)$ and for all $X,Y\in\mathcal{X}$
 $$\rho(\lambda X + (1-\lambda) Y)\le
		\lambda\rho(X)+(1-\lambda)\rho(Y).$$
\end{definition}
The acceptance set of a convex risk measure is defined exactly as for
monetary risk measures in \eqref{acceptmeasfree}.\\
Assuming moreover, that a 
convex risk measure $\rho$ on
$\mathcal{X}$ 
is continuous from below in the sense that for any sequence $(X_{n})_{n\in\mathbb{N}}\subset \mathcal{X}$,
\begin{align}\label{eq:contbelow}
X_n \uparrow X \text{pointwise} \quad \Rightarrow \quad \rho(X_n) \downarrow \rho (X),
\end{align}
as $n\to+\infty$, then we obtain the following
dual representation 
\begin{equation}\label{dual_representation}
	\rho(X)=\sup_{Q\in \mathcal{M}_{1}}\{\EE_{Q}[-X]-\alpha(Q)\},\qquad \forall
	X\in\mathcal{X},
\end{equation}
where $\alpha:\mathcal{M}_{1}\to\RR\cup\{+\infty\}$ is a penalty function, see Proposition 4.21 in \cite{FS:2011}.
The penalty function itself it is not unique in general. Note however that due to the continuity from below any penalty function is concentrated on $\mathcal{M}_1$ instead of finitely additive set functions denoted by $\mathcal{M}_{1,f}$, meaning that  $\alpha(Q) = \infty$ for all $Q \in \mathcal{M}_{1,f} \setminus \mathcal{M}_1$ so that it is sufficient to take the supremum in \eqref{dual_representation} only over $\mathcal{M}_1$. The penalty function can be chosen
to attain the smallest values by setting
\begin{equation}\label{minimal_penalty}
\begin{split}
	\alpha_{\min}(Q):&=\sup_{X\in C}\EE_{Q}[-X]=\sup_{X \in \mathcal{X}} \{\E_Q[-X]- \rho(X)\}, \qquad  \forall Q\in\cM_{1},
	\end{split}
\end{equation}
i.e.~$\alpha_{\min}$ corresponds to the Fenchel–Legendre transform or conjugate function of the convex function $\rho$ on $\mathcal{X}$,
see Theorem 4.15 as well as Remark 4.16 and Remark 4.17 in \cite{FS:2011}. 

Finally let us recall the notion of coherent risk measures.

\begin{definition}[Coherent risk measure]
  A convex risk measure $\rho : \mathcal{X} \to \mathbb{R}$ is called coherent risk measure if it
satisfies positive homogeneity, i.e. for $\lambda \geq 0$ and all $X\in\mathcal{X}$, $\rho(\lambda X) = \lambda \rho(X)$.
\end{definition}

Note that by Corollary 4.18 in \cite{FS:2011} 
the minimal penalty function of a coherent measure that is continuous from below always takes the form
\begin{align}\label{eq:alphacoh}
\alpha(Q)=\begin{cases} 0  & \text{if} \ Q \in \mathcal{Q},\\
+ \infty  & \text{otherwise},
\end{cases}
\end{align}
for some convex set $\mathcal{Q} \subseteq \mathcal{M}_1$. For the minimal penalty function the set $\mathcal{Q}$ is maximal. Note that  it could potentially be also represented by another $\alpha$ of  form \eqref{eq:alphacoh} on a smaller set $\mathcal{Q}$ (see Proposition 4.14 in \cite{FS:2011}).

\section{Risk measures with respect to reference probability measures}\label{sec:rm_wrt_referencemeasures}

From now on, we additionally suppose that we are given a probability  measure $P$ on $(\Omega,\mathcal{F})$ with respect to which we shall define risk measures such that the domination properties mentioned in the introduction hold true. We shall pursue two different approaches. In both of them we start by fixing a monetary risk  measure $\rho$ on the space $\mathcal{X}$ of bounded measurable random variables. We then define two possible ways how to deduce from $\rho$ a risk measure depending on $P$. The first approach works for general monetary risk measures, whereas the second assumes convexity.

In Section \ref{sec:comparision} and Section \ref{sec:comparisionlawinv}, we shall then compare these two approaches in the case of convex as well a convex and law-invariant risk measures.

\subsection{First approach applicable to general monetary risk measures}\label{fixrisk}

The first approach consists in defining a risk measure $\rho^P$ as follows:

\begin{definition}[Risk measure $\rho^{P}$]\label{rhoP}
	Let $P\in\cM_{1}$ and $\rho$ a monetary risk measure. Then
	 the corresponding risk measure with respect to $P$ is  denoted by $\rho^P: L^{\infty}(P) \to \mathbb{R}$ and defined by
	\begin{align*}
		\rho^{P}(X) &=\inf_{\{\widetilde{X} \in \mathcal{X} : P(\widetilde{X}=X)=1\}}\rho(\widetilde{X}),
	\end{align*}
	for all $X\in L^{\infty}(P)$.
	Moreover, the corresponding acceptance set is given by
\begin{align*}
C^P &=\{X \in L^{\infty}(P) : \inf_{\{\widetilde{X} \in \mathcal{X} : P(\widetilde{X}=X)=1\}}\rho(\widetilde{X}) \leq 0 \}\\
&=\{X \in L^{\infty}(P) : \rho^{P}(X)\leq0\}.
\end{align*}	
\end{definition}

Observe that the definition of $C^{P}$ implies that $\rho^{P}$
satisfies the properties in
Definition~\ref{monetary_rm} \emph{not}
for all
$\omega\in\Omega$, but just for the scenarios $\omega\in\Omega$ that occur
$P$-a.s.
\begin{remark}\label{abs:fixset}
	Let $P,P'\in\cM_{1}$ be such that $P'\ll P$. Then,  clearly $$\{\widetilde{X} \in \mathcal{X} : P(\widetilde{X}=X)=1\}\subseteq\{\widetilde{X} \in \mathcal{X} : P'(\widetilde{X}=X)=1\},$$ thus $C^{P}\subseteq C^{P'}$ and
	\begin{equation*}
		\rho^{P'}(X)\le \rho^{P}(X), \qquad \forall X\in L^{\infty}(P).
	\end{equation*}
	In particular, if $P \sim P'$, we have $\rho^{P'}(X)= \rho^{P}(X)$. When the `worst case risk measure' takes the role of $P$, this leads to the desired properties (i) and (ii) mentioned in  the introduction.
\end{remark}
 
The following lemma states that the property of being a monetary or convex risk measure also translates to $\rho^P$.

\begin{lemma}\label{ineherit_prop}
If $\rho$ is a monetary (convex respectively) risk measure, then $\rho^P$ is a monetary (convex respectively) risk measure.
\end{lemma}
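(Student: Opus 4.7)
The plan is to verify directly that each of the three defining properties of a monetary risk measure (and convexity in the convex case) passes from $\rho$ to $\rho^P$ by producing, for any pair of $P$-almost sure representatives, a new pair of representatives that can be compared pointwise on all of $\Omega$. The only slightly nontrivial preliminary point is that $\rho^P$ takes values in $\mathbb{R}$: given $X \in L^\infty(P)$ with $|X| \leq K$ $P$-a.s., any representative $\widetilde X$ can be truncated to $(\widetilde X \wedge K) \vee (-K) \in \mathcal{X}$, still a representative of $X$, and pointwise monotonicity plus cash invariance of $\rho$ give $\rho(0) - K \leq \rho(\widetilde X) \leq \rho(0) + K$, so $\rho^P(X) \in \mathbb{R}$.

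For cash invariance, I would use that $\widetilde{Z} \mapsto \widetilde{Z} - a$ is a bijection between the set of representatives of $a+X$ and the set of representatives of $X$, combined with the pointwise identity $\rho(a+\widetilde X) = \rho(\widetilde X) - a$, to conclude $\rho^P(a+X) = \rho^P(X) - a$. For the positivity axiom, if $X \geq 0$ $P$-a.s., then for any representative $\widetilde X$ the truncation $\widetilde X^+$ is still a representative (since $\widetilde X \geq 0$ $P$-a.s.) but satisfies $\widetilde X^+ \geq 0$ pointwise, so $\rho(\widetilde X^+) \leq 0$ and hence $\rho^P(X) \leq 0$.

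The main step, and the one requiring the representative trick, is monotonicity. Given $X \geq Y$ $P$-a.s., I would fix an arbitrary representative $\widetilde Y$ of $Y$ and an arbitrary representative $\widetilde X_0$ of $X$, then set $\widetilde X := \widetilde X_0 \vee \widetilde Y \in \mathcal{X}$. On the set $\{X \geq Y\} \cap \{\widetilde X_0 = X\} \cap \{\widetilde Y = Y\}$, which has $P$-measure one, we have $\widetilde X = X$, so $\widetilde X$ is again a representative of $X$, and by construction $\widetilde X \geq \widetilde Y$ pointwise on $\Omega$. Hence $\rho(\widetilde X) \leq \rho(\widetilde Y)$, so $\rho^P(X) \leq \rho(\widetilde Y)$, and taking the infimum over $\widetilde Y$ yields $\rho^P(X) \leq \rho^P(Y)$.

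For the convex case, convexity of $\rho^P$ is the easiest: if $\widetilde X, \widetilde Y$ are representatives of $X, Y \in L^\infty(P)$, then $\lambda \widetilde X + (1-\lambda) \widetilde Y$ is a representative of $\lambda X + (1-\lambda) Y$, so
\begin{equation*}
\rho^P\bigl(\lambda X + (1-\lambda) Y\bigr) \leq \rho\bigl(\lambda \widetilde X + (1-\lambda)\widetilde Y\bigr) \leq \lambda \rho(\widetilde X) + (1-\lambda) \rho(\widetilde Y),
\end{equation*}
and taking the infimum over $\widetilde X$ and $\widetilde Y$ independently gives convexity of $\rho^P$. The only conceptual obstacle is really the monotonicity step, where one must remember to modify both representatives so that the pointwise comparison is available before invoking monotonicity of the original $\rho$; all other steps are immediate from algebraic manipulation of representatives.
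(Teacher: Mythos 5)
Your verification of the four axioms is correct and follows essentially the same route as the paper: positivity via a nonnegative pointwise representative, cash invariance via the bijection $\widetilde Z\mapsto \widetilde Z-a$, convexity via convex combinations of representatives, and monotonicity via the key trick of replacing a representative of $X$ by one that dominates a given representative of $Y$ everywhere on $\Omega$. Your monotonicity step is in fact slightly cleaner than the paper's: the paper picks near-optimal representatives $\widetilde Y^n$ with $\rho(\widetilde Y^n)\le\rho^P(Y)+\tfrac1n$ and builds a matching $\widetilde X^n$, whereas you observe that $\rho^P(X)\le\rho(\widetilde Y)$ holds for \emph{every} representative $\widetilde Y$ and then take the infimum directly; both arguments use the same pointwise maximum construction. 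The one place where your write-up overreaches is the preliminary finiteness remark: truncating a representative to $[-K,K]$ only shows that \emph{some} representative has $\rho$-value in $[\rho(0)-K,\rho(0)+K]$, hence $\rho^P(X)\le\rho(0)+K$; it does not bound $\rho(\widetilde X)$ from below over \emph{all} representatives, so it does not rule out $\rho^P(X)=-\infty$. Indeed this can genuinely fail: for the monetary risk measure $\rho(X)=-\sup_{\omega}X(\omega)$ and any nonempty $P$-null set $N$, the representatives $n\ind_N$ of $0$ give $\rho^P(0)=-\infty$. The paper silently ignores this issue (its Definition of $\rho^P$ simply declares the codomain to be $\mathbb{R}$), so this does not affect the comparison of the two proofs of the stated properties, but you should either drop the finiteness claim or note that it requires an additional hypothesis on $\rho$.
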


\begin{proof}
 We will prove the properties of a monetary risk measure $P$-a.s., namely $\text{(i)-(iii)}$ and finally address the convexity here denoted by $\text{(iv)}$.
 \begin{itemize}
     \item[(i)] Let $X\geq 0$ $P$-a.s. We can choose a representant $\bar{X}=X\ind_{\{X\geq0\}}$, then $P(X=\bar{X})=1$. $\bar{X}(\omega)\geq 0$ for all $\omega$, hence $\rho(\bar{X})\leq0$. By definition it follows that $\rho^P(X)\leq\rho(\bar{X})\leq0$. 
     \item[(ii)] Let $X,Y\in L^{\infty}(P)$ such that $X\geq Y$ $P$-a.s. Since $\rho^P(Y)=\inf_{\{\widetilde{Y} \in \mathcal{X} : P(\widetilde{Y}=Y)=1\}}\rho(\widetilde{Y})$, for each $n\geq 1$, we can choose $\widetilde{Y}^n \in \mathcal{X}$, where $P(\widetilde{Y}^n=Y)=1$ and  such that $\rho(\widetilde{Y}^n)\leq \rho^P(Y)+\frac1{n}$. Consider now any $\widetilde{X}$ with $P(\widetilde{X}=X)=1$ and modify this for each $n$ as follows:
$$\widetilde{X}^n=X\ind_{\{\widetilde{Y}^n=Y\} \cap \{ X \geq Y\}}+\max\{\widetilde{X}, \widetilde{Y}^n\}\ind_{\{\widetilde{Y}^n\neq Y\} \cup \{ X < Y\} }.$$
Clearly, $\widetilde{X}^n\in\mathcal{X}$ and $P(\widetilde{X}^n=X)= P((\widetilde{Y}^n=Y) \cap (X \geq Y) )=1$, as $P(X\geq Y)=1$. Then we get, for all $\omega$,
$$\widetilde{X}^n(\omega)\geq Y\ind_{\{\widetilde{Y}^n=Y\} \cap \{ X \geq Y\}}(\omega)+ \widetilde{Y}^n \ind_{\{\widetilde{Y}^n\neq Y\} \cup \{ X < Y\}}(\omega)= \widetilde{Y}^n(\omega).$$
Therefore as $\rho$ satisfies (ii) we get $\rho(\widetilde{X}^n)\leq\rho(\widetilde{Y}^n)\leq \rho^P(Y)+\frac1{n}$. This implies
$$\rho^P(X)\leq \inf_{n\geq1} \rho(\widetilde{X}^n)\leq \rho^P(Y),$$
and (ii) follows for $\rho^P$.
\item[(iii)] Observe that the property of cash-invariance or translation-invariance holds trivially as for all $X\in L^{\infty}(P)$ and $a\in\mathbb{R}$ we have
\begin{equation*}
    \rho^{P}(X+a)=\inf_{\{\widetilde{X} \in \mathcal{X} : P(\widetilde{X}=X+a)=1\}}\rho(\widetilde{X})=\inf_{\{\widetilde{X} \in \mathcal{X} : P(\widetilde{X}=X)=1\}}\rho(\widetilde{X}+a)=\rho^{P}(X)-a.
\end{equation*}
\item[(iv)]Let us now assume that $\rho$ is convex. Suppose $X$ and $Y$ are in $L^{\infty}(P)$ and $\lambda\in (0,1)$. Let $\widetilde{X}$, $\widetilde{Y}$ be in $\mathcal{X}$ such that $P(X=\widetilde{X})=P(Y=\widetilde{Y})=1$. Define $\bar{Z}=\lambda \widetilde{X}+(1-\lambda)\widetilde{Y}$. Then, clearly, $\bar{Z}\in\{\widetilde{Z}\in\mathcal{X}: P(\widetilde{Z}=\lambda X+(1-\lambda)Y)=1\}$. It follows from the convexity of $\rho$ that
\begin{align*}
\rho^P(\lambda X+(1-\lambda)Y) &\leq \inf_{\{\bar{Z}\in\mathcal{X}:\exists \widetilde{X}, \widetilde{Y}: \bar{Z}=\lambda \widetilde{X}+(1-\lambda)\widetilde{Y}\}}\rho(\bar{Z})\\
&\leq \inf_{\{(\widetilde{X}, \widetilde{Y})\in\mathcal{X}\times\mathcal{X}: P(X=\widetilde{X})=P(Y=\widetilde{Y})=1\}} \{\lambda \rho(\widetilde{X})+(1-\lambda)\rho(\widetilde{Y})\}\\
&=\lambda \rho^P(X)+(1-\lambda)\rho^P(Y),
\end{align*}
showing that $\rho^P$ is convex.
 \end{itemize}
 \end{proof}

\subsection{Second approach applicable to convex risk measures: fixing a penalty function}\label{fixpenalty}
Let us now turn to the second approach  
 how to  introduce risk measures depending on some reference measure, 
 which works in the case of convex risk measures that are continuous from below in the sense of \eqref{eq:contbelow}.
 Indeed, fix some penalty function $ \alpha:\mathcal{M}_{1}\to\RR\cup\{+\infty\}$ and define a convex risk
measure $\rho:\mathcal{X}\to\RR$ via
\begin{align}\label{eq:convexriskm}
	&\rho(X):=\sup_{Q\in \mathcal{M}_{1}}\{\EE_{Q}[-X]-\alpha(Q)\}, \qquad
	\forall X\in\mathcal{X}.
\end{align}

Here, instead of looking at the risk measure $\rho^P$,  we  modify
$\alpha$, which then gives rise to a new risk measure $\widehat{\rho}^P$,
 depending on the set of absolutely continuous measures with respect to $P$, denoted by
\[
\mathcal{P}^{P}=\{Q\in\cM_{1} : Q\ll P\}.
\]
To be precise, let $\alpha$  be the fixed penalty function as of \eqref{eq:convexriskm}
and define $\widehat{\rho}^P: L^{\infty}(P) \to \mathbb{R}$ as follows:
\begin{align}\label{eq:rhohat}
\widehat{\rho}^{P}(X) =\sup_{Q\in
	\mathcal{P}^{P}}\{\EE_{Q}[-X]-\alpha(Q)\}.
\end{align}
This means that we actually have to use the following penalty function $\alpha^P(Q)$ in order to express the supremum again over all probability measures $\mathcal{M}_1$. Indeed,
define
\begin{equation}\label{alphaP}
	\alpha^{P}(Q):=
	\begin{cases}
		\alpha(Q) \quad	&\text{for} \ Q\in\mathcal{P}^{P}, \\
		+\infty \quad &\text{for} \ Q\in\mathcal{M}_{1}\setminus
		\mathcal{P}^{P},
	\end{cases}
\end{equation}
then we exactly get
\begin{align}\label{rhohat}
	\widehat{\rho}^{P}(X) =\sup_{Q\in
		\mathcal{M}_{1}}\{\EE_{Q}[-X]-\alpha^{P}(Q)\}, \quad \forall X \in L^{\infty}(P)
\end{align}

As before the corresponding
acceptance set of $\widehat{\rho}^P$ is defined as
\begin{equation}\label{CHat}
	\widehat{C}^{P}=\{X\in L^{\infty}(P):
	\widehat{\rho}^{P}(X)\le0\}.
\end{equation}

\begin{remark}\label{absconsP}
	Let $P,P'\in\cM_{1}$ such that $P'\ll P$. Then, obviously, if $Q\ll P'$ then $Q\ll P$ as well. Hence $\mathcal{P}^{P'}\subseteq \mathcal{P}^{P}$. This implies that for all $X\in L^{\infty}(P)$
	$$\widehat{\rho}^{P'}(X) =\sup_{Q\in
		\mathcal{P}^{P'}}\{\EE_{Q}[-X]-\alpha(Q)\}\leq \sup_{Q\in
		\mathcal{P}^{P}}\{\EE_{Q}[-X]-\alpha(Q)\}=\widehat{\rho}^{P}(X).$$
In particular, if $P \sim P'$, then $\widehat{\rho}^{P}= \widehat{\rho}^{P'}$. When the `worst case risk measure' takes the role of $P$, this leads again to the desired properties (i) and (ii) mentioned in  the introduction.
\end{remark}

\subsection{Comparing the two approaches for convex risk measures} \label{sec:comparision}

Our goal is now to compare these two different approaches in the case of convex risk measures and establish conditions under which they coincide. To this end we let $\rho$  be the convex risk measure given by \eqref{eq:convexriskm} for some fixed $\alpha$.

According to the first approach define now for $X\in L^{\infty}(P)$  
\[
\rho^P(X)=\inf_{\{\widetilde{X} : P(\widetilde{X}=X)=1\}}\rho(\widetilde{X})= \inf_{\{\widetilde{X} : P(\widetilde{X}=X)=1\}} \sup_{Q \in \mathcal{M}_1}\{\mathbb{E}_Q[\widetilde{X}]-\alpha(Q)\}.
\]
The obvious question at that point  is  when  does the following minimax inequality
\begin{equation}
\begin{split} \label{eq:duality}
\widetilde{\rho}^P(X):&=\sup_{Q \in \mathcal{M}_1}\inf_{\{\widetilde{X} \in \mathcal{X} : P(\widetilde{X}=X)=1\}} \{\mathbb{E}_Q[-\widetilde{X}]-\alpha(Q)\}\\
&\leq \inf_{\{\widetilde{X} \in \mathcal{X} : P(\widetilde{X}=X)=1\}} \sup_{Q \in \mathcal{M}_1}\{\mathbb{E}_Q[-\widetilde{X}]-\alpha(Q)\} =\rho^P(X).
\end{split}
\end{equation}
reduce to an equality. Before clarifying this, we shall first relate these risk measures with $\widehat{\rho}^{P}$ given by \eqref{eq:rhohat} or equivalently by $\eqref{rhohat}$ . Indeed, the following lemma  shows that $\widehat{\rho}^P$ coincides with $\widetilde{\rho}^P$ but that we have a potential strict inequality with $\rho^P$.

\begin{lemma} \label{lem:relation} For $X\in L^{\infty}(P)$, we have  $\rho^P(X) \geq  \widetilde{\rho}^P(X) = \widehat{\rho}^P(X)$
and thus $C^{P}\subseteq\widehat{C}^{P}$.
\end{lemma}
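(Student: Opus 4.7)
The proof splits into three pieces: the minimax inequality $\widetilde{\rho}^P(X)\le\rho^P(X)$, the identity $\widetilde{\rho}^P(X)=\widehat{\rho}^P(X)$, and the inclusion of acceptance sets. The first is just the abstract weak duality $\sup_Q\inf_{\widetilde{X}}\le\inf_{\widetilde{X}}\sup_Q$ applied to the functional $(Q,\widetilde{X})\mapsto\mathbb{E}_Q[-\widetilde{X}]-\alpha(Q)$, which requires no further input. The third is then immediate: if $X\in C^{P}$ then $\rho^P(X)\le 0$, and the first two pieces yield $\widehat{\rho}^P(X)\le\rho^P(X)\le 0$, i.e.\ $X\in\widehat{C}^P$.

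The crux is therefore to compute, for each fixed $Q\in\mathcal{M}_1$, the inner infimum
\[
I(Q):=\inf_{\{\widetilde{X}\in\mathcal{X}:P(\widetilde{X}=X)=1\}}\{\mathbb{E}_Q[-\widetilde{X}]-\alpha(Q)\}
\]
and to show that the outer supremum over $\mathcal{M}_1$ reduces to a supremum over $\mathcal{P}^P$. The plan is to split by whether $Q\ll P$ or not. If $Q\in\mathcal{P}^P$, then every feasible $\widetilde{X}$ also satisfies $Q(\widetilde{X}=X)=1$, so $\mathbb{E}_Q[-\widetilde{X}]=\mathbb{E}_Q[-X]$ is constant on the feasible set and $I(Q)=\mathbb{E}_Q[-X]-\alpha(Q)$. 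If $Q\notin\mathcal{P}^P$, pick a measurable $A$ with $P(A)=0$ and $Q(A)>0$; starting from any bounded representative $\widetilde{X}_0\in\mathcal{X}$ of $X$ and setting $\widetilde{X}_n:=\widetilde{X}_0\ind_{A^c}+n\ind_A$, we have $\widetilde{X}_n\in\mathcal{X}$, $P(\widetilde{X}_n=X)=1$, while
\[
\mathbb{E}_Q[-\widetilde{X}_n]=\mathbb{E}_Q[-\widetilde{X}_0\ind_{A^c}]-nQ(A)\longrightarrow-\infty
\]
as $n\to\infty$. Hence $I(Q)=-\infty$ on $\mathcal{M}_1\setminus\mathcal{P}^P$, so these $Q$ do not contribute to the outer supremum, and we obtain
\[
\widetilde{\rho}^P(X)=\sup_{Q\in\mathcal{P}^P}\{\mathbb{E}_Q[-X]-\alpha(Q)\}=\widehat{\rho}^P(X).
\]

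There is no serious obstacle. One only has to check that the feasible set is nonempty (which holds because $X\in L^\infty(P)$ admits a bounded representative in $\mathcal{X}$) and that the perturbations $\widetilde{X}_n$ stay in $\mathcal{X}$ (clear, since $\widetilde{X}_0$ is bounded and $A$ is fixed). The case $\alpha(Q)=+\infty$ is harmless on both sides since it makes the corresponding term identically $-\infty$, consistent with the classification above.
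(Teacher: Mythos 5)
Your proposal is correct and follows essentially the same route as the paper: weak duality gives $\widetilde{\rho}^P\le\rho^P$ and hence $C^P\subseteq\widehat{C}^P$, and the equality $\widetilde{\rho}^P=\widehat{\rho}^P$ is obtained by showing the inner infimum is $-\infty$ whenever $Q\not\ll P$, by sending a representative of $X$ to $+\infty$ on a $P$-null set charged by $Q$. The only (cosmetic) difference is that you pick such a set directly from the failure of absolute continuity, whereas the paper extracts it from the Lebesgue decomposition of $Q$ with respect to $P$; both arguments use that $\alpha(Q)>-\infty$ to conclude.
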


\begin{proof}
Let $X \in L^{\infty}(P)$. Then for all $\widetilde{X}\in \mathcal{X}$ with $P(\widetilde{X}=X)=1$ and all $Q\in \mathcal{P}^P$
\[
\EE_{Q}[-X]-\alpha(Q)=\EE_{Q}[-\widetilde{X}]-\alpha(Q).
\]
Hence, also
\[
\EE_{Q}[-X]-\alpha(Q)=\inf_{\{\widetilde{X} : P(\widetilde{X}=X)=1\}}  \EE_{Q}[-\widetilde{X}]-\alpha(Q).
\]
Therefore we get the following inequalities
\begin{align*}
\widehat{\rho}^{P}(X)&=\sup_{Q\in\mathcal{P}^{P}}\{\EE_{Q}[-X]-\alpha(Q)\}=\sup_{Q\in\mathcal{P}^{P}}\inf_{\{\widetilde{X} : P(\widetilde{X}=X)=1\}}\{\EE_{Q}[-\widetilde{X}]-\alpha(Q)\}\\
&\leq \underbrace{\sup_{Q\in\cM_{1}}\inf_{\{\widetilde{X} : P(\widetilde{X}=X)=1\}}\{\EE_{Q}[-\widetilde{X}]-\alpha(Q)\}}_{\widetilde{\rho}^P(X)}\leq \underbrace{\inf_{\{\widetilde{X} : P(\widetilde{X}=X)=1\}}\sup_{Q\in\cM_{1}} \{\EE_{Q}[-\widetilde{X}]-\alpha(Q)\}}_{\rho^P(X)}.
\end{align*}
Hence if $X \in C^P$, meaning that $\rho^P(X) \leq 0$, then also $\widehat{\rho}^P(X) \leq 0$ so that $ C^P \subseteq \widehat{C}^P$.

Let us now prove the missing equality
$
\widetilde{\rho}^P = \widehat{\rho}^P.
$
Let $Q \in \cM_1$ and consider the Lebesgue decomposition into $Q=\widetilde{Q}_1 +\widetilde{Q}_2$ where $\widetilde{Q}_1$ and $\widetilde{Q}_2$ are nonnegative measures such that $\widetilde{Q}_1 \ll P$ and $\widetilde{Q}_2 \perp P$. Moreover denote by $Q_i$ the probability measure $Q_i:= \frac{\widetilde{Q}_i}{\widetilde{Q}_i(\Omega)}$ for $i=1,2$. Then for every fixed $Q$ with $\widetilde{Q}_2(\Omega) >0$, there is $B_Q\in\mathcal{F}$ with $\widetilde{Q}_2(B_Q)=\widetilde{Q}_2(\Omega)$ and $P(B_Q^c)=1$. Define, for every $n\geq1$,
$\widetilde{X}^n_Q=n\ind_{B_Q}+X\ind_{B_Q^c}\in\mathcal{X}$. Clearly $P(\widetilde{X}^n_Q=X)=P(B_Q^c)=1$. We have that
$$\E_Q[-\widetilde{X}^n_Q]=-nQ(B_Q)+\E_{Q}[-X\ind_{B_Q^c}]=-n\widetilde{Q}_2(\Omega)+\widetilde{Q}_1(\Omega)\E_{Q_1}[-X].$$
Moreover we know that $\inf_{R\in\mathcal{M}_1}\alpha(R)>-\infty$, say $-N:=\inf_{R\in\mathcal{M}_1}\alpha(R)$. Note that, as $X\in L^{\infty}(P)\subseteq L^{\infty}(Q_1)$ we have that
 $|\E_{Q_1}[-X]|\leq \|X\|_{L^{\infty}(P)}$. It follows that
\begin{align*}
\inf_{\{\widetilde{X} : P(\widetilde{X}=X)=1\}}\{\E_Q[-\widetilde{X}]-\alpha(Q)\} &\leq \inf_{n\geq 1}    \{\E_Q[-\widetilde{X}^n_Q]-\alpha(Q)\}\\
&\leq \inf_{n\geq1}-n\widetilde{Q}_2(\Omega)+\widetilde{Q}_1(\Omega)\E_{Q_1}[-X]+N=-\infty.
\end{align*}

Taking the supremum over $Q \in \cM_{1}$ therefore just means to restrict the set to $\mathcal{P}^P$ so that $\widetilde{Q}_2\equiv 0$. Hence, 
\begin{align*}
\widetilde{\rho}^P(X)&=\sup_{Q\in\cM_{1}}\inf_{\{\widetilde{X} : P(\widetilde{X}=X)=1\}}\{\EE_{Q}[-\widetilde{X}]-\alpha(Q)\}\\&=\sup_{Q\in \mathcal{P}^P}\inf_{\{\widetilde{X} : P(\widetilde{X}=X)=1\}}\{\EE_{Q}[-\widetilde{X}]-\alpha(Q)\}=\widehat{\rho}^P(X),
\end{align*}
yielding the assertion.

\end{proof}

By the above lemma, equality of $\widehat{\rho}^P=\rho^P$ is thus equivalent to a minimax equality. In the following we thus apply  a (standard) minimax theorem to conclude equality of $\widehat{\rho}^P=\widetilde{\rho}^P=\rho^P$.

\begin{proposition}\label{th:minimax}
Let $\Omega$ be compact and $Q \mapsto \alpha(Q)$  lower semicontinuous  with respect to the weak-$*$-topology on $\mathcal{M}_1$ and convex.
Then
the minimax equality
\begin{align*}
\widetilde{\rho}^P(X)&=\sup_{Q \in  \mathcal{M}_1}\inf_{\{\widetilde{X} \in \mathcal{X} : P(\widetilde{X}=X)=1\}} \{\mathbb{E}_Q[-\widetilde{X}]-\alpha(Q)\}\\
&= \inf_{\{\widetilde{X}   \in \mathcal{X}: P(\widetilde{X}=X)=1\}} \sup_{Q \in  \mathcal{M}_1}\{\mathbb{E}_Q[-\widetilde{X}]-\alpha(Q)\} =\rho^P(X),
\end{align*}
holds true. Hence, under these conditions $\widehat{\rho}^P= \rho^P$.
\end{proposition}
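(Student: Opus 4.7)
The starting point is Lemma \ref{lem:relation}, which already gives $\widetilde{\rho}^P(X) = \widehat{\rho}^P(X) \leq \rho^P(X)$. The proposition therefore reduces to the reverse inequality $\rho^P(X) \leq \widetilde{\rho}^P(X)$, i.e.\ the non-trivial direction of a minimax exchange. My plan is to invoke Sion's minimax theorem applied to the bifunction
\begin{equation*}
    f(\widetilde{X}, Q) := \EE_Q[-\widetilde{X}] - \alpha(Q)
\end{equation*}
on the product set $A_X \times \mathcal{M}_1$, where $A_X := \{\widetilde{X} \in \mathcal{X} : P(\widetilde{X} = X) = 1\}$.

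Next I would verify the hypotheses of Sion's theorem. For compactness and convexity: since $\Omega$ is compact, $\mathcal{M}_1$ is weak-$*$ compact by the Banach--Alaoglu theorem, being a weak-$*$ closed subset of the unit ball of $C(\Omega)^*$; it is also convex, while $A_X$ is convex as an affine subspace of $\mathcal{X}$. For concavity and upper semicontinuity in $Q$: for fixed $\widetilde{X}$, the map $Q \mapsto f(\widetilde{X}, Q)$ is concave because the expectation term is affine in $Q$ and $-\alpha$ is concave by the convexity assumption on $\alpha$; upper semicontinuity of $-\alpha$ is the stated lower semicontinuity of $\alpha$, and the linear functional $Q \mapsto \EE_Q[-\widetilde{X}]$ is weak-$*$ continuous (at least on the continuous part of $\mathcal{X}$). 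For convexity in $\widetilde{X}$: for fixed $Q$, the map $\widetilde{X} \mapsto f(\widetilde{X}, Q)$ is affine, hence quasi-convex and lower semicontinuous in any reasonable topology on $\mathcal{X}$. Once these hypotheses are in place, Sion's theorem yields
\begin{equation*}
    \inf_{\widetilde{X} \in A_X} \sup_{Q \in \mathcal{M}_1} f(\widetilde{X}, Q) = \sup_{Q \in \mathcal{M}_1} \inf_{\widetilde{X} \in A_X} f(\widetilde{X}, Q),
\end{equation*}
which is exactly $\rho^P(X) = \widetilde{\rho}^P(X)$; combined with Lemma \ref{lem:relation} this gives the claimed $\widehat{\rho}^P = \rho^P$.

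The main technical obstacle I anticipate is the weak-$*$ upper semicontinuity of $Q \mapsto \EE_Q[-\widetilde{X}]$ when $\widetilde{X}$ is only bounded measurable rather than continuous on $\Omega$: the probabilistic weak topology tests only against $C(\Omega)$. I expect to handle this either by exploiting the structural observation already contained in the proof of Lemma \ref{lem:relation}, namely that the inner infimum forces the effective range of the outer $Q$ to lie in $\mathcal{P}^P$ (where densities $dQ/dP$ live in $L^1(P)$ and bounded measurable $\widetilde{X} \in L^\infty(P)$ act continuously on them through the $\sigma(L^1(P), L^\infty(P))$-duality), or by approximating $\widetilde{X}$ in each relevant $L^1(Q)$ by continuous functions while exploiting the uniform bound $\|\widetilde{X}\|_\infty$ and the weak-$*$ compactness of $\mathcal{M}_1$. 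Once this semicontinuity point is pinned down, Sion's theorem applies cleanly and the proposition follows.
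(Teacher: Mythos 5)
Your proposal follows essentially the same route as the paper: its proof also applies Sion's minimax theorem to the bifunction $(\widetilde{X},Q)\mapsto \mathbb{E}_Q[-\widetilde{X}]-\alpha(Q)$ on $A_X\times\mathcal{M}_1$, citing convexity of both sets, weak-$*$-compactness of $\mathcal{M}_1$ from compactness of $\Omega$, concavity and upper semicontinuity in $Q$, linearity in $\widetilde{X}$, and then concludes via Lemma \ref{lem:relation}. The upper-semicontinuity issue you flag for merely bounded measurable $\widetilde{X}$ (weak-$*$-convergence only tests against $C(\Omega)$) is asserted without comment in the paper's proof, so your treatment is, if anything, more careful on that point.
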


\begin{proof}
Note that
\[
Q \mapsto \E_Q[\widetilde{X}] - \alpha(Q)
\]
is concave and upper semicontinous and
\[
\widetilde{X} \mapsto \E_Q[\widetilde{X}] - \alpha(Q)
\]
is linear and hence convex. Moreover, $\{\widetilde{X} \in \mathcal{X} : P(\widetilde{X}=X)=1\}$ and $\mathcal{M}_1$ are convex. By compactness of $\Omega$, $\mathcal{M}_1$ is also (weak-$*$-) compact. Therefore Sion's minimax theorem (see e.g.~\cite{S:58, RS:13}) yields the first assertion and the second one follows from Lemma \ref{lem:relation}.
\end{proof}

\begin{remark}\label{rem:minalpha}
Note that if $\Omega$ is compact and $\alpha$ happens to be $\alpha_{\min}$ as introduced in Section \ref{sec:pointwise}, all conditions of Proposition \ref{th:minimax}  are automatically satisfied.
Indeed, we can argue along the lines of Remark 4.17 in \cite{FS:2011} and  apply the general duality
theorem for the Fenchel–Legendre transform to the convex function $\rho$, however here on the Banach space of continuous functions on the compact set  $\Omega$, denoted by $C(\Omega)$. This allows  to conclude that 
\[
\alpha_{\min}(Q)=\sup_{X \in C(\Omega)} \{\E_Q[-X]-\rho(X)\}.
\]
Then due to Theorem 2.3.1 in \cite{Z:02}, $\alpha_{\min}$ is convex and weak-$*$-lower semicontinuous. 
\end{remark}

In the  following example  lower semicontinuity of  $\alpha$ fails and this shows
that in general  $C^{P}$ and $\widehat{C}^{P}$
do not coincide, namely $C^{P}$ is strictly contained in $\widehat{C}^{P}$ but not equal. Hence there is a gap between $\widehat{\rho}^P$ and $\rho^P$, meaning that $\widehat{\rho}^P= \widetilde{\rho}^P$ can in general take strictly smaller values than $\rho^P$.

\begin{example}\label{ex:nonequal}
	Let $(\Omega,\mathcal{F})=([0,1],\mathcal{B}([0,1]))$ and the reference
	measure $P$ be the Lebesgue measure restricted to $\mathcal{B}([0,1])$.
	We fix a penalty function $\alpha:\cM_{1}\to\mathbb{R}$ such that
	\begin{align*}
		\alpha(Q):=\begin{cases}
			0, &\text{if} \ Q\in\mathcal{P}^{P},\\
			-1, &\text{if} \ Q\in\cM_{1}\setminus\mathcal{P}^{P}.
		\end{cases}
	\end{align*}
	
	Note that $\alpha$ fails to be lower semicontinuous. Indeed, take a sequence of measures $Q_n=\sum_{i=1}^n \alpha_i^n \delta_{x_i^n}$, where  $\alpha_i^n \geq 0$, $\sum_{i=1}^n \alpha_i^n=1$ and $\delta_{x}$ is the Dirac measure at $x$, converging to some measure $Q$ that is absolutely continuous with respect to the Lebesgue measure.  Indeed, for example, for each $n\geq1$,  we can partition the interval $[0,1]$ into intervals of length $\alpha_i^n$, $1\leq i\leq n$, and take a point $x_i^n$ in the $i$-th interval. Then, for every continuous function $f:[0,1] \to \mathbb{R}$ we have that $\int_0^1 f\mathrm{d}Q_n=\sum_{i=1}^n\alpha_i^n f(x_i^n)\to\int_0^1f(x)\mathrm{d}x=\int_0^1 f\mathrm{d}P$, hence $Q^n\to P$ in the weak-$*$-topology. Being sums of Dirac-measures it is clear that $\alpha(Q_n)=-1$ and thus
	\[
	-1=\liminf_{n \to \infty} \alpha(Q_n) \leq\alpha(Q)=0,
	\]
	whence $\alpha$ is not lower semicontinuous.
	
We now claim that  there exists $X\in \widehat{C}^{P}\setminus C^P$.
	Indeed, define $X_{k}=k$ $P$-a.s. with some fixed number $k\in(0,1)$. Then $X_{k}\in \widehat{C}^{P}$. Note that
	\begin{equation*}
		\widehat{C}^{P}=\{X\in L^{\infty}(P) : \EE_{Q}[-X]\le0,\ \forall
		Q\in\mathcal{P}^{P}\}
	\end{equation*}
	and, trivially, for each $Q\ll P$:
	$$\EE_Q[-X_{k}]=-kQ(\Omega)=-k<0.$$
	However $X_{k}\notin C^{P}$.
	
	Indeed, we cannot find a version $\widetilde{X}=k$ $P$-a.s. such that
	\begin{equation*}
		\int_{0}^{1} \widetilde{X}(\omega)\delta_{x}(\mathrm{d}\omega)\ge1, \qquad
		\forall x\in[0,1].
	\end{equation*}
	Here, $\delta_{x}$ denotes the Dirac distribution centered in
	$x\in[0,1]$, which lies for all $x \in \cM_{1}\setminus\mathcal{P}^{P}$. Therefore, for any $\widetilde{X}$ with $P(X_k=\widetilde{X})=1$
	\[
	\sup_{Q \in \cM_{1}} \EE_Q[-\widetilde{X}]-\alpha(Q) \geq -k+1 > 0
	\]
and thus clearly also for the infimum. This proves the claim and shows together with Lemma \ref{lem:relation} that the minimax inequality is strict.
	\end{example}
	
The above example shows that lower semicontinuity of the penalty function $\alpha$ is crucial. By replacing it by the corresponding $\alpha_{\min}$, which is here $\alpha_{\min} \equiv -1$ 
(compare with the shifted worst case risk measure as defined in Section \ref{sec:worstcase} for $a=1$),
the previous gap $\rho^P (X) > \widehat{\rho}^P(X)=\widetilde{\rho}^P(X)$  disappears, which is a consequence of Remark \ref{rem:minalpha}.

\subsection{Convex law invariant risk measures: worst case properties and comparison of the two approaches} \label{sec:comparisionlawinv}

We now specialize the setting further and consider
 \emph{law invariant risk measures.}
 To define law-invariance we need to start with a reference measure $\P$ right from the beginning, which could correspond to some `worst case probability measure' used by the regulator.
 
 Recall that  a risk measure $\rho$ is called law-invariant if
\[
\rho(X) = \rho (Y), \quad \text{if } X=Y \,\ \P\text{-a.s.}
\]

For a convex law invariant risk measure (that satisfies automatically the Fatou property, see \cite{JST:06}), any penalty function is of the following form
\begin{align}\label{eq:alphacanonical1}
\alpha(Q)= \begin{cases}
			\widetilde{\alpha}(Q) \quad	&\text{for} \ Q\in\mathcal{Q}^{\P} \subseteq \mathcal{P}^{\P}, \\
			+\infty \quad &\text{for} \ Q\in\mathcal{M}_{1}\setminus \mathcal{Q}^{\P} ,
		\end{cases}
\end{align}
where $\mathcal{Q}^{\P}$ denotes some weak-$*$-closed subset of $\mathcal{P}^{\P}$ and $\widetilde{\alpha}(Q)$ satisfies
\[
\widetilde{\alpha}(Q) < \infty, \quad \forall Q \in \mathcal{Q}^{\P}
\]
and
\[
\inf_{Q \in \mathcal{Q}^{\P}} \widetilde{\alpha}(Q) > -\infty.
\]
This follows from \cite[Lemma 4.30 and Theorem 4.31]{FS:2011} and the Fatou property.

Instead of an arbitrary penalty function, we here fix $\P$ and consider penalty functions $\alpha$  of the more specific form \eqref{eq:alphacanonical1}. 
Hence, throughout this section $\rho$ denotes the (pointwise defined) convex risk measure as of \eqref{eq:convexriskm} now with $\alpha$ given by \eqref{eq:alphacanonical1} with the fixed $\P$.
Its acceptance set is given by
\begin{align}\label{eq:Clawinv}
C=\{X \in \mathcal{X} :  \sup_{Q \in \mathcal{Q}^{\P}} \{\E_Q[-X]-\widetilde{\alpha}(Q)\} \leq 0 \}.
\end{align}

The goal is now to analyze  $\rho^P$ and $\widehat{\rho}^P$ defined with respect to some new measure $P$.  
Here, according to the first approach, $\rho^P$ is defined via
\[
\rho^P(X)=\inf_{\{\widetilde{X} : P(\widetilde{X}=X)=1\}}\rho(\widetilde{X})= \inf_{\{\widetilde{X} : P(\widetilde{X}=X)=1\}} \sup_{Q \in \mathcal{Q}^{\P}}\{\mathbb{E}_Q[-\widetilde{X}]-\widetilde{\alpha}(Q)\}, \quad X \in L^{\infty}(P),
\]
while $\widehat{\rho}^P$ is given by \eqref{eq:rhohat} or equivalently by \eqref{rhohat} with
 penalty function 
\begin{align}\label{eq:alphaR}
\alpha^{P}(Q)=\begin{cases}
	\widetilde{\alpha}(Q) \quad	&\text{for} \
	Q\in\mathcal{Q}^{{\P}}\cap\mathcal{P}^{P}, \\
	+\infty \quad &\text{for} \ Q\in\mathcal{M}_{1}\setminus({Q}^{{\P}}\cap\mathcal{P}^{P}).
	\end{cases}
\end{align}
The corresponding acceptance set  $C^P$ and $\widehat{C}^P$ are defined as in Section \ref{fixrisk} and Section \ref{fixpenalty}.

\begin{remark}\label{rem:third_approach}
In the current setup we could introduce a third approach where we consider the sets $\mathcal{Q}^{\P}$ as a certain function of the measures and define an alternative penalty function
$\overline{\alpha}^P$ via
\begin{align*}
\overline{\alpha}^{P}(Q)=\begin{cases}
	\widetilde{\alpha}(Q) \quad	&\text{for} \
	Q\in\mathcal{Q}^{P}, \\
	+\infty \quad &\text{for} \ Q\in\mathcal{M}_{1}\setminus\mathcal{Q}^{P}
	\end{cases}
\end{align*}
so that $\overline{\alpha}^{\P}$ is equal to \eqref{eq:alphacanonical1}. To give an example, in the case of Expected Shortfall (see Section \ref{sec:ES}) the sets $\mathcal{Q}^{P}$ are given by
\[
\mathcal{Q}^{P}=\{Q\in\cM_{1} : Q\ll P, \ \text{and}\ \frac{\mathrm{d}Q}{\mathrm{d}P}\le \lambda^{-1}, \P\text{-a.s.}\}
\]
for some $\lambda \in (0,1)$.
Note that in general $\mathcal{Q}^P \neq \mathcal{Q}^{\P} \cap \mathcal{P}^P$. However, when $P \ll \P$, there are  certain situations where $\mathcal{Q}^P = \mathcal{Q}^{\P} \cap \mathcal{P}^P$ holds. Indeed, the superhedging price (see Section \ref{ref:superhedging}) is one example for this equality. In the case of the Expected Shortfall it however does not hold true.

As already stated in Remark \ref{absconsP},
a crucial property of our second approach is that $\mathcal{Q}^{\P} \cap \mathcal{P}^{P'} \subseteq \mathcal{Q}^{\P} \cap \mathcal{P}^{P}$ whenever $P' \ll P$ so that we get the corresponding ordering $\widehat{\rho}^{P'} \leq \widehat{\rho}^{P}$, which would not necessarily hold true if we pursued the above outlined third approach. 
\end{remark}

Our first result of this section relates $\rho^P$ and $\widehat{\rho}^P$ with the starting risk measure $\rho$ and shows  the following order relation
\begin{align}\label{eq:ordering}
\widehat{\rho}^P|_{\mathcal{X}} \leq \rho^P|_{\mathcal{X}} \leq \rho=\rho^{\P}|_{\mathcal{X}}= \widehat{\rho}^{\P}|_{\mathcal{X}}
\end{align}
for \emph{all} $P \in \mathcal{M}_1$ (and not only absolutely continuous ones). This justifies the motivation given in the introduction that the reference measure $\P$ should correspond to some `worst case probability measure' capturing all the risk considered to be relevant since $\rho=\rho^{\P}|_{\mathcal{X}}= \widehat{\rho}^{\P}|_{\mathcal{X}}$ always yields the maximal risk. We emphasize again that also in cases where ${\P}$ is not  dominating the measure $P$ the risk computed with respect to $P$ is \emph{always} smaller or equal to the risk with respect to $\P$, which is a consequence of the definitions of $\rho^P$ and $\widehat{\rho}^P$ (see also Remark \ref{remark:orderetal} (iii) below).

Concerning the comparison of  $\rho^P$ and $\widehat{\rho}^P$, 
 we obtain different results depending on the relation between $\P$ and $P$.
If $P$ is either dominating or singular with respect to $\P$, then we always have $\rho^P=\widehat{\rho}^P$, which is also  asserted in the next proposition. 
The proposition also shows  whenever $P \perp \P $, the risk of both $\rho^P$ and  $\widehat{\rho}^P$ is  $-\infty$. Moreover, if $\P \ll P$, then  $\rho^P|_{\mathcal{X}}=\widehat{\rho}^P|_{\mathcal{X}}$  is equal to $\rho =\rho^{\P}|_{\mathcal{X}}=\widehat{\rho}^{\P}|_{\mathcal{X}}$, which is of course in line with \eqref{eq:ordering}.

\begin{proposition} \label{prop:PvsR}
Let $\P \in \mathcal{M}_1$ be a reference measure and let $\alpha$ be of form \eqref{eq:alphacanonical1} with $\mathcal{Q}^{\P}$ some weak-$*$-closed subset of $\mathcal{P}^{\P}$. 
Then for all $P \in  \mathcal{M}_1$ which satisfy either ${\P} \ll P$ or $\P \perp P$, it holds that $C^P= \widehat{C}^P$ and thus  $\rho^P=\widehat{\rho}^P=\widetilde{\rho}^P$. Hence, the minimax equality holds true
\begin{align*}
\widetilde{\rho}^P(X)&=\sup_{Q \in \mathcal{Q}^{\P}}\inf_{\{\widetilde{X} \in \mathcal{X} : P(\widetilde{X}=X)=1\}} \{\mathbb{E}_Q[-\widetilde{X}]-\widetilde{\alpha}(Q)\}\\
&= \inf_{\{\widetilde{X} \in \mathcal{X} : P(\widetilde{X}=X)=1\}} \sup_{Q \in \mathcal{Q}^{\P}}\{\mathbb{E}_Q[-\widetilde{X}]-\widetilde{\alpha}(Q)\} =\rho^P(X).
\end{align*}
 In particular, if ${\P} \ll P$,  the restriction of  $\rho^P=\widehat{\rho}^P=\widetilde{\rho}^P$ to $\mathcal{X}$ is equal to $\rho$. If ${\P}\perp P$, then $\rho^P=\widehat{\rho}^P= \widetilde{\rho}^P \equiv -\infty$.\\
 Moreover, for all $P \in \mathcal{M}_1$ we have
 \[
 \widehat{\rho}^P|_{\mathcal{X}} \leq \rho^P|_{\mathcal{X}} \leq \rho = \rho^{\P}|_{\mathcal{X}} =\widehat{\rho}^{\P}|_{\mathcal{X}}.
 \]
 and  also 
 \[
 \widehat{\rho}^P \leq \rho^P = \rho^{\P} =\widehat{\rho}^{\P}, \quad \text{on} \quad  L^{\infty}(P) \cap L^{\infty}({\P}).
 \]
\end{proposition}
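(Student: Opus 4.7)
The proof naturally splits along the dichotomy $\P\ll P$ versus $\P \perp P$; the crucial observation is that the penalty $\alpha$ is supported on $\mathcal{Q}^{\P} \subseteq \mathcal{P}^{\P}$, so every probability measure contributing to the dual representation of $\rho$ is already dominated by $\P$. The first part of the proposition would follow directly from this domination once each side of the dichotomy is handled separately, and the ordering and minimax statements then drop out of Lemma \ref{lem:relation}.

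In the case $\P\ll P$ I would use $\mathcal{Q}^{\P}\subseteq \mathcal{P}^{\P}\subseteq \mathcal{P}^{P}$ to conclude $\mathcal{Q}^{\P}\cap\mathcal{P}^{P}=\mathcal{Q}^{\P}$, so that $\alpha^{P}=\alpha$ and $\widehat{\rho}^P(X)=\sup_{Q\in\mathcal{Q}^{\P}}\{\E_Q[-X]-\widetilde{\alpha}(Q)\}$ for every $X\in L^\infty(P)$. Any $\widetilde{X}\in\mathcal{X}$ with $P(\widetilde{X}=X)=1$ automatically satisfies $Q(\widetilde{X}=X)=1$ for every $Q\in\mathcal{Q}^{\P}$, hence $\rho(\widetilde{X})=\widehat{\rho}^{P}(X)$ is independent of the representative, and the infimum defining $\rho^P(X)$ equals $\widehat{\rho}^P(X)$. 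Restricting to $X\in\mathcal{X}$ and choosing $\widetilde{X}=X$ yields $\rho^P(X)=\rho(X)$, and specializing further to $P=\P$ gives $\rho=\rho^{\P}|_{\mathcal{X}}=\widehat{\rho}^{\P}|_{\mathcal{X}}$.

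In the case $\P\perp P$ I would pick $A\in\mathcal{F}$ with $\P(A)=0$ and $P(A)=1$. For $X\in L^\infty(P)$, choosing a bounded representative on $A$ and setting $\widetilde{X}_n:=X\ind_A+n\ind_{A^c}\in\mathcal{X}$ gives $P(\widetilde{X}_n=X)=1$; since every $Q\in\mathcal{Q}^{\P}$ concentrates on $A^c$, I get $\E_Q[-\widetilde{X}_n]=-n$ and hence $\rho(\widetilde{X}_n)\le -n-\inf_{Q\in\mathcal{Q}^{\P}}\widetilde{\alpha}(Q)\to-\infty$, so $\rho^P\equiv-\infty$. For $\widehat{\rho}^P$, any $Q\in \mathcal{Q}^{\P}\cap\mathcal{P}^P$ must satisfy $Q(A)=0$ (from $Q\ll\P$) and $Q(A^c)=0$ (from $Q\ll P$), forcing $Q(\Omega)=0$; thus $\mathcal{Q}^{\P}\cap\mathcal{P}^P=\emptyset$ and $\widehat{\rho}^P\equiv-\infty$ too.

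In either case $\rho^P=\widehat{\rho}^P$, and combining with the identity $\widetilde{\rho}^P=\widehat{\rho}^P$ from Lemma \ref{lem:relation} delivers the stated minimax equality. The general ordering $\widehat{\rho}^P|_{\mathcal{X}}\le\rho^P|_{\mathcal{X}}\le\rho$ follows from Lemma \ref{lem:relation} together with the choice $\widetilde{X}=X$ in the infimum defining $\rho^P$, while the final display on $L^\infty(P)\cap L^\infty(\P)$ would be obtained by selecting a jointly bounded representative and invoking the preceding identities. The only delicate point is in the singular case, where one must simultaneously drive the infimum of $\rho(\widetilde{X}_n)$ to $-\infty$ by exploiting the freedom on the $P$-null, $\mathcal{Q}^{\P}$-full set $A^c$, and verify that the restricted set of dual measures is empty; both rest on $\mathcal{Q}^{\P}\subseteq\mathcal{P}^{\P}$ and on the standing assumption $\inf_{Q}\widetilde{\alpha}(Q)>-\infty$ built into the law-invariant form \eqref{eq:alphacanonical1}.
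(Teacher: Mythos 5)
Your treatment of the two cases $\P\ll P$ and $\P\perp P$ is correct and essentially identical to the paper's argument: in the dominated case you use $\mathcal{Q}^{\P}\subseteq\mathcal{P}^{\P}\subseteq\mathcal{P}^{P}$ to see that $\rho(\widetilde{X})$ is independent of the representative and equals $\widehat{\rho}^{P}(X)$, and in the singular case you combine the emptiness of $\mathcal{Q}^{\P}\cap\mathcal{P}^{P}$ with the representatives $\widetilde{X}_n=X\ind_A+n\ind_{A^c}$ and the standing bound $\inf_{Q\in\mathcal{Q}^{\P}}\widetilde{\alpha}(Q)>-\infty$ to drive $\rho^{P}$ to $-\infty$. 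The minimax equality and the ordering $\widehat{\rho}^P|_{\mathcal{X}}\le\rho^P|_{\mathcal{X}}\le\rho=\rho^{\P}|_{\mathcal{X}}=\widehat{\rho}^{\P}|_{\mathcal{X}}$ then follow from Lemma \ref{lem:relation} and the choice $\widetilde{X}=X$, exactly as in the paper.

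The one genuine gap is the final assertion on $L^{\infty}(P)\cap L^{\infty}(\P)$ for an \emph{arbitrary} $P\in\mathcal{M}_1$. You propose to get it by ``selecting a jointly bounded representative and invoking the preceding identities,'' but a general $P$ satisfies neither $\P\ll P$ nor $\P\perp P$, so the preceding identities do not apply; moreover a single common representative $\widetilde{X}_0$ only yields the two one-sided bounds $\rho^{P}(X)\le\rho(\widetilde{X}_0)$ and $\rho^{\P}(X)\le\rho(\widetilde{X}_0)$, which do not compare $\rho^{P}$ with $\rho^{\P}$. What is actually needed (and what the paper does) is to Lebesgue-decompose $P=\widetilde{P}_1+\widetilde{P}_2$ with respect to $\P$, choose $B$ with $P_2(B)=1$ and $\P(B)=P_1(B)=0$, note that $\E_Q[-\widetilde{X}]=\E_Q[-\widetilde{X}\ind_{B^c}]$ for every $Q\in\mathcal{Q}^{\P}$ so that $\rho^{P}=\rho^{P_1}$ on $L^{\infty}(P)$, and then apply Remark \ref{abs:fixset} to $P_1\ll\P$. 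This is a variant of the modification-on-a-null-set device you already use in the singular case, so the repair is short, but as written the step is unjustified. (Note also that this argument only delivers $\rho^{P}\le\rho^{\P}$ there; the equality displayed in the statement cannot hold in general, as the case $P\perp\P$ shows.)
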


\begin{remark}\label{remark:orderetal}
\begin{enumerate}
    \item 
Observe that in contrast to the minimax result as of Proposition \ref{th:minimax} and  Corollary \ref{th:minimax1} below, here we do not need to assume any properties on $\alpha$ or $\Omega$. 

\item Note that Proposition \ref{prop:PvsR} implies that  $\rho^{\P}=\widehat{\rho}^{\P}$ \emph{always} holds if $\alpha$ is of form \eqref{eq:alphacanonical1}.

\item Notice that for general $P$, we do not necessarily have
$\rho^P=\widehat{\rho}^P$. For $\widehat{\rho}^P$ the penalty function $\alpha^P$ is given by \eqref{eq:alphaR} and  $\mathcal{Q}^{\P}\cap\mathcal{P}^{P}=\mathcal{Q}^{\P} \cap\mathcal{P}^{P_1}$ where $P_1$ denotes the normalized absolutely continuous part of the Lebesgue decomposition of $P$ with respect to $\P$. Then it is easy to see that $\widehat{\rho}^P=\widehat{\rho}^{P_1}$. Similarly we have $\rho^P=\rho^{P_1}$ as shown in the proof below. But we do not necessarily have $\rho^{P_1}=\widehat{\rho}^{P_1}$ (compare also Proposition \ref{lemma_dual_regulator}).

\end{enumerate}

\end{remark}

\begin{proof}

We start by relating $\rho^P $ with $\rho$ for the cases ${\P} \ll P$ and ${\P} \perp P$.
Recall that $C^P$ is defined by
$$C^P=\{ X \in L^{\infty}(P)\, : \inf_{\{\widetilde{X} \in \mathcal{X} : P(\widetilde{X}=X)=1\}} \rho(\widetilde{X}) \leq 0\},$$
which we now compare with $C$ given in \eqref{eq:Clawinv}.

If  ${\P} \ll P$, we have $C=C^P \cap \mathcal{X}$. Indeed, it always holds that $C \subseteq C^P \cap \mathcal{X}$. To prove $C^P \cap \mathcal{X} \subseteq C$, let $X \in C^P \cap \mathcal{X}$. Since all $P$-nullsets are ${\P}$-nullsets and in turn also $Q$-nullsets for all $Q \in \mathcal{Q}^{\P}$, we have for all $Q \in \mathcal{Q}^{\P}$ and for \emph{all} $\widetilde{X}$ satisfying $P(\widetilde{X}=X)=1$, that
$\E_Q[-\widetilde{X}]=\E_Q[-X]$ and thus in turn
$$
\rho(\widetilde{X})=\sup_{Q\in \mathcal{Q}^{\P}}\{\E_Q[-X]-\widetilde{\alpha}(Q)\}=\rho(X).
$$

Since $\rho(\widetilde{X})$  assumes the same value for every $\widetilde{X}$, we have $\rho^P(X)=\rho(\widetilde{X}) $ which is $\leq 0$ as $X \in C^P \cap \mathcal{X}$ . Thus $\rho(X) \leq 0$ as well, whence $X\in C$. This proves the claim and implies that for all ${\P} \ll P$,  $\rho^P|_{\mathcal{X}}=\rho^{\P}|_{\mathcal{X}}=\rho$ and similarly $\rho^P=\rho^{\P}|_{L^{\infty}(P)}$. 

  In the case  ${\P} \perp P$, $C \subset C^P=L^{\infty}(P)$. Indeed, since  ${\P} \perp P$, there exists a set $B$ with $P(B)=1 $ and $\P(B^c)=1$. Let $X \in L^{\infty}(P)$ and choose $\widetilde{X} \in \mathcal{X}$ such $X=\widetilde{X}$ on $B$ and
$\widetilde{X} \geq -\inf\widetilde{\alpha}(Q) $
 on $B^c$. For example, take, for each $n\geq1$,  $\widetilde{X}^n=X\ind_B+n\ind_{B^c}$. 
 As $-\inf\widetilde{\alpha}(Q)<\infty$ we have that, for all $n$ large enough, $n\geq-\inf\widetilde{\alpha}(Q)$.
By taking the infimum over all $n$, it follows that
 $$\rho^P(X) = \inf_{\{\widetilde{X} \, | \, P(\widetilde{X}=X)=1 \}} \sup_{Q \in \mathcal{Q}^{\P}}\{ \E_Q[-\widetilde{X}]- \widetilde{\alpha}(Q)\}\leq \inf_{n\geq 1}(-n)-\inf_{Q \in \mathcal{Q}^{\P}}\widetilde{\alpha}(Q) \equiv -\infty.$$

 Next we analyze the relation between $\widehat{\rho}^P $ and $\rho$ in the cases ${\P} \ll P$
 and ${\P} \perp P$, which in turn will allow us to relate also $\rho^P$ with $\widehat{\rho}^P$.  For $\widehat{\rho}^P$, the penalty function $\alpha^P$ is given by \eqref{eq:alphaR}.

If ${\P}\ll P$, then  $\mathcal{Q}^{\P}\cap\mathcal{P}^{P}=\mathcal{Q}^{\P}$, thus $\alpha^P=\alpha$ and $\widehat{\rho}^P|_{\mathcal{X}}=\rho$ and $\widehat{\rho}^P=\rho^{\P}|_{L^{\infty}(P)}$. Since  $\rho^P=\rho^{\P}|_{L^{\infty}(P)}$ as proved above, we also get $\widehat{\rho}^P= \rho^P$.

If ${\P} \perp P$, then $\mathcal{Q}^{\P}\cap\mathcal{P}^{P}=\emptyset$, yielding $\alpha^{P}(Q)=+\infty$ for all $Q\in\cM_{1}$ and in turn $\widehat{\rho}^P\equiv-\infty$ such that $\widehat{C}^P=L^{\infty}(P)$. This already proves the equality  $C^P= \widehat{C}^P$ and $\rho^P=\widehat{\rho}^P$ for ${\P} \perp P$.
The assertion concerning the minimax equality follows from Lemma \ref{lem:relation}.

 Finally, consider an arbitrary $P \in \mathcal{M}_1$. Then by Lebesgue's decomposition theorem we can decompose $P$ into $P=\widetilde{P}_1 + \widetilde{P}_2$ with $\widetilde{P}_1 \ll P $ and $\widetilde{P}_2 \perp P$. We denote by $P_1$ and  by $P_2$ the corresponding normalized probability measures. As above there exists some set $B$ such that $P_2(B)=1$ and $\P(B)=P_1(B)=0$. Hence $\P(B^c)=1$ and $P_1(B^c)=1$. 
 Thus, for all $Q \in \mathcal{Q}^{\P}$ and for all $\widetilde{X}$ with $P(\widetilde{X}=X)=1$, we have $\E_Q[-\widetilde{X}]=\E_Q[-\widetilde{X} \ind_{B^c}]$ since $\E_Q[-\widetilde{X} \ind_{B}]=0$ due to the singularity between $P_2$ and $Q$.
 We therefore get
 \[
 \rho(\widetilde{X})=\sup_{Q\in \mathcal{Q}^{\P}}\{\E_Q[-\widetilde{X}]-\widetilde{\alpha}(Q)\}=\rho(\widetilde{X}\ind_{B^c}),
 \]
 whence $\rho^P=\rho^{P_1}$ on $L^{\infty}(P)$ since we only need that $\widetilde{X}=X$, $P_1$-almost surely.
 As $P_1 \ll \P$, we thus have $\rho^P \leq \rho^{\P}$ on $L^{\infty}(P) \cap L^{\infty}(\P)$.
 This proves the last assertion since $\widehat{\rho}^P \leq \rho^P$ holds by Lemma \ref{lem:relation}.
\end{proof}

As indicated in Remark \ref{remark:orderetal} (iii)
the relation between $\rho^P$ and $\widehat{\rho}^P$ is more subtle for general $P \in \mathcal{M}_1$. 
We consider in the next proposition the case when $P$  is absolutely continuous with respect to the reference measure $\P$ and derive sufficient conditions when equality holds true. By  Remark \ref{remark:orderetal} (iii) this then translates also to the general case.

\begin{proposition}\label{lemma_dual_regulator}
Let $\P \in \mathcal{M}_1$ be a reference measure and let $\alpha$ be of form \eqref{eq:alphacanonical1} with $\mathcal{Q}^{\P}$ some weak-$*$-closed subset of $\mathcal{P}^{\P}$. 
Let $P \ll \P$ and consider for $Q \in  \mathcal{Q}^{\P} \setminus (\mathcal{Q}^{\P} \cap \mathcal{P}^P) $  the Lebesgue decomposition into  $Q=\widetilde{Q}_1+\widetilde{Q}_2$ where $\widetilde{Q}_1,\widetilde{Q}_2$ are nonnegative measures such that $\widetilde{Q}_1 \ll P$ and $\widetilde{Q}_2 \perp P$.
If

\begin{enumerate}
\item
there exists a set $B \in \mathcal{F}$ such that
$P(B)=1$ and $\widetilde{Q}_2(B)=0$ for the singular parts $\widetilde{Q}_2$ of all $Q \in \mathcal{Q}^{\P} \setminus (\mathcal{Q}^{\P} \cap \mathcal{P}^P)$,

\item $\inf_{\mathcal{Q}^{\P}} \widetilde{\alpha}(Q)= \inf_{\mathcal{Q}^{\P} \cap \mathcal{P}^P} \widetilde{\alpha}(Q) $,  
\end{enumerate}
then
$C^P= \widehat{C}^P $ and thus
$
\rho^P = \widehat{\rho}^P= \widetilde{\rho}^P$. Hence,
the minimax equality
\begin{align*}
\widetilde{\rho}^P(X)&=\sup_{Q \in \mathcal{Q}^{\P}}\inf_{\{\widetilde{X} \in \mathcal{X} : P(\widetilde{X}=X)=1\}} \{\mathbb{E}_Q[-\widetilde{X}]-\widetilde{\alpha}(Q)\}\\
&= \inf_{\{\widetilde{X}   \in \mathcal{X}: P(\widetilde{X}=X)=1\}} \sup_{Q \in \mathcal{Q}^{\P}}\{\mathbb{E}_Q[-\widetilde{X}]-\widetilde{\alpha}(Q)\} =\rho^P(X),
\end{align*}
holds true.
\end{proposition}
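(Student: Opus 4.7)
The plan is to invoke Lemma \ref{lem:relation}, which already gives the chain $\widehat\rho^P(X)\le\widetilde\rho^P(X)\le\rho^P(X)$ in full generality, so that only the reverse inequality $\rho^P(X)\le\widehat\rho^P(X)$ (equivalently $\widehat C^P\subseteq C^P$) remains to be proved; the asserted minimax identity then follows automatically. Fix $X\in L^\infty(P)$, pick a bounded measurable representative (still denoted $X$), let $r:=\widehat\rho^P(X)$, and set $I:=\mathcal Q^{\P}\cap\mathcal P^P$, $J:=\mathcal Q^{\P}\setminus I$.

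I would test $\rho^P(X)$ against the family of admissible modifications $\widetilde X^c:=X\,\ind_B+c\,\ind_{B^c}\in\mathcal X$, $c>0$, where $B$ is the common set from Assumption (1); admissibility is immediate from $P(B)=1$. For $Q\in I$, $Q(B^c)=0$ yields $\E_Q[-\widetilde X^c]=\E_Q[-X]$, so the contribution of $I$ to $\rho(\widetilde X^c)$ equals exactly $r$. For $Q\in J$, combining $\widetilde Q_1\ll P$ with Assumption (1) (so $\widetilde Q_2(B)=0$) gives
\begin{equation*}
\E_Q[-\widetilde X^c]=\E_{\widetilde Q_1}[-X]-c\,\widetilde Q_2(\Omega),
\end{equation*}
and therefore $\rho(\widetilde X^c)=\max\!\bigl(r,A_c\bigr)$ with
\begin{equation*}
A_c:=\sup_{Q\in J}\bigl\{\E_{\widetilde Q_1}[-X]-c\,\widetilde Q_2(\Omega)-\widetilde\alpha(Q)\bigr\}.
\end{equation*}
Since $\rho^P(X)\le \inf_{c>0}\rho(\widetilde X^c)=\max\bigl(r,\inf_{c>0}A_c\bigr)$, the statement reduces to $\inf_{c>0}A_c\le r$.

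The main obstacle is the uniform control of $A_c$ as $c\to\infty$. On $\{Q\in J:\widetilde Q_2(\Omega)\ge\varepsilon\}$ the bounds $\E_{\widetilde Q_1}[-X]\le\|X\|_{L^\infty(P)}$ and $-\widetilde\alpha(Q)\le-\inf_{\mathcal Q^{\P}}\widetilde\alpha<\infty$ (finite by the properties of the penalty recorded after \eqref{eq:alphacanonical1}) combine with $-c\,\widetilde Q_2(\Omega)\le-c\varepsilon$ to force the summand below $r$ once $c$ is large, so this regime is routine. The genuine difficulty is the regime $\widetilde Q_2(\Omega)\downarrow 0$, where the correction $-c\,\widetilde Q_2(\Omega)$ no longer helps; there I would extract from an $A_c$-maximising sequence $Q_n\in J$ with $s_n:=\widetilde Q_2^{(n)}(\Omega)\to 0$ a weak-$*$-accumulation point $Q^*\in\mathcal Q^{\P}$ using weak-$*$-closedness, argue that $Q^*\in I$, and combine lower semicontinuity of $\widetilde\alpha$ along $Q_n\to Q^*$ with Assumption (2)---which guarantees that any $\widetilde\alpha$-infimising behaviour near $Q^*$ can equivalently be realised inside $I$---to conclude
\begin{equation*}
\limsup_{n}\bigl(\E_{\widetilde Q_1^{(n)}}[-X]-\widetilde\alpha(Q_n)\bigr)\le \E_{Q^*}[-X]-\widetilde\alpha(Q^*)\le r.
\end{equation*}
Assumption (2) enters precisely here: without it, a sequence in $J$ that minimises $\widetilde\alpha$ could force $\widetilde\alpha(Q^*)$ strictly above the value needed to bound the estimate by $r$. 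Letting $c\to\infty$ then yields $\inf_c A_c\le r$, hence $\rho^P(X)\le r=\widehat\rho^P(X)$, which together with Lemma \ref{lem:relation} gives $\rho^P=\widetilde\rho^P=\widehat\rho^P$ and the minimax equality as required.
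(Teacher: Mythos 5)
Your opening moves coincide with the paper's: reduce via Lemma~\ref{lem:relation} to $\rho^P\le\widehat\rho^P$, test against $\widetilde X^c=X\ind_B+c\ind_{B^c}$ using the set $B$ from Assumption (i), split $\mathcal Q^{\P}$ into $I=\mathcal Q^{\P}\cap\mathcal P^P$ and $J$, and identify the regime $\widetilde Q_2(\Omega)\downarrow0$ as the crux. The gap is in how you resolve that crux. Your limit-extraction step needs four things that the hypotheses do not provide: (a) a weak-$*$ accumulation point $Q^*$ of the maximising sequence --- weak-$*$ \emph{closedness} of $\mathcal Q^{\P}$ does not yield accumulation points; one needs weak-$*$ compactness of $\mathcal M_1$, i.e.\ compactness of $\Omega$, which is precisely what this proposition avoids assuming (see Remark~\ref{remark:orderetal} (i)); (b) lower semicontinuity of $\widetilde\alpha$, which is the hypothesis of the alternative route via Corollary~\ref{th:minimax1} and is \emph{not} assumed here; (c) the claim $Q^*\in I$, which fails in general because absolute continuity with respect to $P$ is not weak-$*$ closed and the Lebesgue decomposition is not weak-$*$ continuous, so $\widetilde Q_2^{(n)}(\Omega)\to0$ along the sequence says nothing about the singular part of the limit; and (d) convergence $\E_{\widetilde Q_1^{(n)}}[-X]\to\E_{Q^*}[-X]$, which weak-$*$ convergence only delivers for continuous bounded integrands, not for a merely bounded measurable $X$. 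Your description of where Assumption (ii) enters is also inconsistent with this scheme: if $Q^*\in I$ were established, the bound $\E_{Q^*}[-X]-\widetilde\alpha(Q^*)\le r$ would follow from the definition of $r$ alone, leaving (ii) with no role.

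The paper closes the argument with no limit extraction and no topology at all. For $Q\in J$ it replaces the coupled expression by the decoupled affine bound $(1-\widetilde Q_2(\Omega))\sup_{R\in I}\E_R[-\widetilde X\ind_B]-\widetilde Q_2(\Omega)\,n-\inf_{\mathcal Q^{\P}}\widetilde\alpha$, observes that an affine function of $t=\widetilde Q_2(\Omega)\in[0,1]$ is maximised at an endpoint, and then takes the infimum over $n$ to eliminate the endpoint $t=1$; what remains is $\sup_{R\in I}\E_R[-X]-\inf_{\mathcal Q^{\P}}\widetilde\alpha$, and Assumption (ii) is exactly what identifies the global infimum $\inf_{\mathcal Q^{\P}}\widetilde\alpha$ with $\inf_{I}\widetilde\alpha$ so that the expression can be matched with $\widehat\rho^P(X)$. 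To repair your proof you would need to replace the compactness/semicontinuity step by a decoupled estimate of this kind.
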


\begin{remark}\label{remark_countable}
Concerning Condition (i) of Proposition \ref{lemma_dual_regulator}, recall that $\widetilde{Q}_2 \perp P$ means that there are disjoint sets $A(\widetilde{Q}_2)$ and $B(\widetilde{Q}_2)$ such that $A(\widetilde{Q}_2) \cup B(\widetilde{Q}_2)=\Omega$ and $P(A(\widetilde{Q}_2))=0$ and $\widetilde{Q}_2(B(\widetilde{Q}_2))=0$.
If $\bigcup_{\widetilde{Q}_2} A(\widetilde{Q}_2)$ can be described by a countable union, i.e.
\[
\bigcup_{\widetilde{Q}_2} A(\widetilde{Q}_2)= \bigcup_{i=1}^{\infty} A(\widetilde{Q}_2^i),
\]
then $P(\bigcup_{\widetilde{Q}_2} A(\widetilde{Q}_2))=0$ and the set $B$ with $P(B)=1$ can be chosen as $B=\bigcap_{i=1}^{\infty} B(\widetilde{Q}_2^i)$. This clearly always works when $\Omega$ is countable and thus the Condition (i) is in this case always satisfied.

Observe additionally that Condition (ii) is satisfied for any coherent risk measure where $\widetilde{\alpha}(Q)$ is chosen to be $0$ for all $Q\in\mathcal{Q}^{\P}$.
\end{remark}

\begin{proof}
As shown in Lemma \ref{lem:relation}, $\rho^P \geq \widehat{\rho}^P$, hence we only have to prove the converse inequality.
As $\alpha^{P}(Q)$ is given by \eqref{eq:alphaR}, we have
$$
\widehat{\rho}^P(X)=\sup_{Q \in \mathcal{Q}^{\P} \cap\mathcal{P}^{P}} \{\E_Q[-X]-\widetilde{\alpha}(Q)\},
$$
while 
$\rho^P$ is given by
\[
\rho^P(X)=
\inf_{\{\widetilde{X} \in \mathcal{X} : P(\widetilde{X}=X)=1\}}\sup_{Q \in \mathcal{Q}^{\P}} \{\E_Q[-\widetilde{X}]-\widetilde{\alpha}(Q)\}.
\]
Consider a measure
 $Q \in \mathcal{Q}^{\P} \setminus (\mathcal{Q}^{\P}\cap\mathcal{P}^{P})$ and denote by ${Q}_i$ its normalized Lebesgue decomposition, i.e. $Q_i= \frac{\widetilde{Q}_i}{\widetilde{Q}_i (\Omega)}$ for $i=1,2$. Then due to the assumption that there exists a  set $B$ with $P(B)=1$ and  $\widetilde{Q}_2(B)=0$ for the singular parts of all $Q \in \mathcal{Q}^{\P} \setminus (\mathcal{Q}^{\P}\cap\mathcal{P}^{P})$, we can consider $\widetilde{X}=X$ on $B$ and $\widetilde{X} =n \in \mathbb{R}$  on $B^c$. This yields
\begin{align*}
\sup_{Q \in\mathcal{Q}^{\P}} \{\E_Q[-\widetilde{X}]-\widetilde{\alpha}(Q)\} &= \sup_{Q \in\mathcal{Q}^{\P}}\{\widetilde{Q}_1 (\Omega)\E_{{Q}_1}[-\widetilde{X}1_B]+ \widetilde{Q}_2 (\Omega)\E_{{Q}_2}[-\widetilde{X}1_{B^c}]-\widetilde{\alpha}(Q)\}\\
&\leq  \sup_{\widetilde{Q}_2(\Omega) \in [0,1]}\big\{ (1-\widetilde{Q}_2 (\Omega)) \sup_{Q \in\mathcal{Q}^{\P} \cap \mathcal{P}^P}\E_{Q}[-\widetilde{X}1_B]\\
&\quad - \widetilde{Q}_2 (\Omega)n -\inf_{Q \in \mathcal{Q}^{\P}}\widetilde{\alpha}(Q)\big \}.
\end{align*}
We now optimize the last expression over $\widetilde{Q}_2(\Omega)$, which leads to
\[
\widetilde{Q}_2(\Omega)=\begin{cases}
0 & \quad  \text{if } n \geq -\sup_{Q \in \mathcal{Q}^{\P} \cap \mathcal{P}^P} \E_{Q}[-X],\\
1 & \quad \text{else}.
\end{cases}
\]
and
\begin{align*}
&\sup_{Q \in\mathcal{Q}^{\P}} \{\E_Q[-\widetilde{X}]-\widetilde{\alpha}(Q)\} \\
&\quad \leq \begin{cases}
\sup_{Q \in\mathcal{Q}^{\P} \cap \mathcal{P}^P} \E_{Q}[-X]-\inf_{Q \in \mathcal{Q}^{\P}} \widetilde{\alpha}(Q) & \quad  \text{if } n \geq -\sup_{Q \in \mathcal{Q}^{\P} \cap \mathcal{P}^P} \E_{Q}[-X],\\
-n -\inf_{Q \in \mathcal{Q}^{\P}} \widetilde{\alpha}(Q) & \quad \text{else}.
\end{cases}
\end{align*}
Taking the infimum over $n$ on both sides (recall that $\widetilde{X}=n$ on $B^c$) then yields,
\begin{align*}
\rho^P(X)&=\inf_{\{\widetilde{X} \in \mathcal{X} : P(\widetilde{X}=X)=1\}} \sup_{Q \in\mathcal{Q}^{\P}} \{\E_Q[-\widetilde{X}]-\widetilde{\alpha}(Q)\} \leq
\inf_{\{\widetilde{X} =X 1_B+ n 1_{B^c}\}} \sup_{Q \in\mathcal{Q}^{\P}} \{\E_Q[-\widetilde{X}]-\widetilde{\alpha}(Q)\} \\
&\leq \sup_{Q \in\mathcal{Q}^{\P} \cap \mathcal{P}^P} \E_{Q}[-X]-\inf_{Q \in \mathcal{Q}^{\P}} \widetilde{\alpha}(Q)=\sup_{Q \in\mathcal{Q}^{\P} \cap \mathcal{P}^P} \{\E_{Q}[-X]- \widetilde{\alpha}(Q)\}\\&= \widehat{\rho}^P(X)
\end{align*}
where the second last equality follows from Assumption (ii) and shows $\rho^P \leq \widehat{\rho}^P$. The assertion concerning the minimax equality also follows from this lemma.
\end{proof}
Alternatively to the above propositions   one can apply again  Sion's minimax result to determine conditions for the equality of $\rho^P$ and $\widehat{\rho}^P$.
In this case fewer  assumptions are needed on the set $\mathcal{Q}^{\P}$ but more on the penalty function $\widetilde{\alpha}$ and $\Omega$. Note that we do not need to distinguish between the different relations to the initial measure $\P$.

\begin{corollary}\label{th:minimax1}
Suppose that $\Omega$ is compact. Let $\P \in \mathcal{M}_1$ be a reference measure and let $\alpha$ be of form \eqref{eq:alphacanonical1} with $\mathcal{Q}^{\P}$ some weak-$*$-closed convex subset of $\mathcal{P}^{\P}$.
If  $Q \mapsto \widetilde{\alpha}(Q)$ is lower semicontinuous  (with respect to the weak-$*$-topology) and convex,
then for any $P \in \mathcal{M}_1$
the minimax equality
\begin{align*}
\widehat{\rho}^P(X)=\widetilde{\rho}^P(X)&=\sup_{Q \in \mathcal{Q}^{\P}}\inf_{\{\widetilde{X} \in \mathcal{X} : P(\widetilde{X}=X)=1\}} \{\mathbb{E}_Q[-\widetilde{X}]-\widetilde{\alpha}(Q)\}\\
&= \inf_{\{\widetilde{X}   \in \mathcal{X}: P(\widetilde{X}=X)=1\}} \sup_{Q \in \mathcal{Q}^{\P}}\{\mathbb{E}_Q[-\widetilde{X}]-\widetilde{\alpha}(Q)\} =\rho^P(X),
\end{align*}
holds true.
\end{corollary}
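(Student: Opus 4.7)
The plan is to reduce this to a direct application of Sion's minimax theorem, essentially mimicking the proof of Proposition \ref{th:minimax} but with the smaller set $\mathcal{Q}^{\P}$ in place of $\mathcal{M}_1$. The equality $\widehat{\rho}^P(X)=\widetilde{\rho}^P(X)$ is already furnished by Lemma \ref{lem:relation} (noting that the singular-part argument there automatically restricts the supremum to $\mathcal{P}^P$, and $\mathcal{Q}^{\P}\cap\mathcal{P}^P$ coincides with $\mathcal{Q}^{\P}\cap\mathcal{P}^P$ as used in the penalty $\alpha^P$). Hence the only work is to prove the minimax equality
\[
\sup_{Q\in\mathcal{Q}^{\P}}\inf_{\{\widetilde{X}\in\mathcal{X}\,:\,P(\widetilde{X}=X)=1\}}\{\E_Q[-\widetilde{X}]-\widetilde{\alpha}(Q)\}=\inf_{\{\widetilde{X}\in\mathcal{X}\,:\,P(\widetilde{X}=X)=1\}}\sup_{Q\in\mathcal{Q}^{\P}}\{\E_Q[-\widetilde{X}]-\widetilde{\alpha}(Q)\}.
\]

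First I would set $F(Q,\widetilde{X}):=\E_Q[-\widetilde{X}]-\widetilde{\alpha}(Q)$ and verify the hypotheses of Sion's minimax theorem on $\mathcal{Q}^{\P}\times\{\widetilde{X}\in\mathcal{X}:P(\widetilde{X}=X)=1\}$. On the $Q$-side: $\mathcal{Q}^{\P}$ is convex by assumption, and weak-$*$-compact, because it is a weak-$*$-closed subset of $\mathcal{M}_1\subseteq C(\Omega)^*$ and the latter is weak-$*$-compact by Banach--Alaoglu thanks to the compactness of $\Omega$. The map $Q\mapsto F(Q,\widetilde{X})$ is concave as the sum of a linear functional and the concave function $-\widetilde{\alpha}$, and upper semicontinuous since $Q\mapsto\E_Q[-\widetilde{X}]$ is continuous in the weak-$*$-topology (as $\widetilde{X}$ is bounded measurable; one may, if needed, first argue for continuous $\widetilde{X}$ and then pass to bounded measurable ones via a standard approximation, using compactness of $\Omega$) and $\widetilde{\alpha}$ is lower semicontinuous by assumption. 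On the $\widetilde{X}$-side: the constraint set is clearly convex, and $\widetilde{X}\mapsto F(Q,\widetilde{X})$ is affine, hence convex.

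With these verifications Sion's minimax theorem (see \cite{S:58, RS:13}) delivers the desired equality, and combining with Lemma \ref{lem:relation} gives the chain $\widehat{\rho}^P(X)=\widetilde{\rho}^P(X)=\rho^P(X)$. The main obstacle I anticipate is the upper semicontinuity of $Q\mapsto\E_Q[-\widetilde{X}]$ in the weak-$*$-topology when $\widetilde{X}$ is only bounded measurable rather than continuous; strictly speaking, weak-$*$ convergence of probability measures on $C(\Omega)^*$ only tests against $C(\Omega)$. If the paper interprets the weak-$*$-topology in the sense used throughout (which on compact $\Omega$ agrees with weak convergence of probability measures, pairing only with $C(\Omega)$), a small additional argument is needed: since the constraint set $\{\widetilde{X}\in\mathcal{X}:P(\widetilde{X}=X)=1\}$ is unaffected by modifying $\widetilde{X}$ on $P$-null sets, and Sion's theorem only requires upper semicontinuity of $Q\mapsto F(Q,\widetilde{X})$ for each fixed $\widetilde{X}$, it suffices to note that $Q\mapsto \E_Q[-\widetilde{X}]$ is in fact continuous on $\mathcal{Q}^{\P}$ in any topology finer than or equal to the one under which $\widetilde{\alpha}$ is lower semicontinuous and in which $\mathcal{Q}^{\P}$ is compact; alternatively one replaces $\widetilde{X}$ by a sup/inf of continuous approximations. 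Once this technicality is handled, the corollary follows immediately.
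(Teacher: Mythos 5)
Your proposal is correct and takes essentially the same route as the paper: the paper's proof of Corollary \ref{th:minimax1} simply says it is analogous to Proposition \ref{th:minimax} with $\alpha$ replaced by $\widetilde{\alpha}$ and with the observation that the weak-$*$-closed set $\mathcal{Q}^{\P}$ is compact because $\Omega$ is, which is exactly your application of Sion's theorem combined with Lemma \ref{lem:relation}. The semicontinuity subtlety you flag for $Q\mapsto\E_Q[-\widetilde{X}]$ with $\widetilde{X}$ merely bounded measurable is real but is equally left implicit in the paper's own proof of Proposition \ref{th:minimax}, so it is not a deviation from the paper's argument.
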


\begin{proof}
The proof is analogous to Proposition \ref{th:minimax} by replacing $\alpha$ by $\widetilde{\alpha}$ and noting that the weak-$*$-closed set $\mathcal{Q}^{\P} \subseteq \mathcal{M}_1$ is compact, since $\Omega$ is compact.

\end{proof}

\begin{remark}
Similarly as in Remark \ref{rem:minalpha} all conditions of Corollary \ref{th:minimax1}  are automatically satisfied, 
if $\Omega$ is compact and $\alpha$ happens to be $\alpha_{\min}$.
This is because $\alpha_{\min}$ is convex and weak-$*$-lower semicontinuous (see Theorem 2.3.1 in \cite{Z:02}), which translates to $\widetilde{\alpha}$ then.  
\end{remark}

\section{Examples}\label{sec:examples}

In the following we shall give well-known examples of risk measures to illustrate the above comparison results.

 We here additionally discuss possible differences between
starting  with a fixed acceptance set or rather with a fixed penalty function to define the initial risk measure.

\subsection{Superhedging price as risk measure}\label{ref:superhedging}

As above let $(\Omega,\mathcal{F})$ be fixed and $\P$ a reference measure.

Let $S=(S_{t})_{t\ge0}$ denote the  price process of an asset and let $T$ be a finite time horizon. We denote by $\mathcal{H}$ the set of admissible trading strategies (see e.g. \cite{FS:2011}) and by $(H\bullet S)_{T}$ the stochastic integral of $H\in\mathcal{H}$ with respect to $S$, representing the gains and losses up to time $T$, obtained from trading in $S$ according to the strategy $H$. We recall here the standard definition of the superhedging price.

\begin{definition}[Superhedging price]\label{superhedg_def}
	Let $\P \in \cM_{1} $ and $X\in L^{\infty}(\P)$ be a European contingent claim. Then the superhedging price $\overline{\pi}(X)\in\RR$ of $X$ is defined as
	\begin{equation}\label{eq:superhedge}
		\overline{\pi}(X)=\inf\{x\in\RR : \exists H\in\mathcal{H} \ \text{with}\ X\le x+ (H\bullet S)_{T}, \, \P\text{-a.s.}\}.
	\end{equation}
\end{definition}

Notice that by the superhedging duality (see \cite{DS:1994}, \cite{DS:1999}), $\overline{\pi}(X)$ can be equivalently written as
\begin{equation*}
	\overline{\pi}(X)=\sup_{Q\in\cM_{a}(\P)}\EE_{Q}[X],
\end{equation*}
where $\cM_{a}(\P)$ denotes the set of absolutely continuous separating measures for $S$ with respect to the reference measure $\P$, see for instance \cite{DS:1994, K:1997, DS:1999}. This already gives the dual representation (with the corresponding $\alpha_{min}$).
 
By the previous definition, for a financial agent
``superhedging" means to find a self-financing strategy with minimal initial capital
which covers any possible future obligation resulting from selling
a European contingent claim. \\
 
We now show how the superhedging price can be embedded in our frameworks. There are two possibilities: one which does not exploit convexity and starts just with a monetary risk measure coming from an acceptance set to introduce $\rho_1^{\P}$, fully in spirit of Section \ref{fixrisk}; and the second which uses convexity and the minimal penalty function, in spirit of Section \ref{fixpenalty} to define in turn $\rho_2^{\P}$ and $\widehat{\rho}_2^{\P}$. We shall show that they all coincide.

We start by defining an acceptance set via
\begin{equation*}
	C:=\{X\in\mathcal{X} :\exists H\in\mathcal{H} \
	\text{with} \ (H\bullet S)_{T}(\omega)\ge X(\omega), \ \forall\omega\in\Omega\},
\end{equation*}
where the stochastic integral is here understood in discrete time, so that it is also well-defined in a pointwise sense.
This then induces the corresponding risk measure
$\rho_1:\mathcal{X}\to\RR$
via
\begin{equation}\label{eq:suphedge1}
	\rho_1(X)=\inf\{m\in\RR : m+X\in C\}.
\end{equation}
For the reference measure $\P$, we then introduce according to Definition~\ref{rhoP}
\[
\rho^{\P}_1(X)=\inf_{\{\widetilde{X} \in \mathcal{X} : \P(\widetilde{X}=X)=1\}}\rho_1(\widetilde{X}),
\]
which is equal to \eqref{eq:superhedge} since there the inequality has to hold only $\P$-a.s.\\

 Since the superhedging price is a coherent risk measure, and thus convex we can introduce it via its dual representation. Fix $\alpha$ in  \eqref{eq:convexriskm} to be the following (minimal) penalty function
 \begin{align}\label{eq:alphasuperhedge}
 \alpha(Q)= 	\begin{cases}
		0 \quad	&\text{for} \ Q\in\cM_{a}(\P), \\
		+\infty \quad &\text{for} \ Q\in\mathcal{M}_{1}\setminus \cM_{a}(\P),
	\end{cases}
	\end{align}
with $\cM_{a}(\P)$ denoting again the set of absolutely continuous
separating measures with respect to $\P$, which correspond in the setup of discrete time to absolutely continuous
martingale measures with respect to $\P$. 
Pursuing the second approach we can define a  risk measure $\rho_2$ (pointwise) via \eqref{eq:convexriskm} using the penalty function \eqref{eq:alphasuperhedge}.
Then $\alpha^{\P}$ as of \eqref{alphaP} is equal to $\alpha$ and thus

	\begin{align*}
		\widehat{\rho}_2^{\P}(X)&=\sup_{Q\in\cM_{1}}\{\mathbb{E}_{Q}[-X]-\alpha^{\P}(Q)\}\\
		&=\sup_{Q\in\cM_{a}(\P)}\mathbb{E}_{Q}[-X].
	\end{align*}
Note that we could have started with any function $\alpha$ such that $\alpha^{\P}$ as defined in \eqref{alphaP}  is equal to the right hand side of \eqref{eq:alphasuperhedge} to get the same expression for $\widehat{\rho}_2^{\P}$. In particular the values outside $\mathcal{P}^{\P}$ do not matter. Using $\rho_2$, we can of course also consider
$\rho_2^{\P}$ defined via
\begin{equation*}
	\rho_2^{\P}(X)=\inf_{\{\widetilde{X} \in \mathcal{X} : \P(\widetilde{X}=X)=1\}}\rho_2(\widetilde{X})=\inf_{\{\widetilde{X}   \in \mathcal{X}: \P(\widetilde{X}=X)=1\}}\sup_{Q \in \cM_{1}}\{\mathbb{E}[-\widetilde{X}]-\alpha(Q)\}.
\end{equation*}
Due to the current choice of $\alpha$, we can then verify that $\rho^{\P}_1=\rho_2^{\P}=\widehat{\rho}_2^{\P}$.
Indeed, since $\alpha$ is of form \eqref{eq:alphacanonical1}, $\rho_2^{\P}=\widehat{\rho}_2^{\P}$ follows from Proposition \ref{prop:PvsR} by setting $P=\P$,  and $\rho^{\P}_1=\rho_2^{\P}$ by checking that the respective acceptance sets are equal.\\

In the following we introduce an additional measure $P\ll \P$ and
consider the setup of Proposition \ref{lemma_dual_regulator} as well as Corollary \ref{th:minimax1}  for the superhedging price in a simple trinomial model.

\begin{example}\label{ex:trinomial}

 Let us consider an arbitrage-free one-period trinomial model $S=(S_t)_{t=0,1}$, with zero interest rate, under some reference measure $\P$ which assigns positive probability to each scenario $\omega_i$ for $i=1,2,3$. Here, $\omega_1$ corresponds to the up-scenario, $\omega_2 $ to the middle one and $\omega_3$ to the down-scenario. We suppose that $\mathcal{F}_0$ is trivial.
Let $P\ll \P$ be such that $P(\{\omega_1, \omega_3\})=1$ and $P(\{\omega_2\})=0$.
 We consider as risk measure $\rho$ again the superhedging price and introduce it via \eqref{eq:convexriskm} with $\alpha$ of form \eqref{eq:alphasuperhedge}.
 
 Note that in this case the conditions of Proposition \ref{lemma_dual_regulator} hold true. Indeed,  the first condition is clearly satisfied by  Remark \ref{remark_countable} as $\lvert \Omega \lvert=3$. The second condition is also fulfilled since $\widetilde{\alpha}\equiv 0$ on $\mathcal{Q}^{\P}=\cM_{a}(\P)$. The assumption of Corollary \ref{th:minimax1} on $\widetilde{\alpha}$ is thus clearly also satisfied.

Consider now the concrete example
$S_0=2$, $S_1(\omega_1)=4$, $S_1(\omega_2)=3$ and $S_1(\omega_3)=1$. Then for all $Q \in \mathcal{M}_a(\P)$, we have $Q(\{\omega_1\})\in [0,\frac{1}{3}]$,
$Q(\{\omega_2\})= \frac{1- 3 Q(\{\omega_1\})}{2}$ and
$Q(\{\omega_3\})=\frac{1 + Q(\{\omega_1\})}{2}$. For $Q(\{\omega_1\})=\frac{1}{3}$ it holds that $Q \in \mathcal{M}_a({\P}) \cap \mathcal{P}^P$ and this is actually the only element therein. Take now some claim
$X \in L^{\infty}(P)$.
Then since $\alpha^P$ is given by \eqref{eq:alphaR}
\[
\widehat{\rho}^P(X)=-\frac{1}{3} X(\omega_1) -\frac{2}{3} X(\omega_3),
\]
while
\begin{align*}
\rho^P(X)&=\inf_{X(\omega_2)\in \mathbb{R}} \sup_{Q(\{\omega_1\}) \in [0, \frac{1}{3}]}(- Q(\{\omega_1\}) X(\omega_1)- \frac{1- 3 Q(\{\omega_1\})}{2} X(\omega_2)  - \frac{1+ Q(\{\omega_1\})}{2}X(\omega_3)) \\
&=\inf_{X(\omega_2)\in \mathbb{R}} \sup_{Q(\{\omega_1\}) \in [0, \frac{1}{3}]}
\left(-\frac{1}{2}(X(\omega_2)+X(\omega_3))+  Q(\{\omega_1\})(-X(\omega_1)+ \frac{3}{2} X(\omega_2)-  \frac{1}{2} X(\omega_3))\right)\\
&=\inf_{X(\omega_2)\in \mathbb{R}}\begin{cases}
-\frac{1}{2}(X(\omega_2)+X(\omega_3)) & \quad \text{if } X(\omega_2) \leq \frac{2}{3}X(\omega_1) + \frac{1}{3} X(\omega_3),\\
-\frac{1}{3} X(\omega_1) -\frac{2}{3} X(\omega_3) & \quad \text{else.}
\end{cases}\\
&= -\frac{1}{3} X(\omega_1) -\frac{2}{3} X(\omega_3),
\end{align*}
showing that we have indeed equality.

\end{example}

In the following we provide an example showing that Assumption (ii) of Lemma \ref{lemma_dual_regulator} or the lower-semicontinuity assumption of Theorem \ref{th:minimax1} on $\widetilde{\alpha}$ is crucial to have equality.

\begin{example}
Consider exactly the same trinomial model as  in Example \ref{ex:trinomial} above. Define however the  penalty function $\alpha:\cM_{1}\to\RR\cup\{+\infty\}$ as follows
	\begin{align*}
	\alpha(Q)&=
	\begin{cases}
		\frac{Q(\{\omega_2\})^2}{2}-1 \quad	&\text{for} \ Q\in\cM_{a}(\P) \text{ s.t. } Q(\{\omega_2\}) >0, \\
		0 \quad	&\text{for} \ Q\in\cM_{a}(\P) \text{ s.t. } Q(\{\omega_2\}) =0,  \\
		+\infty \quad &\text{for} \ Q\in\mathcal{M}_{1}\setminus \cM_{a}(\P).
	\end{cases}
\end{align*}
Note that with this choice $\widehat{\rho}^P$ still corresponds to the superhedging price, which is in this case the price of the perfect replication. To show that $\rho^P$ is different, consider the trivial claim $X \in L^{\infty}(P)$, i.e. $X(\omega_1)=0$ and $X(\omega_3)=0$.
Then $\widehat{\rho}^P= 0$ while
\begin{align*}
\rho^P(X)&=\inf_{X(\omega_2)\in \mathbb{R}} \sup_{Q(\{\omega_2\}) \in [0, \frac{1}{2}]}\left(- Q(\{\omega_2\}) X(\omega_2) - \frac{Q(\{\omega_2\})^2}{2}+1\right) =1,
\end{align*}
showing that Condition (ii) of Proposition~\ref{lemma_dual_regulator}  is in general needed for equality. Note that $ Q(\{\omega_2\}) \mapsto \alpha(Q(\{\omega_2\}))$ fails to be lower semi-continuous at $0$, whence the Condition of Corollary \ref{th:minimax1} are also not satisfied.

\end{example}

\subsection{Expected Shortfall}\label{sec:ES}

As next example we consider Expected Shortfall  (ES) of level $\lambda\in(0,1)$ for some reference measure $\P$, another example of a coherent risk measure.  Fix its (minimal) penalty function
\begin{equation}\label{eq:alphaES1}
		\alpha(Q)=\begin{cases}
			0 \quad	&\text{for} \
			Q\in\mathcal{Q}^{\P}, \\
			+\infty \quad &\text{for} \ Q\in\mathcal{M}_{1}\setminus
			\mathcal{Q}^{\P},
		\end{cases}
\end{equation}
where $\mathcal{Q}^{\P}:=\{Q\in\cM_{1} : Q\ll \P, \ \text{and}\ \frac{\mathrm{d}Q}{\mathrm{d}\P}\le \lambda^{-1}, \P\text{-a.s.}\}$. Then we define $\rho$ via \eqref{eq:convexriskm}, i.e. $\rho(X)= \sup_{Q\in \mathcal{M}_1}\{\mathbb{E}_{Q}[-X]- \alpha(Q)\}$ for $X \in \mathcal{X}$
and  since $\alpha=\alpha^{\P}$, $\widehat{\rho}^{\P}$ is given by  
\begin{equation*}
	\widehat{\rho}^{\P}(X)=\sup_{Q\in \mathcal{Q}^{\P}}\mathbb{E}_{Q}[-X], \quad X \in L^{\infty}(\P).
\end{equation*}
Additionally, we introduce $\rho^{\P}$ according to the first approach as
\begin{equation*}
	\rho^{\P}(X)=\inf_{\{\widetilde{X}   \in \mathcal{X}: \P(\widetilde{X}=X)=1\}} \rho(X)=\inf_{\{\widetilde{X}   \in \mathcal{X}: \P(\widetilde{X}=X)=1\}}\sup_{Q\in\cM_{1}}\{\mathbb{E}_{Q}[-\widetilde{X}]-\alpha(Q)\}.
\end{equation*}

As $\alpha$ is of form \eqref{eq:alphacanonical1}, we can conclude by applying Proposition \ref{prop:PvsR} with $P=\P$ that $\rho^{\P}=\widehat{\rho}^{\P}$.

Also in the context of ES we now introduce a new measure $P \ll \P$ and consider
the setup of Lemma \ref{lemma_dual_regulator} and Corollary \ref{th:minimax1}  in a simple trinomial model, where by the same arguments as in Example \ref{ex:trinomial} all conditions are satisfied. We here also aim to show the equality of $\rho^P$ and $\widehat{\rho}^P$ by an explicit calculation.

\begin{example}

For some measure $P \ll \P $, let $\alpha^P $ be given by \eqref{eq:alphaR} with $\alpha$ of form \eqref{eq:alphaES1} and consider
\begin{equation*}
	\widehat{\rho}^{P}(X):=\sup_{Q\in \mathcal{M}_{1}}\{\EE_{Q}[-X]-\alpha^{P}(Q)\}.
\end{equation*}
Our goal is to show explicitly  equality with $$\rho^P(X)= \inf_{\{\widetilde{X}   \in \mathcal{X}: P(\widetilde{X}=X)=1\}}\sup_{Q\in\cM_{1}}\{\mathbb{E}_{Q}[-\widetilde{X}]-\alpha(Q)\}$$ in the concrete trinomial setup of Example \ref{ex:trinomial}.
Let $\P$ be such that $\P(\{\omega_{i}\})=\frac{1}{3}$ for all $i=1,2,3$  and $P$ as above, i.e.  $P(\{\omega_1,\omega_3\})=1$ and $P(\{\omega_2\})=0$.
Then the elements $Q\in  \mathcal{Q}^{\P} \cap\mathcal{P}^{P}$ are simply such that
\begin{equation*}
	Q(\{\omega_2\})=0, \qquad \frac{Q(\{\omega_i\})}{\P(\{\omega_i\})}\le\frac{1}{\lambda}, \ \text{for}\ i=1,3,
\end{equation*}
with the constraints $Q(\{\omega_{i}\})\ge0$ for $i=1,3$ and $Q(\{\omega_{1}\})+Q(\{\omega_{3}\})=1$. Notice that if we fix $\lambda=2/3$, then $\mathcal{Q}^{\P} \cap\mathcal{P}^{P}=\{Q^{\ast}\}$ with
\begin{equation*}
		Q^{\ast}(\{\omega_{1}\})=Q^{\ast}(\{\omega_{3}\})=\frac{1}{2},\qquad Q^{\ast}(\{\omega_{2}\})=0.
\end{equation*}
For a claim  $X\in L^{\infty}(P)$, we thus have
\[
\widehat{\rho}^P(X)=-\frac{1}{2}(X(\omega_1) +X (\omega_3))
\]
while
\begin{align*}
\rho^P(X)&=\inf_{X(\omega_2)\in \mathbb{P}}\ \sup_{Q \in \mathcal{Q}^{\P}} \{-Q(\{\omega_1\}) X(\omega_1) -Q(\{\omega_2\}) X(\omega_2)-Q(\{\omega_3\}) X(\omega_3)\}.
\end{align*}
To compute $\rho^{P}(X)$, let us consider first the inner part which can be seen as a linear programming problem. Observe that the objective functional is a linear function in $Q(\omega_{i})$ for all $i=1,2,3$ and that $X(\omega_{1})$ and $X(\omega_{3})$ are fixed. Additionally recall that the set of probability measures with $\Omega$ as above, can be identified with the two-dimensional simplex and that $\mathcal{Q}^{\P}=\{Q\in \cM_{1} : Q(\{\omega_{i}\})\le\frac{1}{2}, \ \forall i=1,2,3\}$. Therefore the supremum will be attained on one of the three extremal points of $\mathcal{Q}^{\P}$, namely in those probability measures under which two scenarios occur with probability $\frac{1}{2}$ and the third with probability zero. For these reasons 
\begin{equation*}
    \sup_{Q \in \mathcal{Q}^{\P}} - \sum_{j=1}^{3}Q(\{\omega_j\}) X(\omega_j)=-\frac{1}{2}\max\{X(\omega_{1})+X(\omega_{3}),  X(\omega_{1})+X(\omega_2), X(\omega_{3})+X(\omega_2)\}.
\end{equation*}
Let us now consider the outer problem with the infimum, hence $\rho^{P}(X)$ can be equivalently rewritten as
\begin{align*}
    \rho^{P}(X)&=-\frac{1}{2}\sup_{X(\omega_{2})\in\mathbb{R}} \min\{X(\omega_{1})+X(\omega_{3}),  X(\omega_{1})+X(\omega_2), X(\omega_{3})+X(\omega_2)\}.
\end{align*}
Let us consider without loss of generality two cases, namely $X(\omega_{1})<X(\omega_{3})$ and $X(\omega_{1})\ge X(\omega_{3})$. In the first one \begin{equation*} we have
    \min\{X(\omega_{1})+X(\omega_{3}),  X(\omega_{1})+X(\omega_2), X(\omega_{3})+X(\omega_2)\}=\begin{cases}
X(\omega_{1})+X(\omega_{3}) \ \ \text{if } X(\omega_3) \leq X(\omega_2),\\
X(\omega_{1})+X(\omega_{2}) \ \ \text{if } X(\omega_3) > X(\omega_2).
\end{cases}
\end{equation*}
Similarly in the second case, we get
\begin{equation*}
    \min\{X(\omega_{1})+X(\omega_{3}),  X(\omega_{1})+X(\omega_2), X(\omega_{3})+X(\omega_2)\}=\begin{cases}
X(\omega_{1})+X(\omega_{3}) \ \ \text{if } X(\omega_1) \leq X(\omega_2),\\
X(\omega_{2})+X(\omega_{3})  \ \ \text{if } X(\omega_1) > X(\omega_2).
\end{cases}
\end{equation*}
Therefore in both cases by taking the supremum over $X(\omega_{2})\in\mathbb{R}$ we have that $$\rho^P(X)=-\frac{1}{2}(X(\omega_1)+X(\omega_3)),$$
which coincides with $\widehat{\rho}^{P}(X)$ as claimed.

\end{example}

\subsection{The worst case risk measure}\label{sec:worstcase}
The worst case risk measure is an additional example of a risk measure which is part of our framework. Similarly as for the superhedging price it can be introduced via an acceptance set,  in spirit of Section \ref{fixrisk} or via a particular penalty function in spirit of  Section \ref{fixpenalty}.

In the following we fix $a\ge0$ to cover a shifted version of the original worst case risk measure, but we stress that for $a=0$ the definition coincides with the one considered in \cite{FS:2011}. Let $C_{a}$ be the convex acceptance set defined as
\begin{equation*}
	C_{a}:=\{X\in\mathcal{X}: X(\omega)\ge a, \ \forall \omega\in\Omega\}.
\end{equation*}
Then the corresponding risk measure is given by
\begin{equation}\label{eq:worst1}
	\rho_{\max,1}(X)=\inf\{m\in\RR : m+X\in C_{a}\}
\end{equation}
or equivalently by
\begin{equation*}
	\rho_{\max,1}(X)=-\inf_{\omega\in\Omega}X(\omega)+a.
\end{equation*}

If we fix the penalty function 
$\alpha(Q)= - a$ for all $Q \in \mathcal{M}_1$, we can define $$\rho_{max,2}(X):=\sup_{Q\in\cM_{1}}\{\mathbb{E}_{Q}[-X] -\alpha(Q)\}=\sup_{Q\in\cM_{1}}\mathbb{E}_{Q}[-X] + a$$ 
and see that this coincides with \eqref{eq:worst1}.

For a reference measure $\P$, we can then introduce according to Definition~\ref{rhoP} the acceptance set to be 
\begin{equation*}
	C^{\P}=\{X\in L^{\infty}(\P): X\ge a,\ \P\text{-a.s.}\}
\end{equation*}
and
\begin{equation}\label{wcrmeasure}
	\rho_{\max, 1}^{\P}(X):=\inf_{\{\widetilde{X} \in \mathcal{X} : \P(\widetilde{X}=X)=1\}}\rho_{\max,1}(\widetilde{X})=- \P\text{-}\essinf_{\omega \in \Omega} X(\omega)+a.
\end{equation}

Since $\rho_{\max,1}=\rho_{\max, 2}$, it clearly holds that
\begin{equation*}
	\rho_{\max,1}^{\P}(X)=\rho_{\max,2}^{\P}(X)=\inf_{\{\widetilde{X}   \in \mathcal{X}: \P(\widetilde{X}=X)=1\}}\sup_{Q\in \cM_{1}}\mathbb{E}[-\widetilde{X}]+a.
\end{equation*}

Starting from the second approach we can additionally introduce  for all $X\in L^{\infty}(\P)$,
\begin{equation*}
	\widehat{\rho}_{\max,2}^{\P}(X)=\sup_{Q\in \mathcal{P}^{\P}}\mathbb{E}[-X]+a.
\end{equation*}
This also coincides with $\rho_{\max,1}^{\P}(X)=\rho_{\max,2}^{\P}(X)$ which follows  from Proposition \ref{prop:PvsR} by setting $\widetilde{\alpha}=-a$ on $\mathcal{P}^{\P}$.\\

\section{Robust risk measures and their classical representation}\label{sec:robust}

We shall now use the above approaches of the defining risk measures with respect to certain sets of  reference measures to include model uncertainty. To this end assume that we
are given a family of probability measures $\cM \subset\cM_{1}$ defined on the same measurable space  $(\Omega, \mathcal{F})$. 
In order
to be able to consider all risk measures for all $P\in\cM$, we choose $X$ to be in the space $\mathcal{X}^{\cM}$, where
\begin{equation}\label{X_tilde}
	\mathcal{X}^{\cM}:=\bigcap_{P\in \cM}L^{\infty}(P).
\end{equation}
Note that if $X\in \mathcal{X}^{\cM}$ then, for all $P\in\cM$, we have that $P(X\in\mathcal{X})=1$.
Let us now define the robust risk measure
$\rho^{\cM}$ as the supremum over all risk measures $\rho^{P}$, with $P\in\cM$ and similarly $\widehat{\rho}^{\cM}$  as the supremum over all risk measures $\widehat{\rho}^{P}$.

\begin{definition}[Robust risk measures]\label{robust_rm}
	Let $X\in\mathcal{X}^{\cM}$. Define the robust risk measure of $X$
	according to the class of models $\cM$ as
	\begin{equation*}
		\rho^{\cM}(X):=\sup_{P\in \cM}\rho^{P}(X).
	\end{equation*}

	Similarly, let $X\in\mathcal{X}^{\cM}$ and fix $\alpha:\cM_{1}\to\RR\cup\{+\infty\}$. Define the robust risk measure of $X$ according to the class of model $\cM$ as
	\begin{align*}
		\widehat{\rho}^{\cM}(X):&=\sup_{P\in\cM}\widehat{\rho}^{P}(X)\\
		&=\sup_{Q\in \mathcal{M}_{1}}\{\EE_{Q}[-X]-\alpha^{\cM}(Q)\},
	\end{align*}
where
\begin{align*}
	\alpha^{\cM}(Q)=
	\begin{cases}
		\alpha(Q) \quad	&\text{for} \
		Q\in\mathcal{P}^{\cM}, \\
		+\infty \quad &\text{for} \ Q\in\mathcal{M}_{1}\setminus  \mathcal{P}^{\cM},
	\end{cases}
\end{align*}
with $\mathcal{P}^{\cM}:=\bigcup_{P\in \cM}\mathcal{P}^{P}$.

\end{definition}
\begin{remark}
Observe that the previous robust risk measures fall into the setup of \emph{generalized risk measures} introduced recently by \cite{FLW:2021}. Let $\cM_{1}$ be the set of probability measure here assumed to be atomless, and $2^{\cM_{1}}$ the collection of subsets of $\cM_{1}$.  A generalized risk measure is a map $\Psi:\mathcal{X}\times 2^{\cM_{1}}\to \mathbb{R}$. Consider the following axioms 
\begin{itemize}
    \item[(A1)] $\Psi(X\lvert\mathcal{Q})\le\Psi(X\lvert\mathcal{R})$, for all $X\in\mathcal{X}$, $\mathcal{Q}\subset \mathcal{R}\subset \cM_{1}$;
    \item[(A2)] $\Psi(X\lvert\{P\})\le\Psi(Y\lvert\{P\})\ \forall P\in\mathcal{Q} \implies \Psi(X\lvert\mathcal{Q})\le\Psi(Y\lvert\mathcal{Q})$, for any $X,Y\in\mathcal{X}$;
    \item[(A3)]$\Psi(X\lvert\mathcal{Q})\le \sup_{P\in\mathcal{Q}}\Psi(X\lvert\{P\})$, for all $X\in\mathcal{X}$ and $\mathcal{Q}\subset\cM_{1}$.
\end{itemize}
In \cite{FLW:2021} is shown that if $\Psi$ satisfies A1-A3, then $\Psi(X\lvert \mathcal{Q})=\sup_{P\in\mathcal{Q}}\Psi(X\lvert\{P\})$. By setting $\mathcal{Q}=\cM$ and $\Psi(X\lvert\{P\}):=\rho^{P}(X)$ in our setup we notice immediately that 
$$\Psi(X\lvert \mathcal{M})=\sup_{P\in\cM}\Psi(X\lvert\{P\})=\sup_{P\in\cM}\rho^{P}(X),$$
can be seen as generalized risk measure on $\mathcal{X}\times 2^{\cM_{1}}$. Similar considerations hold for $\widehat{\rho}^{P}$.\\
In Section \ref{sec:classical}, under certain assumptions on $\cM$, we will address the problem of finding a measure $\PP$ such that 
\begin{equation*}
    \Psi(X\lvert \mathcal{M})=\Psi(X\lvert\{\PP\}), \qquad \forall X\in L^{\infty}(\PP),
\end{equation*}
i.e. we ask when the above generalized risk measure has a representation as a classical risk measure.
\end{remark}

Let us now introduce an extended set of probability measures induced by $\cM$, namely 
\begin{equation}\label{cloconvM}
	c\mathcal{M}:=\overline{conv}(\cM),
\end{equation}
where   $\overline{conv}$ denotes the set of all probability measures which are countable convex combinations of the elements in $\cM$.

Sometimes we shall replace $\mathcal{M}$ via the set $c\mathcal{M}$ in the definition of the robust risk measure $\rho^{\cM}$, i.e.~
$$\rho^{c\cM}(X)=
\sup_{P\in c\cM}\rho^{P}(X), \qquad \forall X\in
\mathcal{X}^{c\cM}$$ and similarly for $\widehat{\rho}^{c\cM}$.
Observe that  $\rho^{{\cM}}(X)\leq \rho^{c\cM}(X)$ for $X\in
\mathcal{X}^{c\cM}$ with a possible strict inequality for certain $X \in \mathcal{X}^{c\cM}$ and the analogous statement holds for $\widehat{\rho}^{c\cM}$.

In the following we give  an example where the inequality is strict even with finite convex combinations. We consider in particular the case of $\Omega$ being compact and we choose the risk measure to be coherent. We work with $\widehat{\rho}^{P}$ but remark  that the next example also holds true for $\rho^{P}$, which follows from Corollary \ref{th:minimax1} or Proposition \ref{prop:PvsR} with $R=P$.

\begin{example}
Let $(\Omega,\mathcal{F})=([0,1],\mathcal{B}([0,1]))$ and suppose $\cM$ is induced by a parameter set $\Theta$ with a $\sigma$-algebra $\mathcal{D}$, i.e.,
	 $(\Theta,\mathcal{D})=([0,1],\mathcal{B}([0,1]))$ as well.
	 Let $N>2$ and $\gamma=\frac1{N}$. We define $\cM:=\{P^{\theta} : \theta\in\Theta\}$ where $P^{\theta}$ is given as follows. For all $A\in\mathcal{B}([0,1])$
	\begin{equation*}
		P^{\theta}(A):=\begin{cases}\frac2{\gamma} \Lambda(A\cap [\theta,\theta+\frac{\gamma}2])&\text{ for all $\theta\in[0,1-\frac{\gamma}2]$},\\
\frac2{\gamma}\Lambda(A\cap[\theta,1])&\text{ for all $\theta\in(1-\frac{\gamma}{2},1]$},\end{cases}
\end{equation*}
	where $\Lambda$ denotes the Lebesgue measure on $[0,1]$. We set in the following $\widehat{\rho}^{\theta}:=\widehat{\rho}^{P^{\theta}}$ for all $\theta\in\Theta$ and $\widehat{\rho}^{\cM}$ as above. Additionally with $VaR_{\gamma}^{\Lambda}$ we refer to the Value at Risk of level $\gamma\in (0,1)$ under $\Lambda$.\\

	Let $X:=-\ind_{[0,2\gamma]}$, then
	\begin{equation*}
		VaR_{\gamma}^{\Lambda}(X)=\inf\{m\in\RR : \Lambda(X+m<0)<\gamma\}=1.
	\end{equation*}
	Indeed for all $\varepsilon>0$,
	\begin{equation*}
		\Lambda(-\ind_{[0,2\gamma]}+1-\varepsilon<0)=\Lambda(-\varepsilon\ind_{[0,2\gamma]}+(1-\varepsilon)\ind_{(2\gamma,1]}<0)=\Lambda([0,2\gamma])=2\gamma>\gamma,
	\end{equation*}
	and
	\begin{equation*}
		\Lambda(-\ind_{[0,2\gamma]}+1<0)=0<\gamma.
	\end{equation*}
	Moreover, for all $\beta\in(0,\gamma]$ we have that $VaR_{\beta}^{\Lambda}(X)=1$, thus $\rho(X)=1$ where $\rho$ is the Expected Shortfall of level $\gamma$ as defined in \ref{sec:ES} via the set $\mathcal{Q}^{\Lambda}$. Define similarly as in Section \ref{sec:ES}
	\begin{equation*}
		\widehat{\rho}^{\theta}(X)=\sup_{Q\in \mathcal{M}_{1}}\{\EE_{Q}[-X]-\alpha^{\theta}(Q)\},
	\end{equation*}
	where
	$$\alpha^{\theta}(Q)=\begin{cases}
		0 \quad	&\text{for} \
		Q\in\mathcal{Q}^{\Lambda}\cap\mathcal{P}^{\theta}, \\
		+\infty \quad &\text{for} \ Q\in\mathcal{M}_{1}\setminus({Q}^{\Lambda}\cap\mathcal{P}^{\theta}),
	\end{cases}$$
with
\begin{equation*}
	\mathcal{Q}^{\Lambda}=\{Q\in\cM_{1}: Q\ll \Lambda,\ \text{and}\ \frac{\mathrm{d}Q}{\mathrm{d}\Lambda}\le\gamma^{-1}, \Lambda-a.s.\}.
\end{equation*}
Let us consider now $Q\in \mathcal{Q}^{\Lambda}\cap\mathcal{P}^{\theta}$. First of all since $Q\in \mathcal{Q}^{\Lambda}$ we notice that
\begin{align*}
	Q\left([\theta,\theta+\frac{\gamma}{2}]\right)&=\EE_{\Lambda}\left[\frac{\mathrm{d}Q}{\mathrm{d}\Lambda}\ind_{[\theta,\theta+\gamma/2]}\right]\\
	&\le \frac{1}{\gamma}\cdot\frac{\gamma}{2}=\frac{1}{2}.
\end{align*}
On the other hand since $Q\in \mathcal{P}^{\theta}$, we have that $\operatorname{supp}(Q)\subseteq[\theta,\theta+\frac{\gamma}{2}]$ yielding a contradiction as
\begin{equation*}
	Q\left([\theta,\theta+\frac{\gamma}{2}]\right)=1.
\end{equation*}
Hence $ \mathcal{Q}^{\Lambda}\cap\mathcal{P}^{\theta}=\emptyset$ for all $\theta\in[0,1]$, meaning that the penalty function always attains infinite values, namely $\alpha^{\theta}(Q)=+\infty$ for all $Q\in\cM_{1}$ and all $\theta\in[0,1]$. Therefore $\widehat{\rho}^{\theta}(X)=-\infty$ and consequently $\widehat{\rho}^{\cM}(X)=-\infty$.\\
Let us now consider $c\cM=\overline{conv}(\cM)$ and show that $\Lambda\in c\cM$. To this extent fix the sequence $(\tilde{\theta}_{k})_{k=0}^{2N}\subseteq[0,1]$ such that for all $k=0,\dots,2N$ we define $\tilde{\theta}_{k}:=k\frac{\gamma}{2}=\frac{k}{2N}$. Then for every $A\in\mathcal{B}([0,1])$
\begin{equation*}
	\Lambda(A)=\frac{\gamma}{2}\sum_{k=1}^{2N}\frac{2}{\gamma}\Lambda\left(A\cap \left[\tilde{\theta}_{k},\tilde{\theta}_{k}+\frac{\gamma}{2}\right]\right)=\sum_{k=1}^{2N}\frac{\gamma}{2}P^{\tilde{\theta}_{k}}(A),
\end{equation*}
hence $\Lambda\in c\cM$. In particular $\widehat{\rho}^{c\cM}(X)=\widehat{\rho}^{\Lambda}(X)=\rho(X)=1>\widehat{\rho}^{{\cM}}(X)$.
\end{example}

\subsection{Classical representation}\label{sec:classical}

Let us now come to one of our main goals namely to show equality of the robust risk measure (with respect to a set $\cM$) with a classical risk measure with respect to one single probability measure $\mathbb{P}$.
In this section we analyze under which conditions this is the case.

Let us start by recalling a version of the well known result of
\cite{HS:1949} (see Lemma~7).

\begin{theorem}[\cite{HS:1949}]\label{hs}
	Let $(\Omega,\mathcal{F},\mathbb{P})$ be a probability space. Let $\mathcal{M}$
	be a set of $\mathbb{P}$-absolutely continuous probability measures on
	$(\Omega,\mathcal{F})$ that is closed under countable convex combinations.
	Suppose that, for each $A\in\mathcal{F}$ with $\mathbb{P}(A)>0$, there exists
	$P\in\mathcal{M}$  (depending on $A$) with $P(A)>0$. Then there exists
	$P^0\in\mathcal{M}$ such that, for all $A\in\mathcal{F}$ with
	$\mathbb{P}(A)>0$, we have that $P^0(A)>0$, that is $P^0$ and $\mathbb{P}$ are
	equivalent probability measures.
\end{theorem}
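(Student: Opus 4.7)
The plan is to use an exhaustion argument based on Radon--Nikod\'ym densities. Since every $P \in \mathcal{M}$ satisfies $P \ll \mathbb{P}$, each has a density $g_P := \mathrm{d}P/\mathrm{d}\mathbb{P} \in L^1(\mathbb{P})$, and I associate with every such $P$ its support set $A_P := \{g_P > 0\} \in \mathcal{F}$, well-defined up to $\mathbb{P}$-null sets. The goal becomes finding $P^0 \in \mathcal{M}$ whose support $A_{P^0}$ has full $\mathbb{P}$-measure, since then $P^0(A) > 0$ whenever $\mathbb{P}(A) > 0$.

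First I would set
\[
s := \sup_{P \in \mathcal{M}} \mathbb{P}(A_P) \in [0,1]
\]
and choose a sequence $(P_n)_{n \geq 1} \subset \mathcal{M}$ with $\mathbb{P}(A_{P_n}) \to s$. Then I would define
\[
P^0 := \sum_{n=1}^{\infty} 2^{-n} P_n,
\]
which lies in $\mathcal{M}$ by the assumption of closure under countable convex combinations. Its density is $g_{P^0} = \sum_{n \geq 1} 2^{-n} g_{P_n}$, so $A_{P^0} = \bigcup_{n \geq 1} A_{P_n}$ up to a $\mathbb{P}$-null set. In particular $\mathbb{P}(A_{P^0}) \geq \mathbb{P}(A_{P_n})$ for every $n$, hence $\mathbb{P}(A_{P^0}) \geq s$, and by definition of $s$ equality holds.

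The final step is to rule out $s < 1$. Suppose for contradiction that $\mathbb{P}(A_{P^0}^c) > 0$. By the hypothesis of the theorem, there then exists $P' \in \mathcal{M}$ with $P'(A_{P^0}^c) > 0$, i.e.\ $\mathbb{P}(A_{P'} \cap A_{P^0}^c) > 0$. Forming the countable convex combination $\tfrac12 P^0 + \tfrac12 P' \in \mathcal{M}$, its support set equals $A_{P^0} \cup A_{P'}$ up to a $\mathbb{P}$-null set, which has strictly larger $\mathbb{P}$-measure than $s = \mathbb{P}(A_{P^0})$, contradicting the definition of $s$. Hence $\mathbb{P}(A_{P^0}) = 1$, which means $P^0 \sim \mathbb{P}$.

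The step I expect to be the cleanest but most important to get right is the identification of the support of a countable convex combination with the union of the individual supports (modulo $\mathbb{P}$-null sets), since this is what turns the supremum into a maximum and makes the contradiction argument work; the rest is essentially bookkeeping with Radon--Nikod\'ym derivatives.
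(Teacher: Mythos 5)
Your proof is correct and complete: the exhaustion argument via Radon--Nikod\'ym supports, taking a maximizing sequence for $s=\sup_{P\in\mathcal{M}}\mathbb{P}(A_P)$, attaining the supremum by a countable convex combination, and deriving a contradiction from the hypothesis applied to $A_{P^0}^c$ is exactly the standard proof of the Halmos--Savage lemma. The paper itself states this result without proof, citing \cite{HS:1949} (Lemma 7), and your argument is essentially the classical one given there and in standard references, so there is nothing to object to.
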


Next we introduce the notion of \emph{generalized equivalence},   a version of absolute continuity between the elements of $\cM$ and a given probability measure which does not necessarily  belong to $\cM$. It means in particular that there exists a dominating measure for the set $\mathcal{M}$.

\begin{definition}[Generalized equivalence]\label{abscon}
	Let $\cM$ be a family of probability measures defined on $(\Omega,\mathcal{F})$ and let $\PP$ be a some
	probability measure defined on $(\Omega,\mathcal{F})$. The model $\cM$ and the measure $\PP$ are said to be \emph{generalized equivalent} 
	\begin{enumerate}
	    \item if
$P\ll\PP$ for all
	$P\in\cM$, denoted by $\cM\ll \PP$, and 
	\item if, for $A\in\mathcal{F}$, the condition 
 $P(A)=0$ for all $P\in\cM$
	implies $\PP(A)=0$, denoted by
	$\PP\ll\cM$.
		\end{enumerate}
\end{definition}

Using Theorem \ref{hs} and the notion of generalized equivalence now allows  to formulate the precise statement when $\rho^{\PP}(X)=\rho^{\cM}(X)$ holds.

\begin{theorem}\label{th:nonhat}
Let $\cM \subseteq \mathcal{M}_1$ and  $\mathbb{P}$ a probability measure on $(\Omega, \mathcal{F})$. Suppose $\mathbb{P}$ is generalized equivalent to $\cM$ and that $\cM$ is closed under countable convex combinations. Then for all $X\in L^{\infty}(\PP)$, 
\begin{equation*}
    \rho^{\PP}(X)=\rho^{\cM}(X).
\end{equation*}
\end{theorem}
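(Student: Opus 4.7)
The plan is to prove the two inequalities $\rho^{\cM}(X) \le \rho^{\PP}(X)$ and $\rho^{\PP}(X) \le \rho^{\cM}(X)$ separately, with the order structure built into the definition of $\rho^P$ doing almost all of the work, and the Halmos--Savage theorem supplying the single measure that realizes the supremum.

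For the inequality $\rho^{\cM}(X) \le \rho^{\PP}(X)$, I would simply invoke the absolute-continuity half of the generalized equivalence hypothesis: since $P \ll \PP$ for every $P \in \cM$, Remark~\ref{abs:fixset} gives $\rho^{P}(X) \le \rho^{\PP}(X)$ for each $P \in \cM$ and each $X \in L^\infty(\PP)$ (noting that $L^\infty(\PP) \subseteq L^\infty(P)$ because every $P$-a.s.\ equivalence class contains the $\PP$-a.s.\ class). Taking the supremum over $P \in \cM$ yields $\rho^{\cM}(X) \le \rho^{\PP}(X)$.

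For the reverse inequality I would apply Theorem~\ref{hs}. The hypothesis $\PP \ll \cM$ in Definition~\ref{abscon} is exactly the contrapositive of the assumption in Theorem~\ref{hs}: for every $A \in \mathcal F$ with $\PP(A) > 0$ there must exist some $P \in \cM$ with $P(A) > 0$, for otherwise $\PP(A) = 0$. Combined with $\cM \ll \PP$ and the assumption that $\cM$ is closed under countable convex combinations, Halmos--Savage produces a measure $P^{0} \in \cM$ with $P^{0} \sim \PP$. By Remark~\ref{abs:fixset} applied in both directions, $\rho^{P^{0}}(X) = \rho^{\PP}(X)$ for every $X \in L^\infty(\PP) = L^\infty(P^{0})$. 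Since $P^{0} \in \cM$, this gives $\rho^{\PP}(X) = \rho^{P^{0}}(X) \le \sup_{P \in \cM} \rho^{P}(X) = \rho^{\cM}(X)$.

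Combining the two inequalities closes the argument. There is no real obstacle: the whole point of the paper's order requirements (i) and (ii) from the introduction is precisely to make this kind of dominating-measure reduction work, and the role of generalized equivalence is to reduce the statement to the classical Halmos--Savage setting. The only subtlety worth flagging in the write-up is the implicit identification $L^\infty(\PP) = L^\infty(P^{0})$ used in the second step, which relies on $P^{0} \sim \PP$ and not merely on $P^{0} \ll \PP$.
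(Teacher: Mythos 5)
Your proposal is correct and follows essentially the same route as the paper's own proof: Halmos--Savage (Theorem~\ref{hs}) yields $P^0 \in \cM$ with $P^0 \sim \PP$, and the order relation from Remark~\ref{abs:fixset} then gives both inequalities. Your remarks on verifying the Halmos--Savage hypothesis from $\PP \ll \cM$ and on the identification $L^\infty(\PP) = L^\infty(P^0)$ are accurate elaborations of steps the paper leaves implicit.
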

\begin{proof}
By Theorem \ref{hs} there exists $P^{0}\in\cM$ such that   $\mathbb{P}$ is equivalent to $P^{0}$. Since in particular $\mathbb{P}$ is dominated by $P^{0}$, by Remark \ref{abs:fixset} we have that for all $X\in L^{\infty}(\PP)$
\begin{equation*}
    \rho^{\mathbb{P}}(X)= \rho^{P^{0}}(X)\le \rho^{\cM}(X).
\end{equation*}
On the other hand since for all $P\in\cM$, $P\ll \mathbb{P}$ then, for all $X\in L^{\infty}(\PP)$
\begin{equation*}
    \rho^{\cM}(X)=\sup_{P\in\cM}\rho^{P}(X)\le\rho^{\mathbb{P}}(X),
\end{equation*}
which concludes the proof.
\end{proof}

\begin{remark}
Note that the above result also holds true when $\rho^{\mathbb{P}}$ and $\rho^{\mathcal{M}}$ are replaced by $\widehat{\rho}^{\mathbb{P}}$ and $\widehat{\rho}^{\mathcal{M}}$, simply by applying the same proof. As it also follows from the next more general result, we only formulated it for $\rho ^{\mathbb{P}}$ and $\rho^{\mathcal{M}}$.
\end{remark}

In the following theorem we characterize when equality    of $\rho^{\PP}(X)$ with $\rho^{\cM}(X)$ holds.

\begin{theorem}\label{prop:criteria}
Let $\cM \subseteq \mathcal{M}_1$ and $\mathbb{P}$ a probability measure on $(\Omega, \mathcal{F})$. Fix a penalty function $\alpha: \mathcal{M}_1 \to \mathbb{R} \cup \{+\infty\}$.
\begin{itemize}
\item
Then, the following assertions are equivalent.
\begin{itemize}
\item[(i)] For all $X \in \mathcal{X}^{\mathcal{M}} \cap L^{\infty}(\PP)$, it holds $\widehat{\rho}^{\cM}(X)=\widehat{\rho}^{\mathbb{P}}(X)$.
\item[(ii)]
\[
\bigcup_{P \in \cM} \mathcal{P}^P \cap \{Q : \alpha(Q) < \infty\}=\mathcal{P}^{\mathbb{P}} \cap \{Q : \alpha(Q) < \infty\}.
\]
\end{itemize}
\item The above assertions are implied by the following condition:
\begin{itemize}
\item[(iii)]
All $P \in \cM$ are dominated by $\mathbb{P}$ and there is a measure $P_0 \in \cM$ such that $\mathbb{P} \ll P_0$.
\end{itemize}
Moreover, if $\alpha$ is finitely valued on $\bigcup_{P \in \cM} \mathcal{P}^{P} \cup \mathcal{P}^{\P}$, (iii) is equivalent to (i) and (ii).
\item Condition (iii) is implied if $\cM$ is closed under countable convex combinations and $\mathbb{P}$ generalized equivalent to $\cM$.
In this case  $\mathcal{X}^{\mathcal{M}} \cap L^{\infty}(\PP)=L^{\infty}(\PP)$.
\end{itemize}
\end{theorem}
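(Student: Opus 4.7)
The plan is to verify the chain of implications by exploiting the elementary observation that
$\widehat{\rho}^{P}(X) = \sup_{Q \in \mathcal{P}^{P} \cap \{\alpha < \infty\}}\{\mathbb{E}_Q[-X] - \alpha(Q)\}$ and, analogously, $\widehat{\rho}^{\cM}(X) = \sup_{Q \in \mathcal{P}^{\cM} \cap \{\alpha < \infty\}}\{\mathbb{E}_Q[-X] - \alpha(Q)\}$ with $\mathcal{P}^{\cM} = \bigcup_{P \in \cM} \mathcal{P}^{P}$, since points where $\alpha(Q) = +\infty$ contribute $-\infty$ to the supremum. Consequently (ii)$\Rightarrow$(i) is immediate, because both suprema are then taken over the same indexing set. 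For (i)$\Rightarrow$(ii) I would argue contrapositively: if some $Q_0 \in \mathcal{P}^{\cM}\cap\{\alpha<\infty\}$ fails to lie in $\mathcal{P}^{\PP}$, pick $A \in \mathcal{F}$ with $\PP(A)=0$ but $Q_0(A)>0$ and consider $X_c = -c\,\mathbf{1}_A \in \mathcal{X}^{\cM}\cap L^\infty(\PP)$; since $X_c \equiv 0$ $\PP$-a.s., $\widehat{\rho}^{\PP}(X_c)$ is independent of $c$, whereas $\widehat{\rho}^{\cM}(X_c) \geq c\,Q_0(A) - \alpha(Q_0) \to \infty$, contradicting (i) for $c$ large. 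The reverse inclusion is the main obstacle and is discussed at the end.

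For (iii)$\Rightarrow$(ii), note that $P \ll \PP$ for all $P \in \cM$ yields $\mathcal{P}^{\cM}\subseteq\mathcal{P}^{\PP}$, while the existence of $P_0\in\cM$ with $\PP\ll P_0$ yields $\mathcal{P}^{\PP}\subseteq\mathcal{P}^{P_0}\subseteq\mathcal{P}^{\cM}$, so that $\mathcal{P}^{\cM}=\mathcal{P}^{\PP}$ and (ii) follows. When $\alpha$ is moreover finitely valued on $\bigcup_{P\in\cM}\mathcal{P}^{P}\cup\mathcal{P}^{\PP}$, condition (ii) reads simply $\mathcal{P}^{\cM}=\mathcal{P}^{\PP}$: applying this to the distinguished element $\PP\in\mathcal{P}^{\PP}=\mathcal{P}^{\cM}$ produces some $P_0\in\cM$ with $\PP\ll P_0$, while any $P\in\cM$ lies in $\mathcal{P}^{P}\subseteq\mathcal{P}^{\PP}$, whence $P\ll\PP$, recovering (iii).

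For the last assertion, assume $\cM$ is closed under countable convex combinations and $\PP$ is generalized equivalent to $\cM$ in the sense of Definition~\ref{abscon}. Then $\cM\ll\PP$ delivers the first half of (iii), while $\PP\ll\cM$ guarantees that every $A\in\mathcal{F}$ with $\PP(A)>0$ is charged by some $P\in\cM$; this places us exactly in the hypotheses of Theorem~\ref{hs}, which furnishes $P_0\in\cM$ equivalent to $\PP$ and thereby completes (iii). For the identification $\mathcal{X}^{\cM}\cap L^\infty(\PP)=L^\infty(\PP)$, any $X\in L^\infty(\PP)$ admits a bounded measurable representative, and $P\ll\PP$ for every $P\in\cM$ forces that representative to be $P$-essentially bounded, so $X\in\mathcal{X}^{\cM}$.

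The main obstacle is completing (i)$\Rightarrow$(ii) when $Q_0\in\mathcal{P}^{\PP}\cap\{\alpha<\infty\}$ but $Q_0\notin\mathcal{P}^{\cM}$: here the null sets witnessing $Q_0\not\ll P$ vary with $P\in\cM$, so the single-indicator test function used in the first subcase is no longer available. My tentative approach is to test against rescaled truncated Radon--Nikodym densities $X_c = -c\bigl(\mathrm{d}Q_0/\mathrm{d}\PP \wedge K\bigr)$, exploit $\inf\alpha>-\infty$ in order to divide through by $c\to\infty$, and thereby extract the dominance $\sup_{Q\in\mathcal{P}^{\cM}\cap\{\alpha<\infty\}}\mathbb{E}_Q[g]\geq\mathbb{E}_{Q_0}[g]$ for every bounded nonnegative measurable $g$; promoting this weak inequality to the desired membership $Q_0\in\mathcal{P}^{\cM}$ via a Hahn--Banach-type separation (and possibly an additional closure hypothesis implicit in the setup) is where the genuine difficulty lies.
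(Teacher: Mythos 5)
The parts you actually complete are correct, and in two places you go beyond what the paper writes down: the recovery of (iii) from (ii) when $\alpha$ is finitely valued (testing $\PP\in\mathcal{P}^{\PP}=\bigcup_{P\in\cM}\mathcal{P}^{P}$ to produce $P_0$, and $P\in\mathcal{P}^{P}\subseteq\mathcal{P}^{\PP}$ to get $P\ll\PP$) and the verification that $\mathcal{X}^{\cM}\cap L^{\infty}(\PP)=L^{\infty}(\PP)$ are both argued more explicitly than in the paper, which essentially only records (iii)$\Rightarrow$(ii) and the Halmos--Savage step. Your (ii)$\Rightarrow$(i), your (iii)$\Rightarrow$(ii), and the first half of your contrapositive for (i)$\Rightarrow$(ii) (killing a $Q_0\in\bigcup_{P\in\cM}\mathcal{P}^{P}\cap\{\alpha<\infty\}$ with $Q_0\not\ll\PP$ by testing against $-c\,\ind_A$) are all sound.

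The gap is exactly where you locate it, and it cannot be closed by the route you sketch: condition (i) only tells you that two suprema of the affine functionals $Q\mapsto\E_Q[-X]-\alpha(Q)$ agree, and this can never distinguish a family of measures from (roughly) its closed convex hull. Concretely, take $\Omega=\{1,2\}$, $\alpha\equiv 0$, $\cM=\{\delta_1,\delta_2\}$ and $\PP=\tfrac12(\delta_1+\delta_2)$: then $\widehat{\rho}^{\cM}(X)=\max(-X(1),-X(2))=\widehat{\rho}^{\PP}(X)$ for every $X$, so (i) holds, yet $\PP$ itself lies in $\mathcal{P}^{\PP}\cap\{\alpha<\infty\}$ and is dominated by no element of $\cM$, so (ii) fails. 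Thus the inclusion $\mathcal{P}^{\PP}\cap\{\alpha<\infty\}\subseteq\bigcup_{P\in\cM}\mathcal{P}^{P}\cap\{\alpha<\infty\}$ that you flag as ``the genuine difficulty'' is not merely hard --- it is false without an additional closure hypothesis on $\cM$ (e.g.\ closedness under countable convex combinations, which is precisely what the paper's other bullet points supply in the situations where the theorem is actually applied). For comparison, the paper's own proof of (i)$\Rightarrow$(ii) passes to the acceptance sets $\{X:\E_Q[-X]\le\alpha(Q)\ \forall Q\}$ and then asserts equality of the two measure families from equality of these acceptance sets; that bipolar-type step is subject to the same two-point counterexample. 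So your instinct to test against truncated densities and then invoke a separation argument ``possibly with an additional closure hypothesis'' is the right diagnosis, but as submitted the proposal does not prove (i)$\Rightarrow$(ii), and no argument from (i) alone can.
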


\begin{proof}
	$(i)\implies (ii)$. Let $X\in  \mathcal{X}^{\mathcal{M}} \cap L^{\infty}(\PP)$. Consider
	\begin{equation*}
		\widehat{\rho}^{\cM}(X)=\sup_{P \in \cM}\sup_{Q\in\cM_{1}}\{\EE_{Q}[-X]-\alpha^{P}(Q)\}=\sup_{Q\in \mathcal{M}_{1}}\{\EE_{Q}[-X]-\alpha^{\cM}(Q)\},
	\end{equation*}
	where
	\begin{equation}\label{alpha_tag}
		\alpha^{\cM}(Q)=\begin{cases}
			\inf_{P\in\cM}\alpha^{P}(Q) \quad	&\text{for} \
			Q\in\bigcup_{P \in \cM}\mathcal{P}^{P}, \\
			+\infty \quad &\text{for} \ Q\in\mathcal{M}_{1}\setminus \bigcup_{P \in \cM}\mathcal{P}^{P},
		\end{cases}
	\end{equation}
and
\begin{equation*}
	\widehat{\rho}^{\PP}(X)=\sup_{Q\in \mathcal{M}_{1}}\{\EE_{Q}[-X]-\alpha^{\PP}(Q)\},
\end{equation*}
with
	\begin{equation}\label{alpha_P}
	\alpha^{\PP}(Q)=\begin{cases}
		\alpha(Q) \quad	&\text{for} \
		Q\in\mathcal{P}^{\P}, \\
		+\infty \quad &\text{for} \ Q\in\mathcal{M}_{1}\setminus \mathcal{P}^{\P}.
	\end{cases}
\end{equation}

If $(i)$ holds true then,
\begin{equation*}
	\{X\in \mathcal{X}^{\mathcal{M}} \cap L^{\infty}(\PP): \widehat{\rho}^{\PP}(X)\le 0\}=\{X\in \mathcal{X}^{\mathcal{M}} \cap L^{\infty}(\PP): \widehat{\rho}^{\cM}(X)\le 0\}.
\end{equation*}
By the definition of the risk measures the previous yields
\begin{align*}
	&\{X\in \mathcal{X}^{\mathcal{M}} \cap L^{\infty}(\PP): \EE_{Q}[-X]\le\alpha^{\PP}(Q), \forall Q\in\cM_{1}\}\\
	&\quad=\{X\in \mathcal{X}^{\mathcal{M}} \cap L^{\infty}(\PP): \EE_{Q}[-X]\le\alpha^{\cM}(Q), \forall Q \in \cM_{1}\}.
\end{align*}
Hence, by the definition of the penalty functions \eqref{alpha_tag} and \eqref{alpha_P} we obtain
\begin{align*}
	&\{X\in  \mathcal{X}^{\mathcal{M}} \cap L^{\infty}(\PP): \EE_{Q}[-X]\le\alpha(Q),\ \forall Q\in\mathcal{P}_{f}^{\PP}\}\\
	&\quad =\{X\in  \mathcal{X}^{\mathcal{M}} \cap L^{\infty}(\PP): \EE_{Q}[-X]\le\inf_{P\in\cM}\alpha^{P}(Q),\ \forall Q \in \bigcup_{P \in \cM}\mathcal{P}_{f}^{P}\}\\
	&\quad =\{X\in \mathcal{X}^{\mathcal{M}} \cap L^{\infty}(\PP): \EE_{Q}[-X]\le\alpha(Q),\ \forall Q \in \bigcup_{P \in \cM}\mathcal{P}_{f}^{P}\},
\end{align*}
where $\mathcal{P}_{f}^{P}:=\mathcal{P}^{P}\cap \{Q\in\cM_{1} : \alpha(Q)<+\infty\}$ for any $P\in\cM_{1}$. Hence we can conclude that $\mathcal{P}_{f}^{\P}=\bigcup_{P \in \cM}\mathcal{P}_{f}^{P}$.\\
$(ii)\implies(i).$ Let $X\in \mathcal{X}^{\mathcal{M}} \cap L^{\infty}(\PP)$. Then,
\begin{align*}
	\widehat{\rho}^{\cM}(X)=\sup_{P \in \cM}\widehat{\rho}^{P}(X)&=\sup_{Q\in\cM_{1}}\{\EE_{Q}[-X]-\alpha^{\cM}(Q)\}\\
	&=\sup_{Q\in\bigcup_{P \in \cM}\mathcal{P}^{P}}\{\EE_{Q}[-X]-\inf_{P\in\cM}\alpha^{P}(Q)\}\\
	&=\sup_{Q\in\bigcup_{P \in \cM}\mathcal{P}_{f}^{P}}\{\EE_{Q}[-X]-\alpha(Q)\}\\
	&\overset{(\ast)}{=}\sup_{Q\in\mathcal{P}_{f}^{\PP}}\{\EE_{Q}[-X]-\alpha(Q)\}\\
	&=\widehat{\rho}^{\PP}(X),
\end{align*}
where in $(\ast)$ the hypothesis $(ii)$ is employed, yielding the identity between the risk measures. Observe that if $(iii)$ holds, this implies
\begin{equation*}
	\bigcup_{P \in \cM}\mathcal{P}^{P}=\mathcal{P}^{\PP}
\end{equation*}
 and thus $(ii)$ as the previous trivially gives $\bigcup_{P \in \cM}\mathcal{P}_{f}^{P}=\mathcal{P}_{f}^{\PP}$. Notice additionally that if $\cM$ is closed under countable convex combination and $\PP$ is generalized equivalent to $\cM$, then $(iii)$ is satisfied as a consequence of Halmos-Savage's Theorem.
\end{proof}

In the following we apply the above result to the situation considered in Section \ref{sec:comparisionlawinv}, which shows again that the reference measure with respect to which we consider law-invariance qualifies as `worst case probability measure'. Note that for this result the generalized equivalence property is not necessary and it thus works in a  non-dominated situation.

\begin{corollary} \label{cor:worstcaseprob}
Let $\cM \subseteq \mathcal{M}_1$ and let $\mathbb{P} \in \mathcal{M}$ a probability measure on $(\Omega, \mathcal{F})$ with respect to which we consider law-invariance as in Section \ref{sec:comparisionlawinv}. Fix a penalty function $\alpha: \mathcal{M}_1 \to \mathbb{R} \cup \{+\infty\}$ as in \eqref{eq:alphacanonical1} with respect to $\mathbb{P}$. Then for all $X \in  \mathcal{X}^{\mathcal{M}} \cap L^{\infty}(\PP)$ we have $$\widehat{\rho}^{\cM}(X)=\widehat{\rho}^{\mathbb{P}}(X)=\rho^{\mathbb{P}}(X).$$
\end{corollary}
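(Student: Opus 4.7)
The plan is to exploit the very restrictive support of $\alpha$: by assumption $\alpha$ has the form \eqref{eq:alphacanonical1} with respect to $\mathbb{P}$, so $\alpha(Q) = +\infty$ for every $Q$ outside $\mathcal{Q}^{\mathbb{P}} \subseteq \mathcal{P}^{\mathbb{P}}$. This alone will force the upper bound $\widehat{\rho}^P \leq \widehat{\rho}^{\mathbb{P}}$ for \emph{all} $P \in \mathcal{M}$ (not just $P \ll \mathbb{P}$), because the $Q$'s that actually contribute to $\widehat{\rho}^P$ are necessarily in $\mathcal{Q}^{\mathbb{P}}$ and hence are dominated by $\mathbb{P}$.

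First I would observe the trivial lower bound $\widehat{\rho}^{\mathbb{P}}(X) \leq \widehat{\rho}^{\cM}(X)$, which holds just because $\mathbb{P} \in \mathcal{M}$. For the reverse inequality, fix $X \in \mathcal{X}^{\mathcal{M}} \cap L^{\infty}(\PP)$ and any $P \in \mathcal{M}$. Since $\alpha \equiv +\infty$ off $\mathcal{Q}^{\mathbb{P}}$, definition \eqref{eq:rhohat} collapses to
\begin{equation*}
\widehat{\rho}^P(X) \;=\; \sup_{Q \in \mathcal{P}^P \cap \mathcal{Q}^{\mathbb{P}}} \{\EE_Q[-X] - \widetilde{\alpha}(Q)\} \;\leq\; \sup_{Q \in \mathcal{Q}^{\mathbb{P}}} \{\EE_Q[-X] - \widetilde{\alpha}(Q)\} \;=\; \widehat{\rho}^{\mathbb{P}}(X),
\end{equation*}
where in the last equality I use that $\mathcal{Q}^{\mathbb{P}} \subseteq \mathcal{P}^{\mathbb{P}}$ so that $\mathcal{Q}^{\mathbb{P}} \cap \mathcal{P}^{\mathbb{P}} = \mathcal{Q}^{\mathbb{P}}$. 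Taking the supremum over $P \in \mathcal{M}$ yields $\widehat{\rho}^{\cM}(X) \leq \widehat{\rho}^{\mathbb{P}}(X)$, which combined with the trivial direction gives the first equality.

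For the identity $\widehat{\rho}^{\mathbb{P}}(X) = \rho^{\mathbb{P}}(X)$, I would simply invoke Proposition \ref{prop:PvsR} with the choice $P = \mathbb{P}$: clearly $\mathbb{P} \ll \mathbb{P}$, so the hypotheses there are met and the conclusion gives $\rho^{\mathbb{P}} = \widehat{\rho}^{\mathbb{P}}$ on $L^{\infty}(\mathbb{P})$, in particular on $\mathcal{X}^{\mathcal{M}} \cap L^{\infty}(\mathbb{P})$.

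There is really no hard step here; the result follows essentially formally from the definitions, the shape of $\alpha$, and Proposition \ref{prop:PvsR}. The only subtle point worth flagging is that generalized equivalence, which was crucial in Theorem \ref{th:nonhat} and Theorem \ref{prop:criteria}, is bypassed entirely: the ordering that did the work there on the level of probability measures is here encoded directly in the risk measures via the law-invariance-induced support of $\alpha$, so a fully non-dominated family $\mathcal{M}$ is permitted as long as $\mathbb{P} \in \mathcal{M}$.
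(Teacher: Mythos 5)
Your proof is correct and rests on exactly the same observation as the paper's: since $\alpha$ is supported on $\mathcal{Q}^{\mathbb{P}} \subseteq \mathcal{P}^{\mathbb{P}}$ and $\mathbb{P} \in \mathcal{M}$, the effective domain of the penalty function is $\mathcal{Q}^{\mathbb{P}}$ both for $\widehat{\rho}^{\mathcal{M}}$ and for $\widehat{\rho}^{\mathbb{P}}$, and the final equality $\widehat{\rho}^{\mathbb{P}}=\rho^{\mathbb{P}}$ is obtained, as in the paper, from Proposition \ref{prop:PvsR}. The only cosmetic difference is that the paper verifies condition (ii) of Theorem \ref{prop:criteria} and then invokes that theorem, whereas you inline the implication (ii)$\implies$(i) as a direct two-sided inequality.
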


\begin{proof}
We apply Theorem \ref{prop:criteria} and verify (ii). Note that
\[
\bigcup_{P \in \cM} \mathcal{P}^P \cap \{Q : \alpha(Q) < \infty\}= (\bigcup_{P \in \cM}  \mathcal{P}^P) \cap \mathcal{Q}^{\mathbb{P}} = \mathcal{Q}^{\mathbb{P}},
\]
where the latter follows from the fact that $\mathbb{P} \in \mathcal{M}$. 
Moreover, we clearly also have
\[
\mathcal{P}^{\mathbb{P}} \cap \{Q : \alpha(Q) < \infty\} = \mathcal{Q}^{\mathbb{P}},
\]
whence property (ii) is satisfied and the claim follows. The equality $\widehat{\rho}^{\mathbb{P}}(X)=\rho^{\mathbb{P}}(X)$ for all $X\in \mathcal{X}^{\mathcal{M}}\cap L^{\infty}(\P)$ follows from Proposition \ref{prop:PvsR}.
\end{proof}

Note that Corollary \ref{cor:worstcaseprob} could of course also have been deduced from Proposition \ref{prop:PvsR}. This result yields additionally the same assertion for $\rho^{\mathcal{M}}$ which we state here for completeness. Note again that the generalized equivalence property is not required.

\begin{corollary}  \label{cor:worstcaseprob1}
Let $\cM \subseteq \mathcal{M}_1$ and let $\mathbb{P} \in \mathcal{M}$ a probability measure on $(\Omega, \mathcal{F})$ with respect to which we consider law-invariance as in Section \ref{sec:comparisionlawinv}. Fix a penalty function $\alpha: \mathcal{M}_1 \to \mathbb{R} \cup \{+\infty\}$ as in \eqref{eq:alphacanonical1} with respect to $\mathbb{P}$. Then for all $X \in \mathcal{X}^{\mathcal{M}} \cap L^{\infty}(\PP)$ we have
	$$
		\widehat{\rho}^\mathbb{\mathcal{M}}(X)=\widehat{\rho}^\mathbb{P}(X)= \rho^\mathbb{P}(X)=\rho^\mathbb{\mathcal{M}}(X).
		$$
\end{corollary}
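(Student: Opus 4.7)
The plan is to derive Corollary \ref{cor:worstcaseprob1} as an almost immediate consequence of the preceding Corollary \ref{cor:worstcaseprob} and the ordering established in Proposition \ref{prop:PvsR}. Corollary \ref{cor:worstcaseprob} already delivers the chain $\widehat{\rho}^{\cM}(X) = \widehat{\rho}^{\PP}(X) = \rho^{\PP}(X)$ on $\mathcal{X}^{\cM} \cap L^{\infty}(\PP)$, so the only missing link is the identity $\rho^{\PP}(X) = \rho^{\cM}(X)$.

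For the inequality $\rho^{\cM}(X) \ge \rho^{\PP}(X)$ I would simply invoke $\PP \in \cM$, so $\rho^{\PP}(X)$ appears as one of the terms in the supremum $\rho^{\cM}(X) = \sup_{P \in \cM}\rho^{P}(X)$. For the reverse inequality, I would apply the last statement of Proposition \ref{prop:PvsR}, which asserts that for every $P \in \cM_{1}$ the ordering $\rho^{P} \le \rho^{\PP}$ holds on $L^{\infty}(P) \cap L^{\infty}(\PP)$. Since $X \in \mathcal{X}^{\cM} \cap L^{\infty}(\PP) = \bigl(\bigcap_{P\in\cM}L^{\infty}(P)\bigr)\cap L^{\infty}(\PP)$, this $X$ sits in $L^{\infty}(P) \cap L^{\infty}(\PP)$ for every $P \in \cM$, so $\rho^{P}(X) \le \rho^{\PP}(X)$ for each such $P$. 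Taking the supremum over $P \in \cM$ yields $\rho^{\cM}(X) \le \rho^{\PP}(X)$, hence equality.

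Combining this with Corollary \ref{cor:worstcaseprob} gives the full chain $\widehat{\rho}^{\cM}(X) = \widehat{\rho}^{\PP}(X) = \rho^{\PP}(X) = \rho^{\cM}(X)$. The only technical point to watch is that the domain condition $X \in \mathcal{X}^{\cM} \cap L^{\infty}(\PP)$ is precisely what guarantees that $X$ lies in the intersection $L^{\infty}(P) \cap L^{\infty}(\PP)$ required to apply the ordering from Proposition \ref{prop:PvsR} uniformly in $P \in \cM$; no domination or generalized equivalence between $\PP$ and $\cM$ is needed beyond the membership $\PP \in \cM$. There is no genuine obstacle here — the real work was done in Proposition \ref{prop:PvsR}, where the `worst-case' role of $\PP$ was built into the definition of $\rho^{P}$ through the law-invariant structure of the penalty function in \eqref{eq:alphacanonical1}.
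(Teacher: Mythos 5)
Your proposal is correct and follows essentially the same route as the paper, whose proof is a one-line appeal to the last statement of Proposition \ref{prop:PvsR}: you simply spell out the two inequalities (membership $\PP \in \cM$ for the lower bound, the ordering $\rho^{P} \le \rho^{\PP}$ on $L^{\infty}(P) \cap L^{\infty}(\PP)$ for the upper bound) and reuse Corollary \ref{cor:worstcaseprob}, which the paper itself notes is also a consequence of Proposition \ref{prop:PvsR}. Your observation that no generalized equivalence is needed matches the remark preceding the corollary.
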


\begin{proof}
This is a consequence of the last statement of  Proposition \ref{prop:PvsR}.
\end{proof}

\begin{remark}
	\begin{enumerate}
		\item Let us stress the significance of the above results. Whenever $\mathbb{P}$ and $\mathcal{M}$ are generalized equivalent and $\mathcal{M}$ is closed under countable convex combinations, then the robust risk measure for a given $X\in
		L^{\infty}(\P)$ is nothing else than the classical risk measure for the
		probability $\P$. So the current robust theory, i.e., the theory under model
		uncertainty, just
		reduces to the classical theory. A particular case of this situation is
		if $\cM\subset\cM_{1}$ is countable, see Example~\ref{count} below.
		\item In the setup of Section \ref{sec:comparisionlawinv}  the probability measure with respect to which we consider law-invariance  is the natural candidate to achieve a classical representation \emph{without domination property and convexity assumptions} provided that it is contained in the set $\mathcal{M}$.

		\item Recall the space of random variables defined in \eqref{X_tilde}.
		Concerning the assumption $X\in L^{\infty}(\P)$, this cannot be
		replaced without additional considerations by $X\in\mathcal{X}^{\mathcal{M}}$, as we see in
		Example~\ref{tXversusP} below.
		This example shows that $X\in \mathcal{X}^{\mathcal{M}}$ does not imply $X\in L^{\infty}(\P)$
		not even in the case where $\cM$ is countable. It only holds for $\cM$
		finitely generated.
		
	\end{enumerate}
\end{remark}

\begin{example}\label{count}  Let $\Theta$ be a countable parameter space i.e., $\Theta=\{\theta_i, i\in\mathbb{N}\}$ and $\cM^{\Theta}:=\{P^{\theta} : \theta \in \Theta\}$ a collection of probability measures. Consider as the
	dominating measure
	$\P$ any countable convex combination that gives strictly positive weight to	all $\theta_i$, e.g.
	\begin{equation*}
		\P(A):=\sum_{i=1}^{+\infty}2^{-i}P^{\theta_i}(A), \qquad \forall
		A\in\mathcal{F}.
	\end{equation*}
	Then for all $i\in\NN$ we have that $P^{\theta_i}\ll \P$. Furthermore if $P^{\theta_i}(A)=0$, for all $i\in\NN$, then $\P(A)=0$, which implies $\P\ll\mathcal{M}^{\Theta}$.
\end{example}

\begin{example}\label{tXversusP}
	Let $(\Omega,\mathcal{F})=([0,1],\mathcal{B}([0,1]))$ and let $\cM=\{P^n:
	n\in\mathbb{N}\cup\{0\}\}$ where
	\begin{equation*}
		\frac{\mathrm{d}P^n}{\mathrm{d}\Lambda}(\omega)=2^{n+1}\ind_{(2^{-(n+1)},
			2^{-n}]}(\omega),
	\end{equation*}
	for all $\omega\in[0,1]$ where $\Lambda$ here denotes the Lebesgue measure. Define
	\begin{equation*}
		\P(A):=\sum_{n=0}^{+\infty}2^{-(n+1)}P^n(A), \qquad \forall
		A\in\mathcal{B}[0,1].
	\end{equation*}
	Then $\frac{\mathrm{d}\P}{\mathrm{d}\Lambda}=\ind_{[0,1]}$
	$\Lambda$--a.s., hence $\P$ can be identified with the Lebesgue-measure.
	Let
	$X(\omega)=\frac1{\omega}$. We have that, for all $n$, $P^n(X\leq
	2^{n+1})=1$, hence $X\in \mathcal{X}^{\mathcal{M}}=\bigcap_{n\ge0}L^{+\infty}(P^n)$. But
	on the other hand $X\notin
	L^{\infty}(\P)$.
\end{example}
\section{Mixture probability measure}\label{sec:mixture}

We now 
include model uncertainty assuming that we
are given a family of probability measures $\mathcal{M}^{\Theta}= \{P^\theta: \theta \in \Theta\}$ defined on the same measurable space  $(\Omega, \mathcal{F})$ and where $\Theta$
is a parameter space. We assume that, for all $A\in\mathcal{F}$,  $\theta\mapsto
P^{\theta}(A)$ is a
measurable map from $(\Theta, \mathcal{D})\to([0,1], \mathcal{B}([0,1]))$, where
$\mathcal{B}([0,1])$ are the Borel sets on $[0,1]$ and $\mathcal{D}$ is a
$\sigma$-algebra on
$\Theta$. We assume that
we have a priori subjective beliefs in the degree of validity of the possible
measures $P^{\theta}$, $\theta\in\Theta$. This means we are given a probability measure $ \nu $ on $ \Theta $. We understand uncertainty here from a Bayesian
viewpoint, i.e., we average different measures with respect to a given prior
distribution $\nu$.
\subsection{Definition and properties}

\begin{definition}[Mixture probability measure]\label{mixturemeas}
	Let $\P$ on $(\Omega,\mathcal{F})$ be the mixture probability measure defined as
	$$\P(A):=\int_{\Theta}P^{\theta}(A)\nu(\mathrm{d}\theta), \qquad \forall A \in \mathcal{F}.$$
\end{definition}

\begin{remark}\label{quasidom}
	We have the following property for all measures $P^{\theta}$, $\theta\in\Theta$, which can be seen as a weak form of absolute continuity with respect to $\P$: for each $A\in\mathcal{F}$ with $\P(A)=0$ there exists a subset $\Theta^A\subset \Theta$ with $\nu(\Theta^A)=1$ such that $P^{\theta}(A)=0$ for all $\theta\in\Theta^A$.
\end{remark}
Suppose we
are now given two initial believes on $(\Theta,\mathcal{D})$, i.e. two
different prior probability measures $\nu$, $\mu$. We introduce the respective mixture probability measures according to Definition
\ref{mixturemeas} for both the priors as follows
\begin{equation}\label{newmixtures}
	\PP_{\nu}(A):=\int_{\Theta}P^{\theta}(A)\nu(\mathrm{d}\theta), \qquad
	\PP_{\mu}(A):= \int_{\Theta}P^{\theta}(A)\mu(\mathrm{d}\theta),
\end{equation}
for all $A\in\mathcal{F}$.
\begin{lemma}\label{Lemma_changing_priors}
	Let $\mu$ and $\nu$ be two probability measures defined on the parameter space $(\Theta,\mathcal{D})$. Then,
	\begin{equation*}
		\mu \ll \nu \implies \PP_{\mu}\ll\PP_{\nu},
	\end{equation*}
	where $\PP_{\mu}$, $\PP_{\nu}$ are defined as in \eqref{newmixtures} and the last inequality holds for every $X\in\mathcal{X}$.
\end{lemma}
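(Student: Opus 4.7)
The plan is to fix an arbitrary $A \in \mathcal{F}$ with $\mathbb{P}_\nu(A) = 0$ and deduce $\mathbb{P}_\mu(A) = 0$. The strategy reduces to transporting the $\nu$-negligibility of the ``bad'' parameter set through $\mu$ via the hypothesis $\mu \ll \nu$.

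First, I would invoke the measurability assumption made at the start of Section~\ref{sec:mixture}: the map $\theta \mapsto P^\theta(A)$ is $\mathcal{D}$-measurable and nonnegative, taking values in $[0,1]$. Combined with
$$0 = \mathbb{P}_\nu(A) = \int_\Theta P^\theta(A)\,\nu(\mathrm{d}\theta),$$
the standard fact that a nonnegative measurable function with zero integral vanishes almost everywhere yields $P^\theta(A) = 0$ for $\nu$-a.e.\ $\theta$. Formally, set $N := \{\theta \in \Theta : P^\theta(A) > 0\} \in \mathcal{D}$; then $\nu(N) = 0$.

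Second, I would apply the hypothesis $\mu \ll \nu$ directly to $N$ to obtain $\mu(N) = 0$. Hence $P^\theta(A) = 0$ also holds $\mu$-a.e., so
$$\mathbb{P}_\mu(A) = \int_\Theta P^\theta(A)\,\mu(\mathrm{d}\theta) = \int_{\Theta \setminus N} P^\theta(A)\,\mu(\mathrm{d}\theta) = 0,$$
which is exactly the claim $\mathbb{P}_\mu \ll \mathbb{P}_\nu$.

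There is essentially no obstacle here: the proof is a single unwinding of the definition of the mixture measure combined with the elementary fact that absolute continuity preserves null sets, and with the observation that a nonnegative integrand with zero integral vanishes almost everywhere. The only point requiring any care is the measurability of the set $N$, which is immediate from the blanket measurability assumption on $\theta \mapsto P^\theta(A)$. (The trailing clause in the lemma statement referring to $X \in \mathcal{X}$ appears to be a typo and does not need to be addressed, since the assertion itself involves no random variable.)
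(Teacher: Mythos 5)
Your proof is correct and follows essentially the same route as the paper: the set $N$ you introduce is just the complement of the set $\Theta^A$ from Remark~\ref{quasidom} that the paper's proof invokes, and both arguments then use $\mu\ll\nu$ to transfer the $\nu$-nullity of $N$ to $\mu$ and conclude $\mathbb{P}_\mu(A)=0$. Your observation about the trailing clause referring to $X\in\mathcal{X}$ is also accurate; it plays no role in the statement or its proof.
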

\begin{proof}
	If $\mu\ll\nu$, i.e. $\nu(D)=1$ implies $\mu(D)=1$
	for all $D\in\mathcal{D}$, then for any fixed $A\in\mathcal{F}$, if
	$\PP_{\nu}(A)=0$ there exists $\Theta^{A}\subset\Theta$ with
	$\nu(\Theta^{A})=1$ and $P^{\theta}(A)=0$ for all $\theta\in\Theta^{A}$. Hence
	\begin{equation*}
		\PP_{\mu}(A)=\int_{\Theta}P^{\theta}(A)\mu(\mathrm{d}\theta)=\int_{\Theta^{A}}P^{\theta}(A)\mu(\mathrm{d}\theta)=0,
	\end{equation*}
	which yields $\PP_{\mu}\ll\PP_{\nu}$.
\end{proof}
 
Clearly under the assumption $\mu\ll\nu$, we also have that $C^{\PP_{\nu}}\subseteq C^{\PP_{\mu}}$. Moreover if $\mu\sim\nu$ then by the previous result
$\PP_{\mu}\sim\PP_{\nu}$.
\begin{remark}
	Notice that by Lemma \ref{Lemma_changing_priors} and Remarks \ref{abs:fixset}, \ref{absconsP} we have that
	\begin{equation*}
		\rho^{\PP_{\mu}}(X)\le\rho^{\PP_{\nu}}(X), \  \quad\ \widehat{\rho}^{\PP_{\mu}}(X)\le\widehat{\rho}^{\PP_{\nu}}(X),
	\end{equation*}
	for all $X\in L^{\infty}(\PP_{\nu})$.
\end{remark}

\subsection{Classical representation for the mixture probability measure}

As in Section \ref{sec:classical} we discuss now the classical representation of robust risk measures for  $\cM^{\Theta}=\{P^{\theta} : \theta\in\Theta\}$, with respect to the mixture probability measure.

In order to prove similar results  in the case where $\mathcal{M}^{\Theta}=\{P^{\theta} : \theta\in\Theta\}$ is not supposed to be dominated a priori, we need some additional assumptions. Let the parameter set be a subset of a Polish space with $\mathcal{D}=\mathcal{B}(\Theta)$ the sigma-algebra of the Borel sets, and let $d(\cdot,\cdot)$ denote a distance on the Polish space. Our starting point builds on the (rather strong) assumption of continuity in total variation of the map $\theta\mapsto P^{\theta}$, which will  in turn imply that $\mathcal{M}^{\Theta}$ can be dominated, however not necessarily by the mixture probability measure. Indeed, the mixture probability measure will only serve as  dominating measure for a subset of  $\cM^{\Theta}$ induced from a subset of $\Theta $ that has $\nu$-measure $1$.
We then show by means of a specific example (see Example \ref{ex:coountercont}) that exactly this property  can also be obtained \emph{without}
the continuity assumption and \emph{without} the existence of a dominating measure for the whole set, implying that these are not
necessary conditions to obtain classical representations with respect to the mixture probability measure as of Theorem \ref{th:mainmixture}. 

Moreover, when specializing the setup to the case of law-invariance and considering the mixture probability measure $\mathbb{P}$ as the measure with respect to which  the penalty function  in \eqref{eq:alphacanonical1} is defined, we can prove a classical representation  with respect to $\mathbb{P}$ under mild conditions, see Corollary \ref{cor:final}.

Let us now start with the announced continuity assumption:
\begin{assumption}\label{asstvcon}
	We assume that the map $\theta\mapsto P^{\theta}$ is continuous from $\Theta\to\mathcal{M}_1$ with respect to the total variation norm, this means that for each $\ep>0$ there exists $\delta>0$ such that for $d(\theta,\theta')<\delta$ we have that
	$$
	\sup_{A\in\mathcal{F}}|P^{\theta}(A)-P^{\theta'}(A)|<\ep.
	$$
\end{assumption}

Recall Definition~\ref{mixturemeas} for the mixture probability measure: for each $A\in\mathcal{F}$
$$\P(A)=\int_{\Theta}P^{\theta}(A)\mathrm{d}\nu(\theta).$$

By Remark~\ref{quasidom} we know that for each $A\in\mathcal{F}$ with $\P(A)=0$ there exists $\Theta^A$ with $\nu(\Theta^A)=1$ such that $P^{\theta}(A)=0$ for all $\theta\in\Theta^A$. We would like to have the property that $P^{\theta}\ll \P$ for $\nu$-almost all $\theta$. The following example shows a phenomenon that we avoid by the previous continuity assumption.
\begin{example}
	Let $(\Omega,\mathcal{F})=([0,1], \mathcal{B}([0,1]))$.  Let $\Theta=[0,1]$ and define for each $\theta\in\Theta$ the measure $P^{\theta}=\delta_{\{\theta\}}$ (i.e. the Dirac measure in the point $\theta\in\Theta$). Define the set $c\mathcal{M}$ as the convex hull (with respect to even countable convex combinations) of these Dirac measures, i.e. $$c\mathcal{M}=\left\{\sum_{i=1}^{\infty}\alpha_i\delta_{\{\theta_i\}} : \text{  $0\leq\alpha_i\leq1$, \ with \ $\sum_{i=1}^{\infty}\alpha_i=1$\ and \ $\theta_i\in[0,1]$ for all $i\geq 1$}\right\}.$$ That means $c\cM=\overline{conv}(\cM^{\Theta})$. Let $\nu$ be the Lebesgue measure on $[0,1]$.  Then it holds that
	$$\P(A)=\int_{\theta\in[0,1]}P^{\theta}(A)\mathrm{d}\nu(\theta)=\int_{\theta\in[0,1]}\ind_{\{\theta\in A\}}\mathrm{d}\nu(\theta)=\nu(A).$$
	By Remark~\ref{quasidom} it holds that for each $A\in\mathcal{F}$ there exists $\Theta^A$ such that $\nu(\Theta^A)=1$ and $P^{\theta}(A)=0$ for all $\theta\in\Theta^A$. But each Dirac measure $P^{\theta}$ is singular to the Lebesgue measure and the same holds for each countable convex combination of Dirac measures. Obviously, $\theta\mapsto P^{\theta}=\delta_{\{\theta\}}$ is not continuous in $\theta$. Indeed, for each $\theta\in[0,1]$, we can find $\theta'\in[0,1]$ arbitrarily close to $\theta$ and $A\in\mathcal{F}$ such that $\theta\in A$, $\theta'\not\in A$ and hence $P^{\theta}(A)=1$ and $P^{\theta'}(A)=0$.
\end{example}

Note that in the previous example all measures $P^{\theta}$ are mutually singular. A similar phenomenon occurs for instance in the context of volatility uncertainty.

\begin{example}
Consider  stock price models with $\mathrm{d}S_t=S_t \theta \mathrm{d}W_t$, where $W$ is a standard Brownian motion and $\theta$ in $[0,1]$. Its laws denoted by $P^{\theta}$ are all mutually singular and thus Assumption \ref{asstvcon} cannot be satisfied.
\end{example}

Let us now show that there is a simple choice of a dominating measure for the set $\mathcal{M}^{\Theta}$.
\begin{lemma}\label{firstcountableprior}
Let $\Theta$ be a subset of a Polish space.
	Under Assumption~\ref{asstvcon} there exists a dominating measure $\mathbb{P}$ which is a countable convex combination of measures $P^{\theta}$, $\theta\in\Theta$.
\end{lemma}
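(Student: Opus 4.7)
The plan is to construct $\mathbb{P}$ explicitly as a countable convex combination indexed by a countable dense subset of $\Theta$, and then verify domination using Assumption~\ref{asstvcon}. First, since $\Theta$ is a subset of a Polish space, it is separable, so I would choose a countable dense subset $\{\theta_n : n \in \mathbb{N}\} \subseteq \Theta$ with respect to the metric $d(\cdot,\cdot)$. Then I would define
\[
\mathbb{P} := \sum_{n=1}^{\infty} 2^{-n} P^{\theta_n},
\]
which is well-defined as a probability measure on $(\Omega,\mathcal{F})$ since the weights are nonnegative and sum to $1$, and by construction is a countable convex combination of elements of $\mathcal{M}^{\Theta}$.

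Next I would show $P^{\theta} \ll \mathbb{P}$ for every $\theta \in \Theta$. Fix $A \in \mathcal{F}$ with $\mathbb{P}(A) = 0$. Since each summand is nonnegative, this forces $P^{\theta_n}(A) = 0$ for all $n \in \mathbb{N}$. By density pick a subsequence $\theta_{n_k} \to \theta$. Assumption~\ref{asstvcon} yields
\[
|P^{\theta}(A) - P^{\theta_{n_k}}(A)| \leq \sup_{B \in \mathcal{F}} |P^{\theta}(B) - P^{\theta_{n_k}}(B)| \xrightarrow{k \to \infty} 0,
\]
and since $P^{\theta_{n_k}}(A) = 0$ for every $k$, the limit gives $P^{\theta}(A) = 0$. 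As $A$ was arbitrary, $P^{\theta} \ll \mathbb{P}$, proving that $\mathbb{P}$ is a dominating measure for $\mathcal{M}^{\Theta}$.

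There is no serious obstacle here: the proof is essentially a standard separability-plus-continuity argument, and the only point to watch is that one uses the \emph{total variation} continuity (rather than just weak convergence), which is precisely strong enough to transfer the property $P^{\theta_{n_k}}(A) = 0$ to the limit uniformly in $A$. The claim about $\mathbb{P}$ being a countable convex combination is immediate from the definition.
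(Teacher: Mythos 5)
Your proof is correct and is essentially the same as the paper's: both pick a countable dense subset $\{\theta_n\}$ of $\Theta$, define $\mathbb{P}=\sum_n 2^{-n}P^{\theta_n}$, and use total-variation continuity from Assumption~\ref{asstvcon} to transfer $P^{\theta_n}(A)=0$ to $P^{\theta}(A)=0$ for arbitrary $\theta$. The only cosmetic difference is that you phrase the approximation via a convergent subsequence while the paper uses an $\varepsilon$--$\delta$ argument; the content is identical.
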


\begin{proof}
By assumption there exists a countable dense subset of $\Theta$ which we denote by $\{\theta_k, k\in\mathbb{N}\}$.  Define $\P=\sum_{k=1}^{\infty}2^{-k}P^{\theta_k}$. This measure is in $c{\mathcal{M}^{\Theta}}$. We will show that this measure dominates $\mathcal{M}^{\Theta}$. Indeed, let $\P(A)=0$. Then $P^{\theta_k}(A)=0$ for all $k\in\mathbb{N}$. Fix an arbitrary $\theta\in\Theta$, then for each $\ep>0$ there exists $k\in\mathbb{N}$ such that $\theta_k$ is close enough to $\theta$ such that $|P^{\theta_k}(A)-P^{\theta}(A)|<\ep$. This follows by the density of the subset $\{\theta_k, k\in\mathbb{N}\}$ and by Assumption~\ref{asstvcon}. Hence, for each $\ep>0$ we get that $P^{\theta}(A)<\ep$ and so $P^{\theta}(A)=0$. Clearly $\P$ is even generalized equivalent with respect to $\mathcal{M}^{\Theta}$.

\end{proof}

The above lemma gives a certain prior $\nu$ on $\Theta$. Indeed it is the measure that gives the weight $2^{-k}$ to the parameter $\theta_k$. The measure $\P$ in Lemma \ref{firstcountableprior} above is the mixture probability measure for this prior.

Our goal is now to show a similar result for any mixture probability measure induced by  an arbitrary prior $\nu$.
Indeed, we can generalize the above result and show that under Assumption~\ref{asstvcon} the set  $\mathcal{M}^{\Theta}$ is $\nu$-a.s. dominated by the mixture probability measure.

Let us now state a relevant consequence of  Assumption~\ref{asstvcon}. Indeed  under this hypothesis, we can find set of measures that has $\nu$-measure $1$ and that can be  dominated by the mixture probability measure $\PP$.

\begin{lemma}\label{nu_a.s._dom}
	Let $\Theta$ be a subset of a Polish space and define $\Theta^0=\{\theta\in\Theta: P^{\theta}\not\ll\P\}$.
	Under Assumption~\ref{asstvcon} the subset $\Theta^0\subseteq\Theta$ is Borel-measurable and satisfies $\nu(\Theta^0)=0$.
\end{lemma}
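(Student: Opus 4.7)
The plan is to exhibit a single explicit ``master'' $\P$-null set $B \in \mathcal{F}$ with the property that $P^\theta \not\ll \P$ if and only if $P^\theta(B) > 0$. Once such a $B$ is found, Borel-measurability of $\Theta^0$ is immediate from the assumption that $\theta \mapsto P^\theta(B)$ is measurable, and the $\nu$-nullity follows from the definition of the mixture, since $0 = \P(B) = \int_\Theta P^\theta(B)\,\nu(\mathrm{d}\theta)$ forces $P^\theta(B) = 0$ for $\nu$-almost every $\theta$.

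To construct $B$, I would first invoke Lemma~\ref{firstcountableprior}: under Assumption~\ref{asstvcon} there is a probability measure $\mu \in c\mathcal{M}^{\Theta}$ (a countable convex combination along a dense sequence $\{\theta_k\}$) such that $P^\theta \ll \mu$ for every $\theta \in \Theta$. Because $\P$ is a $\nu$-average of the $P^\theta$, this also gives $\P \ll \mu$. I would then set
\[
h := \frac{\mathrm{d}\P}{\mathrm{d}\mu}, \qquad B := \{h = 0\},
\]
so that $B \in \mathcal{F}$ and $\P(B) = \int_B h\,\mathrm{d}\mu = 0$.

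The core step is proving the equivalence $P^\theta \not\ll \P \iff P^\theta(B) > 0$. The ``if'' direction is immediate since $\P(B)=0$. For the ``only if'' direction, suppose $P^\theta \not\ll \P$, so that there exists $A \in \mathcal{F}$ with $\P(A) = 0$ and $P^\theta(A) > 0$. From $\int_A h\,\mathrm{d}\mu = \P(A) = 0$ and $h \geq 0$ one deduces $h = 0$ $\mu$-a.e.\ on $A$, i.e.\ $\mu(A \setminus B) = 0$. Since $P^\theta \ll \mu$, this forces $P^\theta(A \setminus B) = 0$, and hence $P^\theta(B) \geq P^\theta(A \cap B) = P^\theta(A) > 0$. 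This yields $\Theta^0 = \{\theta : P^\theta(B) > 0\}$.

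The only nontrivial obstacle is the existence of the dominating measure $\mu$, which is exactly what Assumption~\ref{asstvcon} buys us via Lemma~\ref{firstcountableprior}: separability of $\Theta$ provides a countable dense $\{\theta_k\}$, and total-variation continuity of $\theta \mapsto P^\theta$ upgrades pointwise domination on the dense subset to domination over all of $\Theta$. Everything else is a routine application of Radon--Nikodym and Fubini/Tonelli, and the measurability of $\theta \mapsto P^\theta(B)$ is part of the standing measurability hypothesis on the family $\mathcal{M}^{\Theta}$.
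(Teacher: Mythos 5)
Your proof is correct, but it takes a genuinely different route from the paper's. The paper argues directly: it writes $\Theta^0$ as the increasing union of the sets $\Theta^m=\{\theta: \exists A \text{ with } \P(A)=0,\ P^{\theta}(A)>2^{-m}\}$, uses total-variation continuity to show each $\Theta^m$ is \emph{open} (so $\Theta^0$ is in fact open, slightly more than the Borel-measurability claimed), and then derives $\nu(\Theta^0)=0$ by contradiction, gluing countably many witnessing null sets along a dense subset of $\Theta^{m_0}$ into a single $\P$-null set $A_0$ with $P^{\theta}(A_0)\geq \varepsilon/2$ on all of $\Theta^{m_0}$. You instead route everything through Lemma~\ref{firstcountableprior}: the dominating measure $\mu$ gives you $\P\ll\mu$, and the set $B=\{\mathrm{d}\P/\mathrm{d}\mu=0\}$ is a single ``universal'' $\P$-null set characterizing non-domination, i.e.\ $\Theta^0=\{\theta:P^{\theta}(B)>0\}$. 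Both directions of that equivalence check out (the ``only if'' direction correctly uses $P^{\theta}\ll\mu$ to discard $A\setminus B$), measurability then follows from the standing hypothesis that $\theta\mapsto P^{\theta}(A)$ is measurable, and $\nu$-nullity is immediate from $0=\P(B)=\int_{\Theta}P^{\theta}(B)\,\nu(\mathrm{d}\theta)$ with a nonnegative integrand. Your argument is shorter and isolates the role of Assumption~\ref{asstvcon} entirely inside Lemma~\ref{firstcountableprior}; what you give up is the topological information (openness of $\Theta^0$) that the paper's direct argument provides, and your proof is not self-contained in the way the paper's is, since it leans on the prior lemma. One cosmetic remark: no Fubini/Tonelli is actually needed, as $\P(B)=\int_{\Theta}P^{\theta}(B)\,\nu(\mathrm{d}\theta)$ is just the definition of the mixture measure.
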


\begin{proof}
	We assume without loss of generality $\Theta^{0}\neq\emptyset$. Indeed if $\Theta^{0}=\emptyset$, then it is trivially Borel-measurable and all $P^{\theta}\ll\PP$.
	\begin{enumerate}
		\item First we will show the measurability of $\Theta^{0}$.
		Define, for each $m\geq 1$,
		\begin{equation*}
			\Theta^m=\{\theta\in\Theta^0: \exists A\in\mathcal{F}\text{ with }\P(A)=0\text{ and }P^{\theta}(A)>2^{-m}\}.
		\end{equation*}
		Notice that since $\Theta^{0}\neq\emptyset$, then $\Theta^m\neq\emptyset$ for some $m\geq1$.
		It is clear that $\Theta^m\uparrow\Theta^0$. Moreover, each $\Theta^m$ is Borel-measurable. Indeed, fix $m\geq1$ and choose $\theta\in\Theta^m$. There exists $A$ with $\P(A)=0$ and $P^{\theta}(A)>2^{-m}$. Because of the strict inequality there exists $\ep>0$ such that $P^{\theta}(A)\geq2^{-m}+\ep$. Let us denote in the following with $B_{\delta}(\theta)$ the open ball with center $\theta\in\Theta$ and radius $\delta>0$. By Assumption~\ref{asstvcon} there exists $\delta>0$ small enough such that for all $\theta'\in B_{\delta}(\theta)$ we have that $|P^{\theta}(A)-P^{\theta'}(A)|<\frac{\ep}2$. Hence $P^{\theta'}(A)>2^{-m}+\ep-\frac{\ep}2>2^{-m}$ and each $P^{\theta'}\in\Theta^m$, i.e. $\Theta^m$ is open and therefore Borel-measurable. In particular we can conclude that $\Theta^{0}$ is open and thus Borel-measureable.
		
		\item We show now that $\nu(\Theta^{0})=0$.
		Suppose by contradiction that $\nu(\Theta^0)=\gamma>0$. By the above we have that $\nu(\Theta^0)=\lim_{m\to\infty}\nu(\Theta^m)$ and hence there exists $m_0\geq1$ such that $\nu(\Theta^{m_0})\geq\frac{\gamma}2$. Let $D$ be a dense countable subset of $\Theta$ and define $\tilde{\Theta}^{m_0}=\Theta^{m_0}\cap D$. Obviously this set is dense in $\Theta^{m_0}$ and countable. Let $(q_k)_{k\geq0}$ be any enumeration of this countable set, i.e.,
		$\tilde{\Theta}^{m_0}=\{q_k, k\geq1\}$. As $q_k\in\Theta^{m_0}$, $k\geq1$, we have that for each $k\geq1$ there exists $A_k\in\mathcal{F}$ such that $\P(A_k)=0$ and $P^{q_k}(A_k)>2^{-{m_0}}=:\epsilon$.
		Define
		$$A_0=\bigcup_{k=1}^\infty A_k\in\mathcal{F}.$$
		As $A_0$ is a countable union of $\P$-nullsets we still have that
		\begin{equation}\label{mixtnull}
			\P(A_0)=0.
		\end{equation}
		Choose an arbitrary $\theta\in\Theta^{m_0}$. By Assumption~\ref{asstvcon}, there exists $\delta>0$ such that for all $\theta'\in B_{\delta}(\theta)$ we have that
		\begin{equation}\label{uniformA}\sup_{A\in\mathcal{F}}|P^{\theta}(A)-P^{\theta'}(A)|<\frac{\ep}2.\end{equation}
		As  the sequence $(q_k)_{k\geq1}$ is dense there exists $k\ge1$ such that $q_{k}\in B_{\delta}(\theta)$. Since (\ref{uniformA}) holds for all $A\in\mathcal{F}$ we have that $P^{\theta}(A_k)\geq P^{q_k}(A_k)-\frac{\ep}2=\frac{\ep}2$.
		Hence $P^{\theta}(A_0)\geq P^{\theta}(A_k)\geq\frac{\epsilon}2$.
		As $\theta$ was arbitrary, we have that, for all $\theta\in\Theta^{m_0}$,
		$$P^{\theta}(A_0)\geq \frac{\epsilon}2.$$
		Let us now use the definition of the mixture probability measure to calculate $\P(A_0)$:

		$$
		\P(A_0) = \int_{\Theta}P^{\theta}(A_0)\mathrm{d}\nu(\theta)
		\geq\int_{\Theta^{m_0}}P^{\theta}(A_0)\mathrm{d}\nu(\theta)
		\geq \frac{\ep}2\nu(\Theta^{m_0})\geq\frac{\ep}2\cdot\frac{\gamma}2>0,
		$$
		which is a contradiction to (\ref{mixtnull}).
		
		Therefore we showed that $\nu(\Theta^0)=\nu(\{\theta\in\Theta: P^{\theta}\not\ll\P\})=0$.
	\end{enumerate}
\end{proof}

\begin{theorem}\label{th:mainmixture}
	Let $\Theta$ be a subset of a Polish space and let Assumption~\ref{asstvcon} hold.
Then there exists a Borel-measurable subset $\Theta^1\subset\Theta$ with $\nu(\Theta^1)=1$ such that, for all $\theta\in\Theta^1$, $P^{\theta}\ll\P$, where $\mathbb{P}$ denotes the mixture probability measure introduced in Definition \ref{mixturemeas}. Let $c\cM^1=\overline{\text{conv}}(\cM^{\Theta^1})$ where $\cM^{\Theta^1}=\{P^{\theta}: \theta\in\Theta^1\}$.
	Then, for all $X\in L^{\infty}(\P)$ it holds that
	\begin{equation*}
		\widehat{\rho}^{c\cM^1}(X):=\sup_{P \in c\cM^1} \widehat{\rho}^{P}(X)=\widehat{\rho}^{\P}(X).
	\end{equation*}
	and 
	\begin{equation*}
	\rho^{c\cM^1}(X):=\sup_{P \in c\cM^1} \rho^{P}(X)=\rho^{\P}(X).
\end{equation*}
\end{theorem}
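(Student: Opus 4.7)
The plan is to deduce the theorem almost directly from the machinery already assembled. First, I would set $\Theta^1 := \Theta \setminus \Theta^0$, where $\Theta^0 = \{\theta \in \Theta : P^\theta \not\ll \mathbb{P}\}$. Lemma~\ref{nu_a.s._dom}, which is where all the real work has been done, tells me that $\Theta^0$ is Borel-measurable with $\nu(\Theta^0)=0$, so $\Theta^1$ is Borel with $\nu(\Theta^1)=1$ and $P^\theta \ll \mathbb{P}$ for every $\theta \in \Theta^1$, establishing the first claim.

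Next, I would verify that $\mathbb{P}$ and $c\mathcal{M}^1$ are generalized equivalent in the sense of Definition~\ref{abscon}. The absolute continuity $c\mathcal{M}^1 \ll \mathbb{P}$ follows since any countable convex combination of measures in $\mathcal{M}^{\Theta^1}$ inherits absolute continuity with respect to $\mathbb{P}$ from its constituents. For the reverse direction, suppose $A \in \mathcal{F}$ satisfies $P(A)=0$ for all $P \in c\mathcal{M}^1$; in particular $P^\theta(A)=0$ for every $\theta \in \Theta^1$. Then
\[
\mathbb{P}(A) = \int_{\Theta^1} P^\theta(A)\,\nu(\mathrm{d}\theta) + \int_{\Theta^0} P^\theta(A)\,\nu(\mathrm{d}\theta) = 0,
\]
using $\nu(\Theta^0)=0$ for the second integral. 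Hence $\mathbb{P} \ll c\mathcal{M}^1$ as well.

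Now $c\mathcal{M}^1$ is by construction (see \eqref{cloconvM}) closed under countable convex combinations, so the hypotheses of Theorem~\ref{th:nonhat} are satisfied with $\mathcal{M}$ replaced by $c\mathcal{M}^1$. Applying that theorem yields $\rho^{c\mathcal{M}^1}(X) = \rho^{\mathbb{P}}(X)$ for every $X \in L^\infty(\mathbb{P})$. For the $\widehat{\rho}$ version I would invoke the remark immediately following Theorem~\ref{th:nonhat} (or equivalently apply Theorem~\ref{prop:criteria}(iii), where the dominating $P_0 \in c\mathcal{M}^1$ equivalent to $\mathbb{P}$ is furnished by Theorem~\ref{hs}), obtaining $\widehat{\rho}^{c\mathcal{M}^1}(X) = \widehat{\rho}^{\mathbb{P}}(X)$ on $L^\infty(\mathbb{P})$. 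Note that $L^\infty(\mathbb{P}) \subseteq \mathcal{X}^{c\mathcal{M}^1}$ since every $P \in c\mathcal{M}^1$ is dominated by $\mathbb{P}$, so the classical representations are meaningful on the claimed domain.

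There is no genuine obstacle here because the continuity Assumption~\ref{asstvcon} has been entirely absorbed by Lemma~\ref{nu_a.s._dom}; the remainder is bookkeeping to check that $c\mathcal{M}^1$ falls into the Halmos--Savage-type framework of Section~\ref{sec:classical}. The only point requiring a small check is the $\mathbb{P} \ll c\mathcal{M}^1$ direction, which crucially uses $\nu(\Theta^0)=0$ to discard the bad parameter set from the integral defining $\mathbb{P}$.
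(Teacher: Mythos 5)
Your proposal is correct and follows essentially the same route as the paper: define $\Theta^1$ as the complement of the bad set $\Theta^0$ from Lemma~\ref{nu_a.s._dom}, verify generalized equivalence of $\P$ and $c\cM^1$ (the direction $\P \ll c\cM^1$ using $\nu(\Theta^0)=0$ exactly as you do), and then invoke Theorem~\ref{th:nonhat} together with Theorem~\ref{prop:criteria} to obtain both classical representations. Your additional remark that $L^{\infty}(\P) \subseteq \mathcal{X}^{c\cM^1}$ is a welcome explicit check of the domain that the paper leaves implicit.
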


\begin{proof}
This is a simple consequence of Lemma~\ref{nu_a.s._dom},  Theorem~\ref{th:nonhat} and Theorem~\ref{prop:criteria}, because  $\P$ is generalized equivalent to $c\cM^1$. Indeed, as we restrict ourselves to $\Theta^1$, we know that $P^{\theta}\ll\P$ for all $\theta\in\Theta^1$ (which satisfies $\nu(\Theta^1)=1$ by Lemma~\ref{nu_a.s._dom}). Clearly, if $A$ is such that $P^{\theta}(A)=0$ for all $\theta\in\Theta^1$ then $\P(A)=0$, implying that also the second property of generalized equivalence is satisfied. Theorem~\ref{prop:criteria} and Theorem~\ref{th:nonhat} now give the existence of a probability measure $P^0\in c\cM^1$ with $\P\ll P^0$. Hence the statements follow.
\end{proof}

\begin{remark} \label{rem:dommeasures}
\begin{enumerate}
    \item[(i)] 
Note that for the measure $P^0 \in c\cM^1 $  as introduced in the proof of Theorem~\ref{th:nonhat} and Theorem \ref{prop:criteria}, it holds that $P^0 \sim \P$ and thus by Remarks~\ref{abs:fixset} and ~\ref{absconsP}
$$\widehat{\rho}^{P^0}(X)=\widehat{\rho}^{\P}(X), \quad \text{and} \quad \rho^{P^0}(X)=\rho^{\P}(X).$$
In other words, under Assumption~\ref{asstvcon} there is a  measure in $c\cM^1 $ equivalent to $\P$ yielding the same risk. This measure is a countable convex combination of measures with parameters in $\Theta^1$ with $\nu(\Theta^1)=1$. Comparing this to the first countable convex combination of Lemma~\ref{firstcountableprior} the difference here is that $P^0$ can be interpreted as a countable new prior that fits to our prior $\nu$. So, if we think of a dynamical procedure, where we improve our knowledge using data this goes into the prior $\nu$. With the corresponding $P^0$ we then find a tailor-made countable prior. As the set $\Theta^1$ might be smaller than $\Theta$ the risk measure with respect to $P^0$ might therefore yield a smaller risk than the first countable convex combination of Lemma~\ref{firstcountableprior}.
\item[(ii)] Note that the mixture probability measure $\P$ does not necessarily lie in $c\cM^1 $, see e.g. Example \ref{P_not_cM} below. Hence the measure $P^0$ we find can be understood as a measure with respect to a countable prior that gives the same risk. By a countable prior $\tilde{\nu}$ for $P^0$ we mean a probability measure of the form
$\tilde{\nu}=\sum_{k=1}^{\infty}\gamma^k\delta_{\theta^k}$ such that $\gamma^k=\tilde{\nu}(\theta^k)$ and $P^0=\sum_{k=1}^{\infty}\gamma^kP^{\theta^k}$.
\end{enumerate}
\end{remark}

The following example shows that Assumption~\ref{asstvcon} is only sufficient to the get the  assertion of Theorem \ref{th:mainmixture}. Indeed, in this example we still get that the mixture probability measure is  a dominating measure for $\nu$-almost all $\theta$ even though  Assumption~\ref{asstvcon} is not satisfied.

\begin{example}\label{ex:coountercont}
The following example satisfies

\begin{enumerate}
\item There does not exist a dominating measure for $\{P^{\theta}: \theta\in\Theta\}$.
\item There exists $\widetilde{\Theta}\subseteq\Theta$ such that $\nu(\widetilde{\Theta})=1$ and such that there exists a dominating measure $\P$ for  $\mathcal{M}^{\widetilde{\Theta}}:=\{P^{\theta}: \theta\in\widetilde{\Theta}\}$. This dominating measure is also generalized equivalent with respect to $\mathcal{M}^{\widetilde{\Theta}}$. In particular, $\P$ can be chosen to be the mixture probability measure.
\item Assumption~\ref{asstvcon} does not hold on.
\item There are uncountably many $P^{\theta}$ that are mutually singular.
\end{enumerate}

Hence we see that we can have a $\nu$-a.s. dominating measure also in the case that Assumption~\ref{asstvcon} is not satisfied. Moreover a closer look at the example shows that we can have uncountably many singular measures as long as the corresponding parameters are in a $\nu$-nullset $\Theta\setminus\widetilde{\Theta}$. Moreover, the example shows that if there are uncountably many $P^{\theta}$ that are mutually singular we cannot find a dominating measure \emph{ for all} $P^\theta$, $\theta\in\Theta$ (only on a set of $\nu$-measure 1).
\newline\newline
The construction is as follows. Let $(\Theta,\mathcal{D},\nu)$ be $([0,1],\mathcal{B}([0,1]), \Lambda)$ where $\Lambda$ is again the Lebesgue-measure on $[0,1]$. $F\subset[0,1]$ be the Cantor set. It is well-known that $F$ is uncountable and $\Lambda(F)=0$. Define the measures $P^{\theta}$, $\theta\in[0,1]$ as follows:

$$P^{\theta}=\begin{cases} \delta_{\{\theta\}}&\text{for $\theta\in F$}\\
R^{\theta}&\text{for $\theta\in [0,1]\setminus F$,}
\end{cases}
$$
where $\frac{\mathrm{d}R^{\theta}}{\mathrm{d}\Lambda}=\frac1{\theta}\ind_{[0,\theta]}$. Note that $0\in F$, hence the $R^{\theta}$ are well-defined. We define $\widetilde{\Theta}=[0,1]\setminus F$.  Obviously $\nu(\widetilde{\Theta})=\Lambda(\widetilde{\Theta})=1$. Let us check the properties.

Concerning (ii) an obvious dominating measure is the Lebesgue measure $\Lambda$ as all $R^{\theta}\ll \Lambda$. Observe that the mixture probability measure $\P$ is equivalent to $\Lambda$. Indeed since all $R^{\theta}\ll \Lambda$ and $\Lambda(F)=0$, then clearly $\PP \ll \Lambda$. Notice that for every $A\in\mathcal{F}$,
\begin{align*}
	\PP(A)=\int_{\Theta}P^{\theta}(A)\nu(\mathrm{d}\theta)=\int_{0}^{1}R^{\theta}(A)\mathrm{d}\theta&=\int_{0}^{1}\int_{A}\frac{1}{\theta}\ind_{[0,\theta]}(s)\mathrm{d}s\mathrm{d}\theta\\
	&=\int_{A}\int_{s}^{1}\frac{1}{\theta}\mathrm{d}\theta\mathrm{d}s\\
	&=\int_{A}-\ln(s)\mathrm{d}s,
\end{align*}
therefore if $\PP(A)=0$, since the integrand is positive we obtain that $\Lambda(A)=0$. Hence we can also use $\P$ as dominating measure. It is also straightforward to see that $\mathcal{M}^{\widetilde{\Theta}}\ll \Lambda$. Indeed, if $P^{\theta}(A)=0$ for all $\theta\in\widetilde{\Theta}$, then we have that $\Lambda(A\cap[0,\theta])=0$ for all $\theta\in\widetilde{\Theta}$. Because $\nu(\widetilde{\Theta})=1$ we get that $\Lambda(A)=0$ as we can choose $\theta\in \widetilde{\Theta}$ arbitrarily close to 1.

For (iii) let $\theta'\in F$ and $\theta\in[0,1]\setminus F$ such that $|\theta-\theta'|<\delta$ with $\delta$ small. If $\theta<\theta'$ we can choose $A=[0,\theta]$, then $P^{\theta}(A)=R^{\theta}(A)=1$ and $P^{\theta'}(A)=\delta_{\{\theta'\}}(A)=0$. If $\theta'<\theta$ we can choose $A=\{\theta'\}$ to see that $P^{\theta}(A)=0$ whereas $P^{\theta'}(A)=1$. This works for arbitrarily small $\delta$. If we choose $\theta$ and $\theta'$ both in $F$ then the mutual singularity of the measures gives a similar result. Therefore the function $\theta\mapsto P^{\theta}$ cannot be continuous in total variation.

(iv) is obvious as $F$ is uncountable.

Let us now give a proof of (i). Suppose there would exists a dominating measure $R$ for all $P^{\theta}$, $\theta\in\Theta$. In particular we would have $P^{\theta}\ll R$ for all $\theta\in F$. Suppose $A$ is a nullset of $R$, i.e. $R(A)=0$. Suppose $\theta\in A\cap F$. Clearly, $P^{\theta}(A)=\delta_{\theta}(A)=1$, which is a contradiction to $P^{\theta}(A)=0$. It follows that if $R(A)=0$ then $A\cap F=\emptyset$. This implies that for all $x\in F$ we have that $R(\{x\})>0$. Such a probability measure cannot exist. Indeed, for all $x\in F$ there exists $\ep>0$ such that $R(\{x\})\geq \ep$. Fix $k\geq 1$ and define $F_k:=\{x\in F: R(\{x\})\geq 2^{-k}\}$. As $R$ is a probability measure there cannot be more than $2^k$ points in $F_k$, i.e. $|F_k|\leq 2^k$. Define $\widetilde{F}=\bigcup_{k\geq1}F_k$. This set is at most countable as it is a countable union of finite sets. Hence $F\setminus\widetilde{F}\not=\emptyset$. Now, take any $x\in F\setminus\widetilde{F}$. We have that $R(\{x\})<2^{-k}$ for all $k\geq 1$ and it follows that $R(\{x\})=0$, a contradiction.
This proof can be reproduced whenever we have uncountably many singular measures.

\end{example}

As already indicated in Remark \ref{rem:dommeasures}, the following example shows that the mixture probability measure does not necessarily have to lie in $c\cM$.

\begin{example}\label{P_not_cM}

We consider the previous example without the Cantor set, i.e.,
$(\Theta,\mathcal{D},\nu)=([0,1],\mathcal{B}([0,1]), \Lambda)$ where $\Lambda$ is again the Lebesgue-measure on $[0,1]$.  Define the measures $P^{\theta}$, $\theta\in[0,1]$ as the measures $R^{\theta}$ from before, i.e.,
$\frac{\mathrm{d}P^{\theta}}{\mathrm{d}\Lambda}=\frac1{\theta}\ind_{[0,\theta]}$, but now for all $\theta\in(0,1]$. Let $\mathcal{M}=\{P^{\theta}: \theta\in(0,1]\}$. As before the Lebesgue measure $\Lambda$ is a dominating measures for $\mathcal{M}$. The set $c\cM$ consists again of all countable convex combination of measures in $\mathcal{M}$. As, trivially, for $\theta=1$, $P^1=\Lambda\in\mathcal{M}$ and as it is equivalent to itself there exists a measure in $c\cM$ that is equivalent to $\Lambda$.
The mixture probability measure $\P$ satisfies, as before, that
  $$\PP(A)=\int_{A}-\ln(s)\mathrm{d}s,$$
and so it is equivalent to $\Lambda$. But it is not an element of $c\cM$ as we will show now.

Suppose we could find a countable convex combination $R=\sum_{n=1}^{\infty}\alpha_nP^{\theta^n}$ such that $R=\P$. Then we would have that $R$ and $\P$ agree on all sets $[0,c)$ with $c\in(0,1)$. Observe that
$$\P([0,c))=\int_0^c-\ln(s))\mathrm{d}s=c(1-\ln(c))=:f(c),$$
where $f(c)$ is a strictly concave increasing differentiable function on $(0,1)$. On the other hand $R((0,c])$ can be expressed as follows
 
 $$R([0,c))=\sum_{n=1}^{\infty}\alpha_nP^{\theta^n}([0,c))=\sum_{n=1}^{\infty}\alpha_ng^n(c)=:g(c),$$
 
 where $g^n(c)=\frac{c\wedge\theta^n}{\theta^n}$. In order to show that the two probability measures are equal we would have to show that $f(c)=g(c)$ for all $c\in(0,1)$.
\newline\newline
We will get a contradiction. Indeed, we will find $c_0$ with $0<c_0<1$ such that $g$ is not differentiable in $c_0$. Then $f$ and $g$ cannot agree for all $c\in(0,1)$ as $f$ is differentiable. 
\newline\newline
If $R$ should equal $\P$ then there has to exist $n_0$ in the indices of the convex combination such that $\alpha_{n_0}>0$ for a $0<\theta_{n_0}<1$. Indeed, otherwise $R=\Lambda$ and this is not equal to $\P$. Let $c_0=\theta^{n_0}$. Moreover, not all the weights for the indices $n$ with $\theta^n>c_0$ can be zero. Otherwise the support of $R$ would be $[0,c_0]$, and then $R$ could not be equal to $\P$ which has full support $[0,1]$.  Hence it has to hold that $\sum_{n\geq1, \theta^n>c_0}\alpha^n=\delta$, for some $\delta>0$.
\newline\newline
Clearly, by definition of $c_0$, the function $g^{n_0}$ is not differentiable in $c_0$. Note that the right hand derivative of $g^{n_0}$ in $c_0$ is equal to $0$, whereas the left hand derivative is equal to $\frac1{\theta^{n_0}}=\frac1{c_0}$.
All other functions $g^{\theta_n}$, $n\geq1$, $n\ne n_0$, are differentiable in $c_0$, with 

$${g^{n}}'(c_0)=\begin{cases}0&\text{for those $n$ with $\theta^n<c_0$},\\
\frac1{\theta^n}&\text{for those $n$ with  $\theta^n>c_0$}.\end{cases}$$
Recall that $g(c)=\sum_{n=1}^{\infty}\alpha_ng^{n}(c)$. Then, we get:

$$\lim_{h\downarrow 0}\frac{g(c_0+h)-g(c_0)}{h} =\sum_{n,\theta^n>c_0}\alpha_n\frac1{\theta_n}=:A,$$
where $A$ is a constant with $0<A<\frac{\delta}{c_0}$.

On the other hand
$$\lim_{h\uparrow 0}\frac{g(c_0+h)-g(c_0)}{h} =\frac{\alpha_{n_0}}{\theta^{n_0}}+\sum_{n,\theta^n>c_0}\alpha_n\frac1{\theta_n}=\frac{\alpha_{n_0}}{c_0}+A,$$
which is different from $A$ as $\alpha_{n_0}>0$. Hence $g$ is not differentiable in $c_0$.

Note that the interchange of limits in the above calculations is justified for $h\downarrow0$ as well as $h\uparrow0$.
To show this, introduce an artificial probability space $\tilde{\Omega}=\mathbb{N}$ with the probability measure that gives, to each $n\geq1$, the weight $\tilde{P}(n)=\alpha_n$. Define, for $h$, a random variable $X_h$ on $\tilde{\Omega}$ as $X_h(n)=\frac{g^n(c_0+h)-g^n(c_0)}{h}$. Hence, it holds, for all $n\not= n_0$ that $\lim_{h\to0}X_h(n)={g^n}'(c_0)$. For $n_0$ we have that $\lim_{h\downarrow0}X_h(n_0)=0$ and $\lim_{h\uparrow 0}X_h(n_0)=\frac1{c_0}$. Define $X(n)={g^n}'(c_0)$ for $n\not= n_0$ and $X(n_0)=0$. Then, pointwise, $X_h\to X$ for $h\downarrow0$ and $X_h\to X+\frac1{c_0}\ind_{n_0}$ for $h\uparrow0$. Moreover, it is easy to see that $|X_h|\leq\frac1{c_0}$ for all $h>0$ and  $|X_h|\leq\frac2{c_0}$, for all $h<0$ small enough such that $c_0+h>\frac{c_0}2$. Then, we can interpret the above interchange of limits as applications of the dominated convergence theorem for the measure $\tilde{P}$. Indeed,
$$\lim_{h\downarrow0}\frac{g(c_0+h)-g(c_0)}{h}=\lim_{h\downarrow0}\E_{\tilde{P}}[X_h]=\E_{\tilde{P}}[X]=A,$$
with $A$ as above and
$$\lim_{h\uparrow0}\frac{g(c_0+h)-g(c_0)}{h}=\lim_{h\uparrow0}\E_{\tilde{P}}[X_h]=\E_{\tilde{P}}[X+\frac1{c_0}\ind_{n_0}]=A+
\frac{\alpha_{n_0}}{c_0}.$$

\end{example}

Finally, we specialize our setup to the one of convex law-invariant risk measures as of Section \ref{sec:comparisionlawinv} and consider the mixture probability measure $\mathbb{P}$ as the measure with respect to which  the penalty function  in \eqref{eq:alphacanonical1} is defined.   Just by requiring that $\mathcal{M}^{\Theta}$ is closed under countable convex combinations and that the mass of sets of positive measure under $\mathbb{P}$ do not only stem from singular parts (see Condition \ref{eq:condA} below), we obtain the following result.

\begin{corollary} \label{cor:final}
Let $\cM^{\Theta} \subseteq \mathcal{M}_1$ be a set of probability measures induced by a parameter space $\Theta$ and closed under countable convex combinations. Let $\mathbb{P} $
 denote the mixture probability measure as of Definition \ref{mixturemeas} and consider law-invariance as in Section \ref{sec:comparisionlawinv} with respect to it, i.e. fix a penalty function $\alpha: \mathcal{M}_1 \to \mathbb{R} \cup \{+\infty\}$ as in \eqref{eq:alphacanonical1} with respect to $\mathbb{P}$. Suppose additionally that for all $A \in \mathcal{F}$ with $\mathbb{P}(A) >0$
 \begin{align}\label{eq:condA}
 \mathbb{P}(A)> \int_{\Theta} \widetilde{P}^{\theta}_2 (A) \nu(\mathrm{d}\theta),
 \end{align}
 where $\widetilde{P}^{\theta}_2$ denotes  the (unnormalized) singular part of the Lebesgue decomposition of $P^{\theta}$ with respect to $\mathbb{P}$. 
 Then for all $X \in \mathcal{X}^{\mathcal{M}^{\Theta}} \cap L^{\infty}(\PP)$ we have
	$$
		\widehat{\rho}^\mathbb{\mathcal{M}}(X)=\widehat{\rho}^\mathbb{P}(X)= \rho^\mathbb{P}(X)=\rho^\mathbb{\mathcal{M}}(X).
		$$
\end{corollary}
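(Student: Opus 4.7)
The plan is to combine Proposition~\ref{prop:PvsR}, which already equates the two kinds of risk measures whenever one probability dominates another, with a Halmos--Savage type exhaustion argument that produces a single measure $P^{0}\in\cM^{\Theta}$ with $\mathbb{P}\ll P^{0}$. Condition~\eqref{eq:condA} together with closure of $\cM^{\Theta}$ under countable convex combinations is exactly what is needed to carry out this exhaustion, even though members of $\cM^{\Theta}$ are \emph{not} assumed to be absolutely continuous with respect to $\mathbb{P}$.

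I would first collect the easy upper bounds. Applying Proposition~\ref{prop:PvsR} with $P=\mathbb{P}$ gives $\widehat{\rho}^{\mathbb{P}}=\rho^{\mathbb{P}}$ on $L^{\infty}(\mathbb{P})$, since $\alpha$ has the law-invariant form \eqref{eq:alphacanonical1}. The last bullet of the same proposition yields, for every $P\in\cM^{\Theta}$, the chain $\widehat{\rho}^{P}\le \rho^{P}\le \rho^{\mathbb{P}}=\widehat{\rho}^{\mathbb{P}}$ on $L^{\infty}(P)\cap L^{\infty}(\mathbb{P})$. Taking the supremum over $P\in\cM^{\Theta}$ gives $\widehat{\rho}^{\cM}\le \widehat{\rho}^{\mathbb{P}}$ and $\rho^{\cM}\le \rho^{\mathbb{P}}$ on $\mathcal{X}^{\cM^{\Theta}}\cap L^{\infty}(\mathbb{P})$, which will be half of the desired equalities.

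The core of the argument is the construction of $P^{0}\in\cM^{\Theta}$ with $\mathbb{P}\ll P^{0}$. For each $P\in\cM^{\Theta}$, write $P=P_{\mathrm{ac}}+P_{s}$ for the Lebesgue decomposition with respect to $\mathbb{P}$ and set $A(P):=\{\mathrm{d}P_{\mathrm{ac}}/\mathrm{d}\mathbb{P}>0\}$. Put $\gamma:=\sup_{P\in\cM^{\Theta}}\mathbb{P}(A(P))$ and choose $(P_{n})\subset\cM^{\Theta}$ with $\mathbb{P}(A(P_{n}))\to \gamma$. Then $P^{*}:=\sum_{n\ge1}2^{-n}P_{n}$ lies in $\cM^{\Theta}$ by closure, and uniqueness of the Lebesgue decomposition gives $(P^{*})_{\mathrm{ac}}=\sum_{n\ge1}2^{-n}(P_{n})_{\mathrm{ac}}$, so $A(P^{*})\supseteq A(P_{n})$ modulo $\mathbb{P}$-nullsets for every $n$ and therefore $\mathbb{P}(A(P^{*}))=\gamma$. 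Suppose towards a contradiction $\gamma<1$ and set $A:=A(P^{*})^{c}$, so $\mathbb{P}(A)>0$. Using $\mathbb{P}(A)=\int_{\Theta}\bigl(\widetilde{P}^{\theta}_{1}(A)+\widetilde{P}^{\theta}_{2}(A)\bigr)\,\nu(\mathrm{d}\theta)$ and invoking Condition~\eqref{eq:condA} yields $\int_{\Theta}\widetilde{P}^{\theta}_{1}(A)\,\nu(\mathrm{d}\theta)>0$, hence there exists $\theta^{\star}\in\Theta$ with $\widetilde{P}^{\theta^{\star}}_{1}(A)>0$. Then $\tfrac12 P^{*}+\tfrac12 P^{\theta^{\star}}\in\cM^{\Theta}$ has absolutely continuous part $\tfrac12 (P^{*})_{\mathrm{ac}}+\tfrac12 \widetilde{P}^{\theta^{\star}}_{1}$, whose support strictly enlarges $A(P^{*})$ on a set of positive $\mathbb{P}$-measure, contradicting the maximality of $\gamma$. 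Hence $\gamma=1$; if $P^{*}(C)=0$ then $(P^{*})_{\mathrm{ac}}(C)=0$ forces $\mathbb{P}(C\cap A(P^{*}))=0$, and together with $\mathbb{P}(A(P^{*})^{c})=0$ this gives $\mathbb{P}(C)=0$, so $\mathbb{P}\ll P^{0}:=P^{*}$.

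With $P^{0}$ in hand, Proposition~\ref{prop:PvsR} applied in the case $\mathbb{P}\ll P^{0}$ yields $\rho^{P^{0}}=\widehat{\rho}^{P^{0}}=\rho^{\mathbb{P}}=\widehat{\rho}^{\mathbb{P}}$ on $L^{\infty}(P^{0})$, a space containing $\mathcal{X}^{\cM^{\Theta}}\cap L^{\infty}(\mathbb{P})$ as $P^{0}\in\cM^{\Theta}$. Taking suprema over $\cM^{\Theta}$ and combining with the previously established upper bounds closes the loop and delivers the four-fold equality. The main obstacle will be the exhaustion step: one must verify carefully that Condition~\eqref{eq:condA}, which only refers to the integral against $\nu$ of the \emph{unnormalized} absolutely continuous parts $\widetilde{P}^{\theta}_{1}$, suffices to locate a single parameter $\theta^{\star}$ whose absolutely continuous contribution genuinely enlarges the $\mathbb{P}$-support, and that the singular parts entering the countable convex combination $P^{*}$ do not spoil the final domination $\mathbb{P}\ll P^{0}$.
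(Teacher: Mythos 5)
Your proof is correct, and it reaches the conclusion by a slightly different route than the paper. The paper first replaces each $P^{\theta}$ by the \emph{normalized absolutely continuous part} $P^{\theta}_1$ of its Lebesgue decomposition with respect to $\mathbb{P}$, notes that $\rho^{P^{\theta}_1}=\rho^{P^{\theta}}$ (as visible from the proof of Proposition \ref{prop:PvsR}), and then applies the stated Halmos--Savage theorem (Theorem \ref{hs}) to the auxiliary dominated family $\mathcal{M}^{\Theta}_1=\{P^{\theta}_1 : \theta\in\Theta\}$; Condition \eqref{eq:condA} is used there exactly as you use it, namely to show that $P^{\theta}_1(A)=0$ for all $\theta$ forces $\mathbb{P}(A)=0$. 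This produces a $P_0\in\mathcal{M}^{\Theta}_1$ \emph{equivalent} to $\mathbb{P}$, and the sandwich $\rho^{P_0}\le\rho^{\mathcal{M}^{\Theta}_1}=\rho^{\mathcal{M}^{\Theta}}\le\rho^{\mathbb{P}}=\rho^{P_0}$ closes the argument. You instead run the Halmos--Savage exhaustion directly on the original, possibly non-dominated family $\mathcal{M}^{\Theta}$, using \eqref{eq:condA} at the exhaustion step to locate a parameter $\theta^{\star}$ whose absolutely continuous part charges the complement of the current support; this yields a genuine element $P^{0}\in\mathcal{M}^{\Theta}$ satisfying only $\mathbb{P}\ll P^{0}$ rather than equivalence, which suffices because Proposition \ref{prop:PvsR} covers precisely the case $\mathbb{P}\ll P$. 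What your route buys: you avoid the auxiliary family $\mathcal{M}^{\Theta}_1$ and the (slightly delicate, because of the renormalization of weights, and left unproved in the paper) claim that it inherits closure under countable convex combinations, and your dominating measure is an honest member of $\mathcal{M}^{\Theta}$. What the paper's route buys: it reuses Theorem \ref{hs} as a black box instead of re-proving the exhaustion. Both arguments rest on the same implicit measurability of $\theta\mapsto\widetilde{P}^{\theta}_1(A)$ and $\theta\mapsto\widetilde{P}^{\theta}_2(A)$, which is already built into the statement of Condition \eqref{eq:condA}, so this is not a gap specific to your proposal.
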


\begin{proof}
We start by proving $\rho^\mathbb{P}=\rho^\mathbb{\mathcal{M}}$.
Note that the only difference with respect to Corollary \ref{cor:worstcaseprob1} is that $\mathbb{P}$ does not need to lie in $\mathcal{M}^{\Theta}$. Instead of this, we  assumed that $\mathbb{P}$ is the mixture probability measure and that $\cM^{\Theta}$ is closed under countable convex combinations, the latter with the goal to apply Theorem \ref{hs}. To this end we need additionally a family of probability measures which is absolutely continuous with respect to $\mathbb{P}$. Consider therefore instead of $P^{\theta}$ always $P^{\theta}_1$ denoting  the normalized absolutely continuous part of the Lebesgue decomposition with respect to $\mathbb{P}$. Then, as visible from the proof of Proposition \ref{prop:PvsR} (see also Remark \ref{remark:orderetal} (iii))
\[
\rho^{P^{\theta}_1}=\rho^{P^{\theta}}
\]
and thus $\rho^{\mathcal{M}^{\Theta}}= \sup_{P^{\theta}_1} \rho^{P^{\theta}_1}$ on $\mathcal{X}^{\mathcal{M}^{\Theta}} $.
Hence is suffices to consider the family $ \mathcal{M}^{\Theta}_1=\{P^{\theta}_1  : \theta \in \Theta\}$, which inherits the property of being closed under countable convex combinations.
In order to apply Theorem \ref{hs} we need additionally that the condition $P^{\theta}_1 (A)=0$ for all  $P^{\theta}_1 \in \mathcal{M}^{\Theta}_1$ implies  $\mathbb{P}(A)=0$.
Denoting by $\widetilde{P}^{\theta}_i$ the unnormalized parts of the Lebesgue decomposition of $P^{\theta}$ with respect to $\mathbb{P}$, we have
\[
\mathbb{P}(A)= \int_{\Theta} ( \widetilde{P}^{\theta}_1(A)+ \widetilde{P}^{\theta}_2(A)) \nu(\mathrm{d} \theta)=\int_{\Theta}   \widetilde{P}^{\theta}_2(A) \nu(\mathrm{d} \theta),
\]
where the last equality holds 
if
$P^{\theta}_1 (A)=0$ for all  $P^{\theta}_1 \in \mathcal{M}^{\Theta}_1$. 
Therefore, Condition \ref{eq:condA} yields $\mathbb{P}(A)=0$.
Theorem \ref{hs} thus implies that there exists some $P_0 \in \mathcal{M}^{\Theta}_1$ which is equivalent  to $\mathbb{P}$ and hence $\rho^{\mathbb{P}}= \rho^{P_0}$.
Moreover, using $P_0 \in \mathcal{M}^{\Theta}_1$ and Proposition  \ref{prop:PvsR}, we get
\[
 \rho^{P_0} \leq \rho^{\mathcal{M}^{\Theta}_1}=\rho^{\mathcal{M}^{\Theta}} \leq \rho^{\mathbb{P}}.
\]
As $\rho^{\mathbb{P}}= \rho^{P_0}$, all inequalities reduce to equalities. The same reasoning holds for $\widehat{\rho}$ and we can thus conclude since $\rho^{\mathbb{P}}= \widehat{\rho}^{\mathbb{P}} $ again by Proposition \ref{prop:PvsR}.
\end{proof}


\end{document}